\documentclass[11pt,a4paper,final]{article}
\usepackage[left=1in,right=1in,top=1in,bottom=1in]{geometry}
\usepackage{amsmath,amssymb,amsthm}
\usepackage[numbers]{natbib}
\usepackage{xcolor}
\usepackage{bm}
\usepackage{eucal}
\usepackage{tabularx}
\usepackage{hyperref}
\usepackage{cleveref}
\usepackage[ruled]{algorithm2e} 
\usepackage{authblk}
\usepackage{graphicx}
\usepackage{multirow}
\usepackage{enumitem}
\usepackage{thm-restate}
\usepackage{wrapfig}
\usepackage{subcaption}

\usepackage[suppress]{color-edits}
\addauthor{kh}{red}
\addauthor{sp}{blue}
\addauthor{jw}{purple}
\addauthor{TODO}{magenta}

\DeclareMathOperator*{\argmin}{argmin}

\newcommand{\cL}{\mathcal{L}}

\newcommand{\cS}{\mathcal{S}}
\newcommand{\cD}{\mathcal{D}}
\newcommand{\alg}{\mathrm{alg}}

\renewcommand \vec [1]{\bm{#1}}
\newcommand{\norm}[1]{\lVert#1\rVert}

\newcommand{\core}{\mathsf{core}}

\renewcommand{\epsilon}{\varepsilon}

\newtheorem{theorem}{Theorem}[section]

\newtheorem{corollary}[theorem]{Corollary}

\newtheorem{example}[theorem]{Example}
\newtheorem{definition}[theorem]{Definition}

\begin{document}

\title{In-Context Credit Assignment via the Core}

\author[1]{Keegan Harris}
\author[2]{Siddharth Prasad}
\author[3]{Asher Trockman}

\affil[1]{UC Berkeley, \texttt{keegan.harris@berkeley.edu}}
\affil[2]{Toyota Technological Institute at Chicago, \texttt{sprasad@ttic.edu}}
\affil[3]{Google Research, \texttt{trockman@google.com}}

\date{}

\maketitle
\vspace{-0.4in}

\begin{abstract}
    We propose incentive-aligned mechanisms for \emph{in-context credit assignment}: the task of assigning credit for AI-generated content (e.g. code, news articles, short-form videos) among creators whose intellectual property appears in the context window. 
Our approach is based on the {\em least core} solution concept from cooperative game theory, which distributes value in a way that is as stable as possible by ensuring that no subset of creators is significantly under-compensated relative to the value they could generate on their own.
We develop algorithms for approximating the least core, which leverage novel routines for constraint seeding and constraint separation.
On a web retrieval credit assignment task, we find that our approaches are capable of approximating the least core using orders of magnitude fewer LLM calls compared to alternative methods.\looseness-1




\end{abstract}

\section{Introduction}
The creator economy is currently undergoing a massive transformation as a result of the widespread adoption of generative AI tools. 
Across domains such as text, images, and video, AI tools have enabled cheaper and faster production of content, in ways which have not previously been possible(e.g.~\cite{anderson2025making, kanna2024much}). 
In many use cases, 
this process involves synthesizing creators’ intellectual property (IP) that appears in the model's \emph{context window} into a final user-facing content item. 

For example, suppose a user queries a language model which uses a web search tool to aggregate and summarize information across many websites (e.g. code repositories, news articles) when generating its response to the query. 
As another example, suppose a user prompts a text-to-video model to generate a video featuring the likenesses of several individuals (e.g. celebrities or fictional characters). 
%
%
In each of these settings, the resulting content typically yields some sort of value for the end user and/or AI platform (measured in terms of revenue, user engagement, or otherwise). 
The goal of this paper is to design mechanisms that distribute credit for this value across the creators whose IP appears in-context, in a way that respects the incentives of the creators. 
Concretely, we aim to answer the following question:\looseness-1

\vspace{-4mm}

{\centerline \em \noindent{\bf Q:} \emph{What is an economically principled and computationally tractable way to allocate credit for AI-generated content among a set of creators whose IP appears in the model's context window?}\looseness-1}

Our first conceptual contribution is to model the in-context credit assignment problem as a \emph{coalitional game}, where creators work together to generate some value. 
In this setting, there is an unknown ground-truth reward function that determines the (counterfactual) value that would have been generated by any coalition of the creators. 
Rewards can be binary (e.g. ``Do the aggregated web pages contain enough information to answer the user's query?'') or continuous (e.g. ``What is the user engagement that would have been generated if only a subset of likenesses were used in the video?''). 
Instead of assuming access to the ground-truth reward function, we posit access to an {\em estimated} reward function (e.g., via a learned reward model or LLM-as-a-Judge). 
Our solution concept is the {\em least core}~\cite{gillies1959solutions}: the set of reward allocations that compensates every coalition of creators by minimizing the largest \emph{reward deficit} (i.e. amount of under-compensation relative to the counterfactual value generated) across all coalitions.\looseness-1 

%

We start by investigating the sensitivity of the least core to differences between the estimated and ground-truth reward functions.
We show that the sensitivity of the least-core deficit can be bounded in terms of reward errors weighted by a {\em balanced} distribution (i.e. the marginal membership probability for each {\em player} is the same) over coalitions. Our result can be viewed as a quantitative analog of the celebrated {\em Bondareva-Shapley theorem} from cooperative game theory~\citep{shapley1967balanced}.\looseness-1
%
%
When rewards are binary, we provide a negative result: an estimated reward function with exponentially-few misclassifications can yield a high deficit even when the ground-truth deficit is zero. Nonetheless, we show that sensitivity of the deficit is bounded by the worst-case probability of misclassification over a balanced distribution.

Next we develop general-purpose algorithms for finding allocations in the estimated least core. 
While the problem of computing the (estimated) least core can be formulated as a linear program, the number of constraints is exponential and so it cannot be solved directly for more than a modest number of creators. 
%
%
As such, we develop iterative separation-oracle based techniques.\footnote{A separation oracle is a function which certifies whether a particular point is in a particular convex set, and returns a separating hyperplane if it is not.}
%
We show that the ellipsoid method instantiated with a sampling-based separation oracle for finding under-compensated coalitions finds, with high probability, an approximate least-core allocation with polynomial sample complexity and polynomial runtime. 
Practically, we implement a {\em constraint generation (CG)} algorithm and instantiate it with a variety of different seeding and separation routines for finding under-compensated coalitions.\looseness-1 

We empirically evaluate our methods on a semi-synthetic web retrieval credit assignment task, where the goal is use an LLM to aggregate information across many different data sources in order to answer a set of questions.  
We find that our CG algorithm instantiated with LLM-based separation oracles approximates least-core allocations using orders of magnitude fewer LLM calls compared to relevant baselines.\looseness-1

\textbf{Related work.}
The least core is intractable to compute in general. 
As a result, there is a line of work on learning and approximating it from data. 
In particular,~\citet{balcan2015learning,balkanski2017statistical,yan2021if} provide sample complexity bounds for learning the core given sample evaluations of the value function on coalitions drawn from some underlying distribution;~\citet{gemp2024approximating} provide iterative convex optimization algorithms for core computation that avoid solving LPs. In contrast, we harness LLMs within constraint generation to iteratively solve the least-core LP.

%
\citet{wolsey1976nucleolus} describes a generic constraint generation algorithm for the core (but only discusses the separation subproblem in the abstract) and~\citet{nguyen2016finding} develop constraint generation algorithms for nucleolus computation (a refinement of the core that is computationally hard to even approximate~\cite{yan2021if}).
%
%
In the realm of combinatorial auctions (which can be modeled as a type of coalitional game), constraint generation methods for computing core outcomes have been developed and successfully deployed in real-world high-stakes auctions~\citep{day2007fair,day2012quadratic,prasad2026weakest}.\looseness-1

%
Adaptations of the Shapley value~\cite{shapley1953value} to determine the value of different features and data points in machine learning settings have become popular in recent years (e.g.,~\cite{lundberg2017unified, ghorbani2019data, jia2019towards, wang2024data}).  
\citet{yan2021if} argue for the use of the core in such settings instead. 

There is also a recent line of work on \emph{context attribution}, which aims to increase interpretability by identifying the parts of
the context that led to a model's generated output~\cite{cohen2024contextcite, liu2024attribot, qi2024model, wang2025tracllm, hirsch2025laquer, shen2025transparentize, xiao2025tokenshapley, chang2025jopa, li2025attributing, zhang2025taught, nematov2025source, pan2025context, patel2025maxshapley, de2025improving}. 
Like these papers, we are interested in credit assignment for data which appears in-context. 
However unlike this line of work, we are motivated by economic considerations, not interpretability. 
Our credit assignment algorithms based on the least core explicitly prioritize the {\em incentives} of the creators, which is a critical aspect missing from prior work.\looseness-1 
\section{Setting and background}

We cast the problem of in-context content credit assignment in the language of cooperative game theory. 
There is a set $N = \{1,\ldots,n\}$ {\em creators}, where each $i\in N$ is the creator of some IP that is used in the content generation process. 
%
We model rewards via a coalitional value function $v:2^{ N}\to[0,1]$. 
The total reward generated as a result of the content created by $N$ is the quantity to be distributed among the creators in $N$ and is normalized to $v(N) = 1$. 
Likewise, the baseline reward generated by the empty set is zero. 
%
%
For a coalition of creators $ S\subseteq N$, $v( S)\in [0,1]$ represents the counterfactual reward that the subset $S$ of creators would have generated on their own. 
While counterfactual rewards are generated according to the ground-truth value function $v$, we only assume that the platform has access to some \emph{estimate} of $v$, which we denote by $\widehat{v}:2^{ N}\to[0,1]$ (also normalized so that $\widehat{v}(N) = 1$ and $\widehat{v}(\emptyset) = 0$; $v$ and $\widehat{v}$ agree on the grand coalition $N$ since that is the reward observed by the platform).
This framing captures two important examples of AI content generation:\looseness-1

%

\begin{enumerate}[leftmargin=*, itemsep=0pt, topsep=0pt, parsep=0pt]
    \item{\em LLM web search.} A user queries an LLM, which draws on several web sources when generating its answer. 
    $N$ is the set of webpages referenced in the LLM response.
    The reward $v(N)$ generated from this interaction is proportional to user engagement/satisfaction with the platform.\footnote{For example, $v(N)$ might encode whether or not the returned answer correctly answers the user's query. More practically, $v(N)$ could also incorporate ad revenue and/or some value proportional to the amortized fee the user pays per LLM query.\looseness-1} 
    For a coalition $S \subseteq N$, $v(S)$ denotes the counterfactual reward of the response that would have been generated if the LLM were only allowed to draw from webpages in $S$ when generating its answer.
    An LLM-as-a-Judge or reward function obtained from human feedback may be used as a value function estimate $\widehat{v}$.\footnote{While reward functions learned from human preference data typically only approximate the true reward function up to affine shifts, the least core allocation is invariant under affine transformations of the form $v'(S)=a\,v(S)+\sum_{i\in S} b_i$ for $a > 0$.\looseness-1}\looseness-1 
    %
    %
    \item{\em AI video generation.} A prompt author generates an AI video featuring the likenesses of the creators in $N$. 
    The video accrues some reward $v(N)$, which may be ad revenue proportional to the amount of engagement (e.g. clicks, likes, or watch-time) it receives. 
    For a subset of creators $S\subseteq N$, $v(S)$ is the counterfactual reward that would have been generated from a video containing only those $S$ creators.\footnote{Concretely, this may be implemented by replacing all occurrences of creators $N\setminus S$ in the prompt with a default likeness.}
    In video generation, $\widehat{v}$ may be an internal tool used to judge video quality, learned from, e.g., the performance of other videos on the platform. (Here, $v(S)$ is not directly observable short of creating a new video and letting it accrue reward on the platform.)
\end{enumerate}



\begin{definition}
    A value-sharing scheme is a vector $\vec{u}=(u_1,\ldots, u_N)\in[0,1]^{N}$ such that $\sum_{i\in N}u_i = 1$, where $u_i \in [0, 1]$ is creator $i$'s share of the reward.
\end{definition}

The core is the set of value-sharing schemes such that each subset of creators $S \subseteq N$ receives total value at least $v(S)$. 
The core is rarely nonempty.\footnote{E.g., for binary $v$, the core is nonempty if and only if there exists a {\em veto} player $i$ s.t. $v(S) = 1$ only if $i \in S$.} Therefore, we consider a relaxation, the $\varepsilon$-core, where each subset is under-compensated by an amount at most $\varepsilon$ compared to their value $v(S)$.

\begin{definition}[$\varepsilon$-(estimated) core]
    For $\varepsilon \geq 0$, the 
    $\varepsilon$-core and $\varepsilon$-estimated core are the polytopes
    \begin{equation*}
    \begin{aligned}
        \core_{\varepsilon}&=\left\{\vec{u}\in[0,1]^{N}:
        \textstyle\sum_{i\in  N} u_i = 1;\;\;
        \textstyle\sum_{i \in  S} u_i \geq v( S)-\varepsilon, \;\; \forall  S \subseteq  N
        \right\}, \\
        \widehat{\core}_{\varepsilon}&=\left\{\vec{u}\in[0,1]^{N}:
        \textstyle\sum_{i\in  N} u_i = 1;\;\;
        \textstyle\sum_{i \in  S} u_i \geq \widehat{v}( S)-\varepsilon, \;\; \forall  S \subseteq  N
        \right\}.      
    \end{aligned}
    \end{equation*}
\end{definition}
The least core is the $\varepsilon$-core with the smallest value of $\varepsilon$ for which the $\varepsilon$-core is nonempty. 

\begin{definition}[Least core]
    Let $\varepsilon^* = \min\{\varepsilon\ge 0 : \core_{\varepsilon}\neq\emptyset\}$ and $\widehat{\varepsilon} = \min\{\varepsilon\ge 0 : \widehat{\core}_{\varepsilon}\neq\emptyset\}$.
    The least core is the set $\core_{\varepsilon^*}$. The least estimated core is the set $\widehat{\core}_{\widehat{\varepsilon}}$. The quantities $\varepsilon^*$ and $\widehat{\varepsilon}$ solve the following linear programs. 

\begin{tabularx}{\linewidth}{@{}XX@{}}
\noindent\textbf{Least Core LP:}
\begin{equation}\label{eq:least-core-lp}
    \varepsilon^*= \min_{\varepsilon,\vec{u}}\left\{\varepsilon : \vec{u}\in\core_{\varepsilon}\right\}
\end{equation}
&
\noindent\textbf{Least Estimated Core LP:}
\begin{equation}\label{eq:least-estimated-core-lp}
    \widehat{\varepsilon}= \min_{\varepsilon,\vec{u}} \left\{\varepsilon : \vec{u}\in\widehat{\core}_{\varepsilon}\right\}
    \end{equation}
\end{tabularx}
\end{definition}
Ideally we would like to implement a value-sharing scheme $\vec{u}\in\core_{\varepsilon^*}$. 
However, since the true value function $v$ is unobservable, allocations in $\core_{\varepsilon^*}$ cannot be exactly computed. 
As a result, we focus on computing schemes in $\widehat{\core}_{\widehat{\varepsilon}}$, and show that under reasonable conditions these value-sharing schemes are not too far from those in $\core_{\varepsilon^*}$. 

The least (estimated) core is a set of value-sharing schemes. 
We break ties among schemes in the set by minimizing $\ell_2$ norm. 
The resulting allocation is referred to as the \emph{egalitarian} least (estimated) core.\looseness-1 

\begin{definition}\label{def:egal}
    The egalitarian least estimated core allocation is $\vec{u} = \argmin_{\vec{u}'\in\widehat{\core}_{\widehat{\varepsilon}}}\norm{\vec{u}'}_2^2$.
\end{definition}

\section{Sensitivity of the least core}\label{sec:sensitive}

This section bounds the difference between the solutions $\varepsilon^*$ and $\widehat{\varepsilon}$ of LP~\eqref{eq:least-core-lp} and LP~\eqref{eq:least-estimated-core-lp} as a function of the true and estimated value functions. 
We begin with the general setting, before specializing to binary rewards.
Results for supermodular value functions are included in the Appendix.

While one can always upper-bound $|\varepsilon^* - \widehat{\varepsilon}|$ with the worst-case deviation $\max_{S \subseteq N} |v(S) - \widehat{v}(S)|$, this bound may be overly pessimistic, as it depends only on the single most misestimated coalition. 
In contrast to this worst-case viewpoint, we use a dual characterization in terms of {\em balanced} distributions over coalitions. (A balanced distribution is one such that all creators have equal marginal inclusion probabilities.)
This perspective allows us to obtain a refined sensitivity bound that weights estimation errors according to a worst-case balanced distribution.

\begin{restatable}{theorem}{dbsb}\label{thm:dbsb}
    Let $\Lambda = \{\vec{\lambda}\in [0,1]^{2^N} : \sum_{S\subseteq N}\lambda_S = 1, \sum_{S\ni i}\lambda_S = \sum_{S\ni j}\lambda_S~\forall i,j\in N\}$ denote the set of balanced distributions over coalitions. 
    Then, 
    $|\varepsilon^* - \widehat{\varepsilon}|\le\max_{\vec{\lambda}\in\Lambda}\sum_{S \subseteq N}\lambda_S|v(S)-\widehat{v}(S)|$.
\end{restatable}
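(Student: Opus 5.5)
The plan is to use LP duality so that both $\varepsilon^*$ and $\widehat{\varepsilon}$ become maxima of \emph{linear} functions of the value function over one feasible region that does not depend on the value function. The difference of two such maxima is then bounded by the maximum of the difference. Concretely, I would first set aside the box constraints $\vec u\in[0,1]^N$ and return to them at the end. The primal is then: minimize $\varepsilon$ subject to $\sum_i u_i=1$, $\sum_{i\in S}u_i+\varepsilon\ge v(S)$ for all $S$, and $\varepsilon\ge 0$. Assign a multiplier $y_S\ge 0$ to each coalition constraint and a free multiplier $z$ to the budget constraint. The dual constraint for each (free) $u_i$ reads $\sum_{S\ni i}y_S+z=0$, so every $y$ has a common marginal $c(y):=\sum_{S\ni i}y_S=-z$. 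The dual constraint for $\varepsilon$ reads $\sum_S y_S\le 1$. Strong duality therefore gives
\[
\varepsilon^*=\max_{y\in Y}\Bigl(\sum_{S\subseteq N}y_S\,v(S)-c(y)\Bigr),\qquad Y=\Bigl\{y\ge 0:\textstyle\sum_S y_S\le 1,\ \sum_{S\ni i}y_S=\sum_{S\ni j}y_S\ \forall i,j\Bigr\},
\]
and the identical formula with $\widehat v$ in place of $v$ gives $\widehat{\varepsilon}$. Both LPs are feasible and bounded: take $u$ uniform and $\varepsilon=1$, and note $\varepsilon\ge 0$. So duality applies.

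The comparison step comes next. The map $v\mapsto \max_{y\in Y}\bigl(\langle y,v\rangle-c(y)\bigr)$ is a maximum of functions of the form $\langle y,v\rangle+\text{const}$. Hence, taking $y$ optimal for $v$ and evaluating the same $y$ in the $\widehat v$ program, we get $\varepsilon^*-\widehat{\varepsilon}\le\sum_S y_S\,(v(S)-\widehat v(S))\le \sum_S y_S|v(S)-\widehat v(S)|$, and symmetrically in the other direction. To land in $\Lambda$, I would exploit $v(\emptyset)=\widehat v(\emptyset)=0$. Any deficit $1-\sum_S y_S$ can be placed on $\lambda_\emptyset$. This keeps every marginal equal and sums to $1$, so the padded vector $\lambda$ lies in $\Lambda$ and has the same weighted error. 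Thus $|\varepsilon^*-\widehat{\varepsilon}|\le\max_{\lambda\in\Lambda}\sum_S\lambda_S|v(S)-\widehat v(S)|$.

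The main obstacle is the box constraint $u_i\ge 0$; the bound $u_i\le 1$ is implied by $u_i\ge 0$ together with the budget constraint. Keeping $u_i\ge 0$ in the LP introduces dual slacks $w_i\ge 0$ with $\sum_{S\ni i}y_S+w_i=c$, so the dual $y$ is no longer balanced. My first attempt would be to show that the nonnegativity constraints are redundant at the optimum. The idea is to take an optimal $(\vec u,\varepsilon)$ of the unconstrained-sign LP with some $u_i<0$. For each $S\ni i$, the constraint for $S\setminus\{i\}$ gives $\sum_{j\in S}u_j< \sum_{j\in S\setminus\{i\}} u_j$, and the extra slack would come from monotonicity-type structure. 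If that fails, the fallback is to fold each $w_i$ into the singleton weight $y_{\{i\}}$, which restores balance, and then renormalize. This folding changes the weights only on singletons, where the errors $|v(\{i\})-\widehat v(\{i\})|$ appear. The difficulty is that the added mass may push the total above $1$, so I would need a bound on $\sum_i w_i$ at a dual optimum, for example via complementary slackness: $w_i>0$ only if $u_i=0$. This is the step I expect to demand the most care.
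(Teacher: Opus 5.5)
Your argument is the paper's proof. The paper writes $\varepsilon^*=\max\{\sum_S\lambda_S v(S)-\mu:\sum_S\lambda_S=1,\ \sum_{S\ni i}\lambda_S=\mu\ \forall i,\ \lambda\ge 0\}$ and swaps dual optima between the two programs, just as you do. Your padding on $\lambda_\emptyset$ to handle $\varepsilon\ge 0$ is a correct detail the paper skips by writing the equality directly. However, that balanced dual, with equality marginals, is the dual of the LP in which $\vec u$ is sign-free. The paper's proof never addresses the constraint $\vec u\in[0,1]^N$ built into $\core_\varepsilon$.

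The step you flag as open is therefore a genuine gap, and it cannot be closed as stated: with $\vec u\ge 0$ enforced, the inequality can fail for non-monotone games. Take $N=\{1,2,3,4\}$ and set $v(S)=1$ for $S\in\{\{1,2\},\{1,3\},\{2,3\},\{1,2,3\},N\}$, with $v(S)=0$ otherwise. Let $\widehat v$ equal $v$ except that $\widehat v=2/3$ on the three pairs.
\begin{itemize}
\item Summing the three pair constraints and using $u_1+u_2+u_3\le 1$ gives $\varepsilon^*\ge 1/3$.
\item The allocation $\vec u=(1/3,1/3,1/3,0)$ attains $\varepsilon^*=1/3$ for $v$ and lies in the core of $\widehat v$, so $\widehat\varepsilon=0$.
\item Now let $\lambda\in\Lambda$ put mass $p$ on the pairs. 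Its common marginal $\mu$ satisfies $\mu\le 1-p$, since player $4$ is in no pair, and $3\mu\ge 2p$. Hence $p\le 3/5$.
\item The right-hand side is therefore $\frac13\cdot\frac35=\frac15<\frac13$.
\end{itemize}

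The same example shows why both of your fixes fail without extra hypotheses.
\begin{itemize}
\item The sign-free optimum for $v$, which is what the paper's dual formula returns, is $1/5$. It is attained at $\vec u=(2/5,2/5,2/5,-1/5)$, so nonnegativity is not redundant.
\item An optimal box-constrained dual is $y_S=1/3$ on each pair, $c=2/3$, $w_4=2/3$. It becomes balanced only after folding $w_4$ into $\{4\}$, which inflates the total mass to $5/3$. That is exactly the factor by which the bound is violated.
\end{itemize}

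What does work is your first idea under the added assumption that $v$ and $\widehat v$ are monotone. At a dual optimum, $c=\max_j\sum_{S\ni j}y_S\le\sum_S y_S$. So for each $i$ with $w_i>0$ there is at least $w_i$ of mass on coalitions $S\not\ni i$. Move that much mass from such $S$ to $S\cup\{i\}$. This leaves the total mass and all other marginals unchanged, raises $i$'s marginal to $c$, and by monotonicity does not decrease $\sum_S y_S v(S)$.

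Hence, for monotone games, the box-constrained value equals the balanced-dual value, and your proof goes through. In general, the theorem (and the paper's proof) should be read for the sign-free LP.
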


Theorem~\ref{thm:dbsb} can be viewed as a quantitative analog of the celebrated {\em Bondareva-Shapley theorem}~\citep{shapley1967balanced} which characterizes whether or not the core is empty based on balancedness (see also~\citet{kovalenkov2001exact}).
%
%
One simple consequence of Theorem~\ref{thm:dbsb} is that a least estimated core scheme $\widehat{\vec{u}}$ is always in a $2 \max_{S\subset N}\left|v(S)-\widehat{v}( S)\right|$-expansion of the least core:

\begin{corollary}\label{cor:core-containment}    $\widehat{\core}_{\widehat{\varepsilon}}\subseteq\core_{\widehat{\varepsilon}+\delta}\subseteq\core_{\varepsilon^*+2\delta}$, where $\delta=\max_{S\subseteq N}\left|v(S)-\widehat{v}( S)\right|$. 
\end{corollary}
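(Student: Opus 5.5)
The plan is to prove the two inclusions separately, each by a direct pointwise argument from the definitions of the $\varepsilon$-core and $\varepsilon$-estimated core, and then to use Theorem~\ref{thm:dbsb} to pass from $\widehat{\varepsilon}$ to $\varepsilon^*$.

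For the first inclusion $\widehat{\core}_{\widehat{\varepsilon}}\subseteq\core_{\widehat{\varepsilon}+\delta}$, take any $\vec{u}\in\widehat{\core}_{\widehat{\varepsilon}}$. It satisfies $\vec u\in[0,1]^N$ and $\sum_{i\in N}u_i=1$, so only the coalition constraints need checking. For each $S\subseteq N$,
\begin{equation*}
\textstyle\sum_{i\in S}u_i\ \ge\ \widehat{v}(S)-\widehat{\varepsilon}\ \ge\ v(S)-|v(S)-\widehat{v}(S)|-\widehat{\varepsilon}\ \ge\ v(S)-(\widehat{\varepsilon}+\delta),
\end{equation*}
which is exactly membership in $\core_{\widehat{\varepsilon}+\delta}$.

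For the second inclusion, note that the $\varepsilon$-cores are nested: if $\varepsilon\le\varepsilon'$ then $\core_{\varepsilon}\subseteq\core_{\varepsilon'}$, since loosening every right-hand side by the same amount only enlarges the feasible set. It therefore suffices to show $\widehat{\varepsilon}+\delta\le\varepsilon^*+2\delta$, i.e.\ $\widehat{\varepsilon}\le\varepsilon^*+\delta$. Theorem~\ref{thm:dbsb} gives
\begin{equation*}
|\varepsilon^*-\widehat{\varepsilon}|\ \le\ \max_{\vec\lambda\in\Lambda}\textstyle\sum_{S}\lambda_S|v(S)-\widehat{v}(S)|\ \le\ \max_{\vec\lambda\in\Lambda}\textstyle\sum_S\lambda_S\,\delta\ =\ \delta,
\end{equation*}
using that each $\vec\lambda\in\Lambda$ is a probability vector. (The set $\Lambda$ is nonempty, e.g.\ it contains the point mass on $N$, so the maximum is well defined.)

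Alternatively, one can avoid the theorem entirely. Take $\vec u^*\in\core_{\varepsilon^*}$. By the same chain of inequalities as in the first step with the roles of $v$ and $\widehat v$ swapped, $\vec u^*\in\widehat{\core}_{\varepsilon^*+\delta}$. Hence this set is nonempty and $\widehat\varepsilon\le\varepsilon^*+\delta$ by minimality of $\widehat\varepsilon$.

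There is no real obstacle here. The only points needing care are confirming the monotonicity of $\core_\varepsilon$ in $\varepsilon$ and observing that the balanced-distribution bound collapses to $\delta$ because the weights sum to one. I would present the direct minimality argument as the main proof and remark that it also follows from Theorem~\ref{thm:dbsb}.
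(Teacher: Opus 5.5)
Your proof is correct in both inclusions, but your preferred argument takes a different route from the paper's. The paper gives no separate proof. It presents the corollary as a ``simple consequence'' of Theorem~\ref{thm:dbsb}, which is your secondary route: the first inclusion is pointwise from the definitions, and $\widehat\varepsilon\le\varepsilon^*+\delta$ follows by bounding the balanced-weighted error by $\delta$. Your main argument instead shows that $\vec u^*\in\core_{\varepsilon^*}$ lies in $\widehat{\core}_{\varepsilon^*+\delta}$ and then uses minimality of $\widehat\varepsilon$, so it needs no LP duality.

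The paper's route buys context: the $\max_S$ bound appears as the crude specialization of a finer, distribution-weighted one. Yours buys self-containment and robustness, and robustness matters here. The dual displayed in the proof of Theorem~\ref{thm:dbsb} is the dual of the least-core LP with $\vec u$ unrestricted in sign. Under the paper's constraint $\vec u\in[0,1]^N$, the balancing equalities relax to $\sum_{S\ni i}\lambda_S\le\mu$, and for non-monotone $v$ the stated balanced bound can then fail. For example, take $n=3$, with $v(\{2\})=v(\{3\})=v(N)=1$ and $v=0$ otherwise, and let $\widehat v$ equal $v$ except $\widehat v(\{2\})=\widehat v(\{3\})=\tfrac12$. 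Then $\varepsilon^*=\tfrac12$, since $u_2+u_3\le 1$ forces it, and $\widehat\varepsilon=0$, witnessed by $\vec u=(0,\tfrac12,\tfrac12)$. Every balanced $\vec\lambda$ has $\lambda_{\{2\}}+\lambda_{\{3\}}\le\tfrac23$: each of the two is at most the common inclusion probability $\mu$, and together with the mass $\mu$ on coalitions containing player $1$ they sum to at most $1$. So the right-hand side of Theorem~\ref{thm:dbsb} is only $\tfrac13$.

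The inequality $|\varepsilon^*-\widehat\varepsilon|\le\delta$ that the corollary actually needs is unaffected. Your minimality argument proves it directly, and it also follows from the corrected sub-balanced dual, whose $\vec\lambda$ is still a probability vector. The same example shows the bound is tight, since there $\delta=\tfrac12$. So leading with the direct argument, as you propose, is the better choice.
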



In the binary setting (where $v(S)$ and $\widehat{v}(S)$ take values in $\{0, 1\}$), small pointwise disagreement between $v$ and $\widehat{v}$ does not imply small error in the least-core deficit. 
In particular, even if $v$ and $\widehat{v}$ disagree on an exponentially-small fraction of coalitions, the resulting least-core deficits $\varepsilon^*$ and $\widehat{\varepsilon}$ may differ substantially. 
The following example illustrates this phenomenon.

\begin{example}
Suppose $n = |N|$ is even, and let $A = \{1,\ldots,n/2\}$ and $B = \{n/2+1,\ldots,n\}$. 
Define
$v(S) = \mathbf{1}[A \subseteq S]$ and $\widehat{v}(S) = \mathbf{1}[A \subseteq S \ \wedge\ B \subseteq S]$.
Then $v(S) \neq \widehat{v}(S)$ if and only if $S = B \cup T$ for some $T \subseteq N \setminus A$.
The fraction of coalitions on which $v$ and $\widehat{v}$ disagree is therefore
$\frac{2^{n/2}}{2^n - 1} \approx 2^{-n/2}$. 
Despite this exponentially small disagreement, the least-core deficits differ significantly: $\varepsilon^* = 0$ but $\widehat{\varepsilon} = 1/2$.
\end{example}

This example demonstrates that the fraction of coalitions on which $v$ and $\widehat{v}$ disagree is not the right quantity for controlling the sensitivity of the least-core deficit in the binary setting. 
Instead, what matters is how these disagreements are distributed across coalitions.
Instantiating the duality-based bound from Theorem~\ref{thm:dbsb} yields a sharper characterization in the binary setting: the sensitivity is governed by the worst-case disagreement probability under a balanced distribution over coalitions.

\begin{corollary}
    Let $v$ and $\widehat{v}$ take values in $\{0,1\}$, and let $\Lambda$ be the set of balanced distributions over coalitions defined in Theorem~\ref{thm:dbsb}. 
    Then,
    $|\varepsilon^* - \widehat{\varepsilon}|\le\max_{\vec{\lambda}\in\Lambda}\sum_{S : v(S)\neq\widehat{v}(S)}\lambda_S.$
\end{corollary}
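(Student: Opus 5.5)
This is a direct specialization of Theorem~\ref{thm:dbsb}, and the plan is to invoke it and then simplify the integrand pointwise. Theorem~\ref{thm:dbsb} gives
\[
|\varepsilon^* - \widehat{\varepsilon}|\le\max_{\vec{\lambda}\in\Lambda}\sum_{S \subseteq N}\lambda_S|v(S)-\widehat{v}(S)|,
\]
with no assumption on the range of $v,\widehat{v}$ beyond $[0,1]$. So it remains to rewrite the right-hand side when $v,\widehat v$ are binary.

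The key observation is that for $\{0,1\}$-valued functions, $|v(S)-\widehat{v}(S)| = \mathbf{1}[v(S)\neq\widehat{v}(S)]$ for every coalition $S$. If the two values agree the difference is $0$; otherwise one is $0$ and the other is $1$, so the absolute difference is $1$. Substituting this identity term by term, for every fixed $\vec\lambda\in\Lambda$ we get
\[
\sum_{S\subseteq N}\lambda_S|v(S)-\widehat v(S)| = \sum_{S: v(S)\neq \widehat v(S)}\lambda_S .
\]
Since the identity holds for each $\vec\lambda$ in the same feasible set $\Lambda$, the two maxima coincide, and the stated bound follows.

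There is essentially no obstacle here; the only point to check is that Theorem~\ref{thm:dbsb} applies verbatim. It does: binary functions are a special case of $[0,1]$-valued ones, and the normalization $v(N)=\widehat v(N)=1$, $v(\emptyset)=\widehat v(\emptyset)=0$ is part of the standing setup. One may also remark that the right-hand side is exactly the worst-case probability, under a balanced distribution, of drawing a coalition on which $v$ and $\widehat v$ disagree, which is the interpretation given in the text.
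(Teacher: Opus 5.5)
Your proposal is correct and is exactly how the paper obtains this corollary: it is presented as an instantiation of Theorem~\ref{thm:dbsb}, and for $\{0,1\}$-valued functions $|v(S)-\widehat{v}(S)|=\mathbf{1}[v(S)\neq\widehat{v}(S)]$, so the objectives agree for every $\vec{\lambda}\in\Lambda$ and hence so do the maxima. One caveat concerns the cited theorem rather than your step: the paper's proof of Theorem~\ref{thm:dbsb} dualizes the least-core LP as if $\vec{u}$ were sign-free (under $\vec{u}\ge 0$ the balance equalities $\sum_{S\ni i}\lambda_S=\mu$ relax to $\le\mu$), which is harmless for monotone $v,\widehat{v}$ but can break the theorem for non-monotone ones; in the binary case, however, the bound can be checked directly from $1-\varepsilon^*=1/\nu^*$, where $\nu^*$ is the fractional matching number of the winning coalitions, so the corollary itself stands.
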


\section{Computing the least core}\label{sec:comp}
%
%
Our algorithms for approximating the least core are variants of the classical ellipsoid and constraint generation (CG) methods~\cite{grotschel1988geometric, kelley1960cutting}. 
These methods solve a convex optimization problem (in our case, LP~\eqref{eq:least-estimated-core-lp}) by iteratively refining a feasible region using separating hyperplanes, without explicitly enumerating all constraints. 
Avoiding constraint enumeration is critical for core computation, as LP~\eqref{eq:least-estimated-core-lp}, which has $2^N$ constraints, cannot be written in memory for even a moderate number of creators.\looseness-1

The ellipsoid method~\citep{grotschel1981ellipsoid,grotschel1988geometric} maintains a shrinking outer approximation (an ellipsoid) of the feasible region using separating hyperplanes.
While the ellipsoid method enjoys worst-case polynomial runtime guarantees, it is inefficient to run in practice. 
CG is a more practical algorithm: 
instead of maintaining a shrinking ellipsoidal approximation to the feasible region, it progressively refines a \emph{polyhedral} approximation by explicitly accumulating violated constraints.
While CG does not have polynomial runtime guarantees, it is a popular method for solving optimization problems in practice~\citep{desaulniers2006column}.\looseness-1

Both the ellipsoid method and CG require the existence of a \emph{separation oracle} capable of providing a hyperplane which separates a given point from the feasible region---in our setting this is the task of finding an under-compensated coalition, if one exists. 
In this section, we show how to use these algorithms in our setting, and we provide several separation oracles for the in-context credit assignment problem.\looseness-1 

\begin{definition}[Separation oracle]
    A separation oracle for the $\varepsilon$-estimated core takes as input a candidate value-sharing scheme $\vec{u}\in[0,1]^n$ and deficit $\varepsilon \geq 0$, and returns a coalition $ S$ such that $\sum_{i\in S}u_i + \varepsilon < \widehat{v}(S)$, or returns $\mathtt{None}$ if no such coalition exists.\looseness-1
\end{definition}

Since there is no generic polynomial-time exact separation oracle for core computation, the separation oracles we provide are either approximate or based on LLM prompting.\looseness-1 

\subsection{Polynomial-time core computation via the ellipsoid method}\label{sec:ellipsoid}

We first consider the separation oracle that randomly samples $m$ coalitions and returns an under-compensated coalition if one exists (formally defined in Algorithm~\ref{alg:samp-sep}). 
What follows is a guarantee on the \emph{$\delta$-probable} least estimated core: 
Theorem~\ref{thm:ellipse} says that with high probability, a $1 - \delta$ fraction of creator coalitions will have a value deficit that is at most as large as their deficit under the least estimated core, when the number of samples in the sampling-based oracle is large enough.\looseness-1 

\begin{restatable}{theorem}{corellipsoid}\label{thm:ellipse}
    Let $\cD$ be a distribution over coalitions.
    Suppose the ellipsoid method (Alg.~\ref{alg:ellipsoid}) is instantiated such that $\mathrm{GetUnhappyCoalition}$ is the sampling-based separation oracle (Alg.~\ref{alg:samp-sep}) with $m = O(\log(n/\gamma)/\delta)$ samples, and let $\vec{u}^{(\alg)}$ be the returned value-sharing scheme. 
    Then with probability at least $1 - \gamma$, $\mathbb{P}_{ S \sim \cD}\left[\sum_{i \in  S} u_i^{(\alg)} + \widehat{\epsilon} \geq \widehat{v}( S) \right] \geq 1 - \delta,$
    %
    %
    where $\widehat{\varepsilon}$ is the least estimated core deficit.\looseness-1 
\end{restatable}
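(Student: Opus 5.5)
The plan is to combine the standard ellipsoid analysis for the least estimated core LP with a union bound over the polynomially many calls to the sampling oracle.

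\textbf{Setting up the ellipsoid run.} I assume Alg.~\ref{alg:ellipsoid} runs the ellipsoid method on the feasibility region $\widehat{\core}_{\widehat\varepsilon}$. If $\widehat\varepsilon$ is not known in advance, the run is wrapped in a binary search over $\varepsilon$, which only adds a logarithmic factor to the number of calls. The iterates evolve as follows. When the current center violates an explicit constraint ($\sum_i u_i=1$ or a box constraint), the method cuts on it. Otherwise it calls the sampling oracle, which draws $S_1,\dots,S_m\sim\cD$ i.i.d. If some $S_j$ satisfies $\sum_{i\in S_j}u_i+\widehat\varepsilon<\widehat v(S_j)$, that constraint is a valid cut for $\widehat{\core}_{\widehat\varepsilon}$, since every point of the region satisfies it. The method therefore never cuts away the feasible region. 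Because $\widehat{\core}_{\widehat\varepsilon}$ may be lower dimensional or have zero volume, I would use the usual Grötschel--Lovász--Schrijver perturbation: enlarge the region slightly or restrict to the affine hull. Under that perturbation the number of iterations is $T=\mathrm{poly}(n)$, with bit complexity bounded because the constraint coefficients are $0/1$. Within $T$ iterations the method must reach a center at which the oracle returns $\mathtt{None}$; otherwise the volume argument would force the ellipsoid volume below that of the feasible region. The algorithm outputs this center $\vec u^{(\alg)}$.

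\textbf{Probabilistic step.} Call a center $\vec u$ \emph{bad} if $p(\vec u):=\mathbb P_{S\sim\cD}[\sum_{i\in S}u_i+\widehat\varepsilon<\widehat v(S)]>\delta$. For any fixed center, conditional on the history so far, the oracle returns $\mathtt{None}$ at a bad center with probability at most $(1-\delta)^m\le e^{-\delta m}$. The iterates are adaptive, but the fresh samples at each step are independent of the past. So I can union bound over the at most $T$ oracle calls, conditioning on the history at each call. The probability that the oracle ever returns $\mathtt{None}$ at a bad center is then at most $Te^{-\delta m}$. Setting $m=\lceil\log(T/\gamma)/\delta\rceil$ makes this at most $\gamma$. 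Since $T=\mathrm{poly}(n)$, we have $\log T=O(\log n)$, so $m=O(\log(n/\gamma)/\delta)$. On the complementary event, the returned $\vec u^{(\alg)}$ is not bad, which is exactly the claimed bound $\mathbb P[\sum_{i\in S}u_i^{(\alg)}+\widehat\varepsilon\ge\widehat v(S)]\ge1-\delta$.

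\textbf{Main obstacle.} The delicate part is the first step: ensuring the ellipsoid method terminates within $T=\mathrm{poly}(n)$ oracle calls even though the oracle is only approximate. One also has to confirm that the output is a genuine value-sharing scheme, meaning it exactly satisfies $\sum_i u_i=1$ and $\vec u\in[0,1]^n$, despite the volume and degeneracy issues. My first attempt would be to eliminate one coordinate using the equality $\sum_i u_i=1$, which makes the region full dimensional in $n-1$ coordinates. I would then cite the standard bound of $O(n^2\log(R/r))$ iterations, with $\log(R/r)=\mathrm{poly}(n)$ from bit complexity. A further point needs care: the union bound must also cover every stage of the binary search over $\varepsilon$. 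This only multiplies $T$ by a polynomial factor, so it is still absorbed into $O(\log(n/\gamma))$.
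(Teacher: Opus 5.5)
Your union-bound step is sound, and it handles adaptivity more carefully than the paper does. The termination step fails, and it fails precisely because you fix the deficit at $\widehat\varepsilon$.

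\textbf{The volume argument has nothing to work with.} Eliminating a coordinate does not make $\widehat{\core}_{\widehat\varepsilon}$ full-dimensional. Suppose $\widehat\varepsilon>0$ and $\vec u$ were interior to $\widehat{\core}_{\widehat\varepsilon}$ relative to $\{\sum_i u_i=1\}$. Then no constraint $\sum_{i\in S}u_i\ge\widehat v(S)-\widehat\varepsilon$ could be tight at $\vec u$: for $\emptyset\ne S\ne N$ its left side is nonconstant on that hyperplane, and for $S\in\{\emptyset,N\}$ it has slack $\widehat\varepsilon$. So the deficit could be lowered, contradicting minimality. Thus $r=0$, and the method can genuinely fail to stop.

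\textbf{Concrete example.} Take $n=2$, $\widehat v(\{1\})=1$, $\widehat v(\{2\})=1/3$. Then $\widehat\varepsilon=1/6$ and $\widehat{\core}_{\widehat\varepsilon}=\{(5/6,1/6)\}$. In the free coordinate $u_1$ the method is bisection, whose centers (starting from $[0,1]$) are dyadic and never equal $5/6$. A center with $u_1<5/6$ violates the $\{1\}$ constraint, and one with $u_1>5/6$ violates the $\{2\}$ constraint. So if $\cD$ gives both singletons mass above $\delta$, then on your clean event all $T$ oracle calls return cuts and nothing is output.

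\textbf{Your repairs do not close this.} Enlarging to $\widehat\varepsilon+\eta$ gives an inner ball only if the oracle also tests at $\widehat\varepsilon+\eta$; otherwise its central cuts can slice off the enlarged set. But then you have proved the bound at $\widehat\varepsilon+\eta$, not at $\widehat\varepsilon$. The affine hull of the least core is unknown and cannot be learned from a sampling oracle. Binary search does not help either, since the only stages guaranteed to return are those run at a deficit strictly above $\widehat\varepsilon$.

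\textbf{What is missing} is a way back from slack $\eta=2^{-\mathrm{poly}}$ to exactly $\widehat\varepsilon$. One route uses sliding objective cuts $\varepsilon\le\varepsilon^{(t)}$ at accepted centers in the joint $(\vec u,\varepsilon)$ space. The feasible points with $\varepsilon\le\widehat\varepsilon+\eta$ contain a ball of radius $\eta/\mathrm{poly}(n)$ relative to $\{\sum_i u_i=1\}$. They are never cut until some accepted center has $\varepsilon^{(t)}<\widehat\varepsilon+\eta$. You would then apply Gr\"otschel--Lov\'asz--Schrijver simultaneous Diophantine rounding to that center's $\vec u$. This needs an encoding-length bound on the values $\widehat v(S)$, which your ``0/1 coefficients'' remark does not provide, since the right-hand sides are arbitrary. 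The rounding preserves each nearly satisfied coalition constraint individually, so the $\cD$-probability bound carries over exactly. However, $T$, and hence $m$, then depends on that encoding length and not only on $n$.

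\textbf{Comparison with the paper.} Alg.~\ref{alg:ellipsoid} is not a run on $\widehat{\core}_{\widehat\varepsilon}$. It works in the joint $(\vec u,\varepsilon)$ space and queries the oracle at the current $\varepsilon^{(t)}$, which is how the paper avoids your degeneracy. That region has interior relative to $\{\sum_i u_i=1\}$: it contains all $(\vec u,\varepsilon)$ with $\vec u$ near the barycenter and $\varepsilon$ slightly below $1$. This is what underwrites the $2(n+1)(n+2)\ln(R/r)$ bound with $R/r=\mathrm{poly}(n)$. The paper then union-bounds as you do. It moves the guarantee from $\varepsilon^{(\alg)}$ to $\widehat\varepsilon$ using that the violation probability is nonincreasing in $\varepsilon$, together with $\varepsilon^{(\alg)}\le\widehat\varepsilon$.

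That last inequality is asserted rather than derived. Alg.~\ref{alg:ellipsoid} has no objective cut on $\varepsilon$. On the example above, its first center $((1/2,1/2),1/2)$ passes every coalition check and is returned, yet it violates the $\{1\}$ constraint at $\widehat\varepsilon=1/6$. So the tension you flagged as the main obstacle is real: termination needs slack in $\varepsilon$, and a conclusion at exactly $\widehat\varepsilon$ allows none. The paper's argument does not resolve it either; there it sits in the transfer step rather than in the termination step.
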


It is worth comparing our results in Theorem~\ref{thm:ellipse} to those of Theorem 1 in~\citet{yan2021if}, which obtains a value-sharing scheme in the $\delta$-probable least core by randomly sampling $O((n + \log(1/\gamma))/\delta^2)$ coalitions and solving the least core LP with just those coalitions' constraints.  
Ignoring logarithmic factors, our total sample complexity across all rounds scales as $\widetilde{O}(n^2/\delta)$ (since the ellipsoid method runs for $\widetilde{O}(n^2)$ rounds), whereas theirs scales as $\widetilde{O}(n/\delta^2)$. 
Hence the two bounds are complementary and are better in different regimes. 

We record an important consequence of Theorem~\ref{thm:ellipse} for the egalitarian least estimated core. 
%
The quadratic objective in Definition~\ref{def:egal} admits an exact separation oracle through its epigraph, and so the only source of randomness is the sampling-based oracle.
Thus we can recover a probabilistic approximation of the egalitarian least estimated core by running two instances of the ellipsoid method: one for the estimated least core deficit LP and one for the egalitarian least estimated core QP.

\begin{restatable}{corollary}{elipsqp}\label{cor:eagl}
    Let $\varepsilon^{(\alg)}$ denote the estimated core deficit from Theorem~\ref{thm:ellipse}, and consider running the ellipsoid method with the same setup as Theorem~\ref{thm:ellipse} to solve 
    $\min_{\vec{u}'\in\widehat{\core}_{\varepsilon^{(\alg)}}}\frac{1}{2} \norm{\vec{u}'}_2^2$. 
    Then with probability at least $1-2\gamma$, the returned solution $\vec{u}^{(\mathrm{egal})}$ minimizes $\|\vec{u}\|_2^2$ among all feasible value-sharing schemes.\looseness-1
\end{restatable}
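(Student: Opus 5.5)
The corollary follows by running the argument of Theorem~\ref{thm:ellipse} a second time and union-bounding the two failure events. ``Feasible'' should be read in the same $\delta$-probable sense as Theorem~\ref{thm:ellipse}. That is, $\vec{u}^{(\mathrm{egal})}$ should (i) satisfy, with respect to $\cD$, the constraints $\sum_{i\in S}u_i+\varepsilon^{(\alg)}\ge\widehat{v}(S)$ for all but a $\delta$ fraction of coalitions, and (ii) have squared norm no larger than that of any value-sharing scheme in $\widehat{\core}_{\varepsilon^{(\alg)}}$.

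\textbf{Step 1: the first run.} By Theorem~\ref{thm:ellipse}, with probability at least $1-\gamma$ the first run returns $(\vec{u}^{(\alg)},\varepsilon^{(\alg)})$ with the $\delta$-probable guarantee. We fix $\varepsilon^{(\alg)}$ and condition on this event.

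\textbf{Step 2: epigraph reformulation of the QP.} We rewrite the QP as the convex feasibility and optimization problem over $(\vec{u},t)$:
\[
\min\ t \quad\text{subject to}\quad \tfrac12\norm{\vec{u}}_2^2\le t,\quad \vec{u}\in\widehat{\core}_{\varepsilon^{(\alg)}}.
\]
Its separation oracle has two parts. The epigraph constraint is separated exactly: if $\tfrac12\norm{\vec{u}}^2>t$, return the gradient cut $(\vec{u},-1)$. The simplex constraints $\sum_i u_i=1$ and $u_i\in[0,1]$ are also checked exactly. The only inexact component is the coalition constraints, which are handled by the same sampling oracle (Alg.~\ref{alg:samp-sep}) with $m=O(\log(n/\gamma)/\delta)$ samples per call.

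\textbf{Step 3: reuse the analysis of Theorem~\ref{thm:ellipse}.} The argument for the LP should depend only on two facts: the ellipsoid method makes polynomially many oracle calls, and each sampling call fails to detect a $\delta$-heavy set of violated constraints with probability at most $(1-\delta)^m$, which is union-bounded over the rounds. Both facts hold verbatim here, since the exact cuts never err. So with probability at least $1-\gamma$ the returned $\vec{u}^{(\mathrm{egal})}$ is $\delta$-probably feasible.

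Optimality follows because every cut the method uses is valid for the true feasible set. The sampled cuts are genuine violated constraints, and the epigraph cuts are supporting hyperplanes. Hence the true optimum of the QP is never cut away, and the objective value reached is at most the true minimum over $\widehat{\core}_{\varepsilon^{(\alg)}}$.

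\textbf{Step 4: union bound.} Combining the two runs gives overall success probability at least $1-2\gamma$.

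The main obstacle is Step 3. The ellipsoid method yields only approximate optimality for a nonlinear objective, and the feasible set $\widehat{\core}_{\varepsilon^{(\alg)}}$ may have empty interior; this is exactly the degenerate situation at the least-core value. I would first handle this with the standard Gr\"otschel--Lov\'asz--Schrijver reduction, as used for Theorem~\ref{thm:ellipse}, absorbing the precision error into the exact minimization claim. Failing that, I would state the optimality up to arbitrary additive precision.
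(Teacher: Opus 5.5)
Your proposal is correct and follows the paper's proof essentially step for step. Both use the epigraph reformulation with the exact gradient cut, the sampling oracle whose per-call failure probability $(1-\delta)^m$ is union-bounded over the polynomially many oracle calls, and a final union bound over the two ellipsoid runs to reach $1-2\gamma$. The only real difference is the optimality step. The paper does not argue via validity of all cuts; it argues that $t^{(\mathrm{egal})}=\tfrac12\norm{\vec{u}^{(\mathrm{egal})}}_2^2$ at an epigraph optimum, and reads ``feasible'' as ``satisfying the sampled feasibility conditions.'' It never addresses the approximate-optimality and empty-interior issues you flag. Of your two options there, stating optimality up to additive precision is the realistic one with a sampling oracle.
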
 

\subsection{Constraint generation algorithms}
We now propose practical CG methods for (egalitarian) least estimated core computation.
For a collection of coalitions $\cS\subseteq 2^N$, let $\widehat{\mathrm{LP}}(\cS)$ denote the restricted least estimated core LP
\begin{equation}\label{eq:lpS}
    \min_{\vec{u}\in [0,1]^n}\left\{\varepsilon:
    \begin{aligned}
    \textstyle\sum_{i\in  N} u_i = 1;\;\;
    \textstyle\sum_{i \in  S} u_i \geq \widehat{v}( S)-\varepsilon,~\forall  S\in\cS
    \end{aligned}
    \right\}.\;       
\end{equation}
%
%
Similarly, given a collection of coalitions $\cS\subseteq 2^N$ and a candidate least deficit $\varepsilon$, let $\widehat{\mathrm{QP}}(\cS,\varepsilon)$ denote the restricted egalitarian least estimated core quadratic program\looseness-1
\begin{equation}\label{eq:qpS}
    \min_{\vec{u}\in[0,1]^n}\left\{\norm{\vec{u}}_2^2 : 
    \begin{aligned}
        \textstyle\sum_{i\in N}u_i = 1;\;\; 
        \textstyle\sum_{i\in S} u_i \ge \widehat{v}(S) - \varepsilon,~\forall S\in\cS        
    \end{aligned}
    \right\}.
\end{equation}

\begin{algorithm}[t]
\SetAlgoLined
Initialize $\cS^1\subseteq 2^N$\label{line:seeding}\\
\For{$t = 1,\ldots, T$}{
    Solve $\widehat{\mathrm{LP}}(\cS^t)$; let $\varepsilon$ be the returned optimal deficit \\
    Solve $\widehat{\mathrm{QP}}(\cS^t,\varepsilon)$; let $\vec{u}$ be the returned value-sharing scheme\\
    {\bf if} $\mathrm{GetUnhappyCoalition}(\vec{u},\varepsilon)=\mathtt{None}$ \textbf{return} $\vec{u},\varepsilon$
    $\cS^{t+1}\leftarrow\cS^t\cup\mathrm{GetUnhappyCoalition}(\vec{u},\varepsilon)$.
}
{\bf return} $\vec{u},\varepsilon$
\caption{Joint Constraint Generation for Egalitarian Least Core}\label{alg:cg-qp}
\end{algorithm}

Algorithm~\ref{alg:cg-qp} describes the joint CG algorithm that solves the least estimated core LP and the egalitarian least estimated core QP simultaneously, by growing $\cS$ at each iteration and re-solving $\widehat{\mathrm{LP}}(\cS)$ and $\widehat{\mathrm{QP}}(\cS,\varepsilon)$. 
If equipped with an {\em exact} separation oracle, Algorithm~\ref{alg:cg-qp} terminates with the egalitarian least estimated core allocation. Unlike ellipsoid-based algorithms, it is generally not possible to provide (polynomial) bounds on the number of CG rounds.
Nonetheless, with a sampling-based separation oracle, we prove that (1) the returned allocation is in the least estimated core with high probability, and (2) the returned allocation minimizes $\ell_2$ norm among all feasible allocations (Theorem~\ref{thm:cg}).\looseness-1 

\citet{yan2021if} (and the corresponding implementation available in the {\tt pyDVL} package~\citep{pyDVL_2025}), approximate the egalitarian least core by (1) solving the least core LP using their sampling method then (2) solving the \emph{full} QP using the returned least core deficit. 
A major shortcoming of this approach is if the least core LP is too large to be written down in memory---as is the premise that necessitates~\citeauthor{yan2021if} to develop sampling methods---then so is the QP. 
Our methods address this issue.\looseness-1 

Finally, we highlight that our algorithms may be instantiated with \emph{any} separation oracle, not just the sampling-based method of Algorithm~\ref{alg:samp-sep}. 
As one example, Algorithm~\ref{alg:llm-sep} uses an LLM to identify violated coalitions. This approach is inspired by the recent line of work on \emph{LLM-as-a-Judge}~\cite{zheng2023judging}, in which language models are used to evaluate or critique candidate outputs. In our setting, the LLM plays a related role by acting as a learned separation oracle: given a candidate allocation, it proposes coalitions that are likely to violate the core constraints, effectively serving as an adversarial tester that surfaces counterexamples.
While this oracle does not admit the same probabilistic guarantees as the sampling-based method, it offers a flexible and practical heuristic for guiding the search toward informative constraints, particularly in settings where domain knowledge can be encoded through prompting.\looseness-1 
%



\subsection{Coalition seeding for binary value functions}

We now study how the choice of the initial constraint set in CG (Alg.~\ref{alg:cg-qp}, line 1) affects performance in the special case of binary rewards. 
Binary rewards are important in practice (e.g. ``Does a set of documents suffice to answer a query?''), and admit additional structure that can be exploited algorithmically.\looseness-1

\begin{definition}[Minimal winning coalition]
    Let $\widehat{v} : 2^N\to\{0,1\}$ be a (estimated) binary value function that is {\em monotone}, i.e. $\widehat{v}(T)\geq \widehat{v}(S)$ for all $T\supseteq S$. A {\em minimal winning coalition (MWC)} for $\widehat{v}$ is a coalition $S\subseteq N$ such that $\widehat{v}(S) = 1$, but $\widehat{v}(T) = 0$ for all strict subsets $T\subset S$.
\end{definition}

Intuitively, MWCs are the smallest sets of creators that can independently generate value.
A key structural property is that MWCs form a complete description of the core constraints: any winning coalition contains at least one MWC, and constraints for larger coalitions are redundant once MWCs are enforced.\looseness-1

\begin{restatable}{lemma}{lemmwc}\label{lem:mwc}
    Given as input a list of all MWCs $\{S_1,\ldots, S_k\}$ of $\widehat{v}$, a least estimated core allocation can be found in $\mathsf{poly}(n, k)$ time by solving LP~\eqref{eq:lpS}. Furthermore, the egalitarian least core can be computed in $\mathsf{poly}(n, k)$ time by solving QP~\eqref{eq:qpS}.
\end{restatable}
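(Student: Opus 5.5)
The plan is to show that the restricted programs with $\cS=\{S_1,\ldots,S_k\}$ have exactly the same feasible set as the full programs, in the regime that matters. Then we invoke polynomial-time solvability of LPs and convex QPs with $n+1$ variables and $O(n+k)$ constraints.

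\textbf{Step 1 (reduction to MWC constraints).} Fix any $\vec{u}\in[0,1]^n$ with $\sum_i u_i=1$ and any $\varepsilon\ge 0$. Suppose the constraints $\sum_{i\in S_j}u_i\ge 1-\varepsilon$ hold for $j=1,\ldots,k$. We check every other coalition $S\subseteq N$ by cases.
\begin{itemize}
\item If $\widehat{v}(S)=0$, the constraint $\sum_{i\in S}u_i\ge -\varepsilon$ is implied by $u_i\ge 0$ and $\varepsilon\ge 0$.
\item If $\widehat{v}(S)=1$, then $S$ contains some MWC. To see this, repeatedly remove elements from $S$ while keeping value $1$; this terminates in a winning coalition no strict subset of which wins. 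By monotonicity, that coalition is an MWC, say $S_j\subseteq S$. Nonnegativity of $\vec{u}$ then gives $\sum_{i\in S}u_i\ge\sum_{i\in S_j}u_i\ge 1-\varepsilon=\widehat{v}(S)-\varepsilon$.
\end{itemize}
Hence the feasible region of LP~\eqref{eq:lpS} with $\cS=\{S_1,\ldots,S_k\}$ coincides with $\{(\vec{u},\varepsilon): \vec{u}\in\widehat{\core}_\varepsilon\}$. It follows that its optimal value is $\widehat{\varepsilon}$ and every optimal $\vec{u}$ is a least estimated core allocation. The QP statement follows the same way: $\widehat{\mathrm{QP}}(\cS,\widehat{\varepsilon})$ has feasible set exactly $\widehat{\core}_{\widehat{\varepsilon}}$, so its minimizer is the egalitarian allocation of Definition~\ref{def:egal}.

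\textbf{Step 2 (runtime).} The LP has $n+1$ variables and $k+2n+1$ constraints. Its coefficients are $0/1$, so the encoding size is $\mathsf{poly}(n,k)$, and interior-point or ellipsoid methods solve it exactly in polynomial time. The rational optimum $\widehat{\varepsilon}$ they return also has polynomial bit size.

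The QP minimizes the strictly convex objective $\norm{\vec{u}}_2^2$ over a polytope with $\mathsf{poly}(n,k)$ constraints. Convex quadratic programs are solvable in polynomial time, e.g.\ by the ellipsoid method or by the classical result of Kozlov--Tarasov--Khachiyan. Here we use that the LP optimum $\widehat{\varepsilon}$ has polynomial encoding, so the QP data also has polynomial encoding.

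I expect the main subtlety to be this last point, namely exact rather than approximate solvability of the QP, and making sure $\widehat{\varepsilon}$ is computed exactly so that the QP's feasible region is exactly $\widehat{\core}_{\widehat{\varepsilon}}$. I would handle it by citing polynomial-time exact solvability of convex QPs with rational data. The structural Step 1 is routine once monotonicity and nonnegativity are used.
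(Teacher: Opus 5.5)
Your proposal is correct and follows the paper's route. Every non-MWC constraint is redundant: losing coalitions are trivially satisfied, and each winning coalition contains an MWC, so nonnegativity of $\vec{u}$ gives the bound. Hence LP~\eqref{eq:lpS} and QP~\eqref{eq:qpS} over the $k$ MWCs have the same feasible sets as the full programs and are solvable in $\mathsf{poly}(n,k)$ time. You are more careful than the paper's proof, which omits the losing-coalition case, asserts the MWC containment without argument, and never addresses exact polynomial-time QP solvability with the rational $\widehat{\varepsilon}$. The one detail to add is that LP~\eqref{eq:lpS} does not impose $\varepsilon\ge 0$. This is harmless, since any single MWC constraint together with $\sum_{i\in S_j}u_i\le 1$ already forces it, so the ``feasible regions coincide'' claim holds as stated.
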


In practice, however, we rarely know all MWCs in advance. 
The next result shows that CG can efficiently recover any missing MWCs, provided the separation oracle identifies highly violated coalitions.\looseness-1
%

\begin{restatable}{theorem}{mwcpred}\label{thm:mwc_pred}
    Let $\cL = \{S_1,\ldots, S_k\}$ denote the list of MWCs that CG is seeded with and let $\eta = |\{T\subset N : T\text{ is a MWC}, T\notin \cL\}|$ be the number of missing MWCs. 
    Suppose we are given access to a \emph{max-violation separation oracle}, i.e. $\mathrm{GetUnhappyCoalition}(\vec{u}) \in \argmin_{S : \widehat{v}(S) = 1}\sum_{i\in S}u_i$. 
    Then Algorithm~\ref{alg:cg-qp} computes the egalitarian least estimated core in $\mathsf{poly}(n, k + \eta)$ runtime with $\eta + 1$ oracle calls.\looseness-1 
\end{restatable}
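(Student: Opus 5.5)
The approach is to show that every oracle call either certifies optimality or returns a new MWC that is not yet in the constraint set. After at most $\eta$ additions the seed list contains all MWCs, and Lemma~\ref{lem:mwc} applies.

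\textbf{Step 1: reduce to minimal winning coalitions.} I would first note that for monotone binary $\widehat{v}$ the max-violation oracle can be taken to return an MWC. Suppose $S$ minimizes $\sum_{i\in S}u_i$ over winning coalitions. Pick any MWC $T\subseteq S$. Since $u\ge 0$, we have $\sum_{i\in T}u_i\le\sum_{i\in S}u_i$, so $T$ is also a minimizer. I would therefore interpret (or w.l.o.g.\ modify) the oracle to return an inclusion-minimal minimizer, which is an MWC. I would flag this tie-breaking convention explicitly, since the statement only says the output lies in the $\argmin$.

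\textbf{Step 2: each iteration makes progress.} At round $t$, let $(\vec u,\varepsilon)$ be the solution of $\widehat{\mathrm{LP}}(\cS^t)$ and $\widehat{\mathrm{QP}}(\cS^t,\varepsilon)$. Let $S$ be the oracle's MWC output.

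If $\sum_{i\in S}u_i\ge 1-\varepsilon$, then every winning coalition is compensated up to $\varepsilon$. Losing coalitions impose only $\sum_{i\in S}u_i\ge-\varepsilon$, which holds trivially. Hence $\vec u\in\widehat{\core}_\varepsilon$. Since $\varepsilon$ is optimal for a relaxation of LP~\eqref{eq:least-estimated-core-lp}, we get $\varepsilon\le\widehat\varepsilon$, and therefore $\varepsilon=\widehat\varepsilon$. Moreover, $\vec u$ minimizes $\|\cdot\|_2^2$ over a superset of $\widehat{\core}_{\widehat\varepsilon}$ while lying in $\widehat{\core}_{\widehat\varepsilon}$, so it is the egalitarian allocation. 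The oracle then reports no violation and the algorithm returns.

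Otherwise $S$ violates a constraint that $\vec u$ satisfies for every member of $\cS^t$, so $S\notin\cS^t$. Because $S$ is an MWC, it is one of the missing MWCs. This is the key step, and it relies on Step 1: without minimality the oracle could return non-minimal winning coalitions and the count would not be bounded by $\eta$.

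\textbf{Step 3: counting and runtime.} Each non-terminating call adds a distinct missing MWC, so there are at most $\eta$ such calls. After them $\cS$ contains all MWCs. By Lemma~\ref{lem:mwc} (all winning constraints are implied by MWC constraints, and losing ones are vacuous), the next restricted LP/QP solve is exact, and the next oracle call terminates. This gives at most $\eta+1$ oracle calls.

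Every LP/QP solved has $n+1$ variables and at most $k+\eta$ coalition constraints plus box and budget constraints. Each is therefore solvable in $\mathsf{poly}(n,k+\eta)$ time via interior point or ellipsoid methods for convex QPs, with $\eta+1$ such solves in total.

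The main obstacle is the minimality issue in Step 1. I would either state it as a convention or have the algorithm post-process the oracle's output. The post-processing greedily removes players with $u_i=0$ while the coalition stays winning; removals never increase the coalition's payment, so the result remains a minimizer. This requires $O(n)$ queries to $\widehat v$, which are not separation-oracle calls.
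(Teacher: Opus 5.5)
Your proof is correct and follows the same route as the paper. Each non-terminating call returns a missing MWC, so after at most $\eta$ additions the restricted LP/QP become exact by Lemma~\ref{lem:mwc}, which gives $\eta+1$ calls. Your Step~1 caveat is also well founded: the paper simply asserts that the oracle returns an MWC for any $\vec{u}$, which fails when zero-payment players pad a minimizer, and adversarial ties can then push the count above $\eta+1$. Your post-processing closes this gap. One step you leave implicit: at termination, removing a positive-payment player from a minimizer must also leave a losing coalition (otherwise the payment would drop), so by monotonicity the output really is an MWC.
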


These results suggest that effective seeding strategies for binary rewards should aim to approximate the set of MWCs, as the number of CG iterations is governed by how many MWCs are missing.
Motivated by this, we experiment with several different coalition seeding strategies in~\Cref{sec:experiments}.

\section{Experiments}
\label{sec:experiments}

We evaluate our CG methods on a semi-synthetic webpage retrieval benchmark, using both the BrowseComp-Plus dataset~\cite{chen2025browsecomp} and a dataset of our own creation. 
We use Gemini~3 Flash~\cite{gemini3flash_modelcard_2025} as the value function estimate (via LLM-as-a-Judge grading), and Gemini~3 Pro~\cite{gemini3pro_modelcard_2025} for generating our synthetic dataset. 
We test both egalitarian and non-egalitarian variants of our CG algorithms and find that they substantially outperform the sampling baseline of \citet{yan2021if} in terms of the number of LLM calls required.\looseness-1

\subsection{Datasets}
\label{sec:datasets}

The datasets used in our experiments consists of question/answer pairs, each associated with a set of $n \in \{10, 32\}$ documents that serve as the creators in our coalitional game.

\textbf{BrowseComp-Plus~\cite{chen2025browsecomp}.}
This dataset assesses multi-hop question answering in a web retrieval context. Each query is paired with \emph{gold} documents (individually sufficient to answer), \emph{evidence} documents (each containing partial facts that yield the answer in aggregate), and \emph{negative} documents (irrelevant or containing distracting information). When queried individually, we found that only 64\% of gold documents lead to the correct answer, possibly due to our prompting strategy. 
However through manual inspection
we observed several label quality issues: gold documents are repeated verbatim in the evidence set, some negative documents contain query-relevant facts, and some gold documents do not actually contain the answer. A detailed example is provided in Appendix~\ref{app:browsecomp-details}.

\textbf{Synthetic.}
Given these issues with BrowseComp-Plus, we also create our own multi-hop Q/A dataset to ensure the existence of coalitions with genuine interaction effects. We prompt Gemini~3 Pro with high thinking budget to propose questions that are answerable only through a combination of 2--10 evidence documents, or a single gold document. We filter for topical uniqueness and draw negative documents from sufficiently unrelated questions. By construction, 100\% of standalone gold documents lead to the correct answer.\looseness-1

\begin{wrapfigure}[21]{r}{0.49\textwidth}
    \centering
    \includegraphics[height=7.9cm]{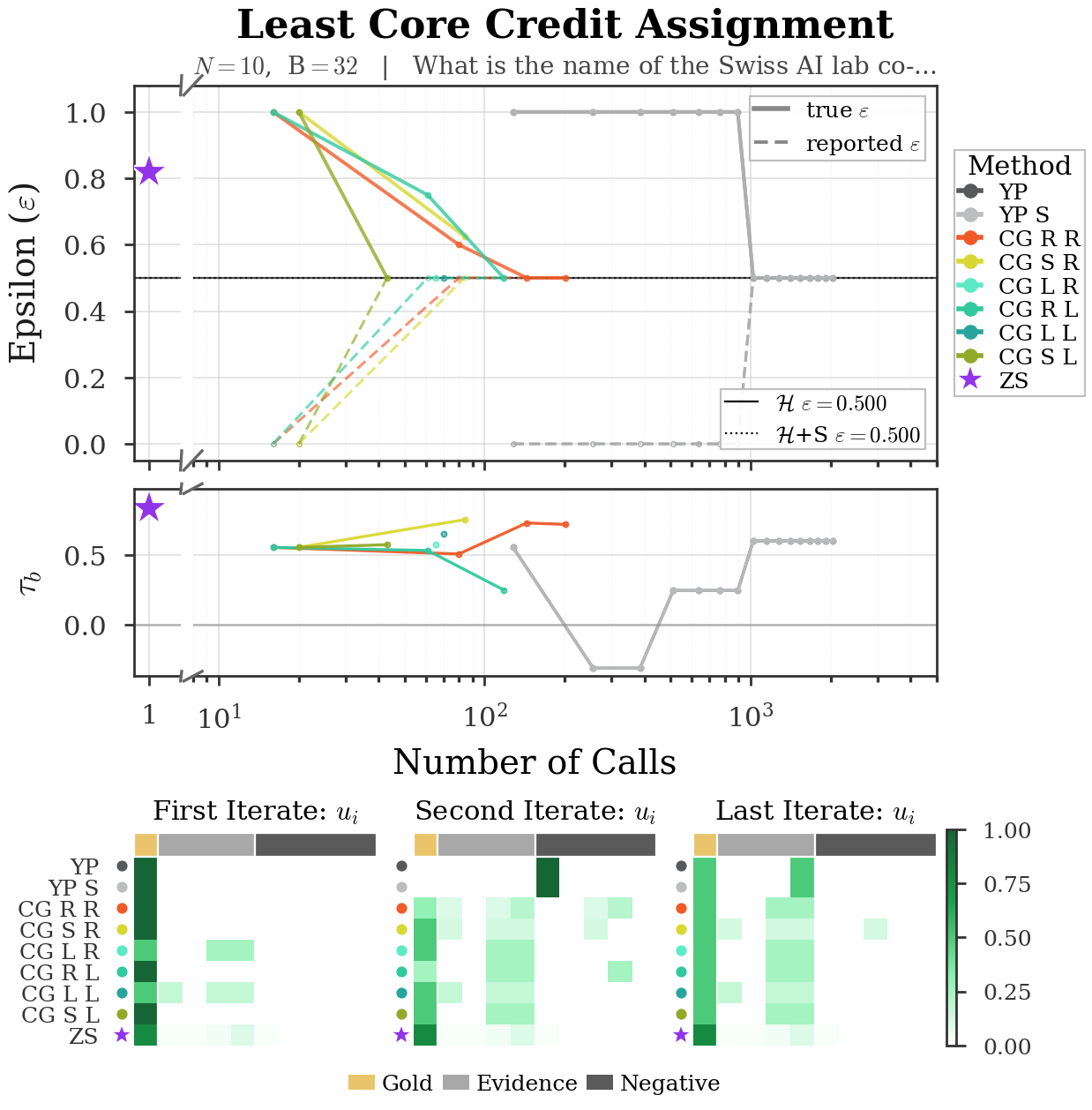}
    \caption{Convergence at tractable scale: Synthetic, $n{=}10$ (egal.). \textbf{CG} matches $\widehat{\varepsilon}_{\mathcal{H}}$ in ${\sim}30$ calls.\looseness-1}
    \label{fig:trace-synth10}
    \vspace{-15pt}
\end{wrapfigure}

\subsection{Methods}
\label{sec:methods}

We compare three families of algorithms: random sampling (\textbf{YP}; Algorithm~\ref{alg:yp}), our constraint generation approach (\textbf{CG}; Algorithm~\ref{alg:cg-qp}), and a zero-shot (\textbf{ZS}) method in which we ask the LLM to directly output a least core utility vector in a single query. We assign binary rewards $\widehat{v}(S) \in \{0,1\}$ to each coalition $S$ by prompting the model to answer the question using only the documents in $S$ (instructing it to ignore world knowledge), then grading the response for correctness with a second call. Each evaluation thus requires two LLM calls.\looseness-1

\textbf{Evaluation metrics.}
A notable difficulty is the lack of a ground-truth $(\varepsilon^*, \vec{u}^*)$ to compare to. Even restricting to small document sets where we could solve the full LP, intrinsic noise in LLM responses prevents exact evaluation. We therefore construct a holdout set $\mathcal{H}$ of $2^{12}$ coalitions sampled uniformly at random, and for each method's utility vector $\vec{u}$ we compute the \emph{holdout} $\varepsilon$, $\widehat{\varepsilon}_{\mathcal{H}}(\vec{u})$:

\begin{definition}
$\widehat{\varepsilon}_{\mathcal{H}}(\vec{u})$ measures the out-of-sample quality of a utility vector $\vec{u}$ against held-out coalitions $\mathcal{H}$. Due to noise in LLM responses (e.g., a single mistake where the model uses world knowledge yields $\varepsilon = 1$), we use the 99th-percentile undercompensation rather than the maximum: $\widehat{\varepsilon}_{\mathcal{H}}(\vec{u}) = Q_{0.99}\!\left(\left\{ \widehat{v}(S) - \textstyle\sum_{i \in S} u_i \;:\; S \in \mathcal{H} \right\}\right).$%
\end{definition}%
Lower values of $\widehat{\varepsilon}_{\mathcal{H}}(\vec{u})$ indicate a more stable allocation, as fewer coalitions are substantially under-compensated.
We additionally report Kendall's $\tau_b$ rank correlation with document-type labels (Gold ${>}$ Evidence ${>}$ Negative), gold false-negative and negative false-positive rates, \% nonzero utility entries (measuring egalitarianness), and the total number of LLM calls before convergence---the primary axis on which our methods improve over prior work.

\textbf{Algorithm configurations.}
We test three coalition seedings for the initial constraint set: random coalitions (\textbf{R}), singleton coalitions (\textbf{S}), and LLM-proposed coalitions (\textbf{L}), where the LLM proposes the minimal winning coalitions it believes are sufficient to answer the question. For the separation oracle, we use batches of $2^6$ coalitions per round: the random oracle (\textbf{R}) samples uniformly, while the LLM oracle (\textbf{L}) proposes coalitions it suspects are under-compensated. As baselines, we evaluate \textbf{YP} (random sampling + LP solve), \textbf{YP\,S} (with singletons), and \textbf{ZS} (single-call zero-shot).

\begin{wrapfigure}[12]{r}{0.35\textwidth}
    \centering
    \vspace{-14pt}
    \includegraphics[height=4cm]{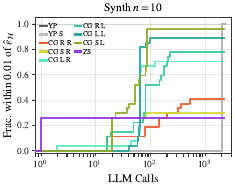}
    \vspace*{-4pt}
    \caption{Calls-to-target of $\widehat{\varepsilon}_{\mathcal{H}} \pm .01$. More in App.~\ref{app:cdf-plots}.}
    \label{fig:cdf-synth10}
\end{wrapfigure}

\subsection{Results}
\label{sec:results}

We set a maximum budget of $B = 2^{12}$ checked coalitions per method. \textbf{YP} exhausts its full budget, then solves the LP; \textbf{CG} methods terminate early when no violated constraint is found. All experiments use batches of 64 coalitions per \textbf{CG} round. We repeat across 32 problem instances per dataset. Aggregate results are summarized in Figure~\ref{fig:forest}.

\begin{figure*}[t]
    \centering
    \includegraphics[width=\textwidth,trim=0 0 0 30,clip]{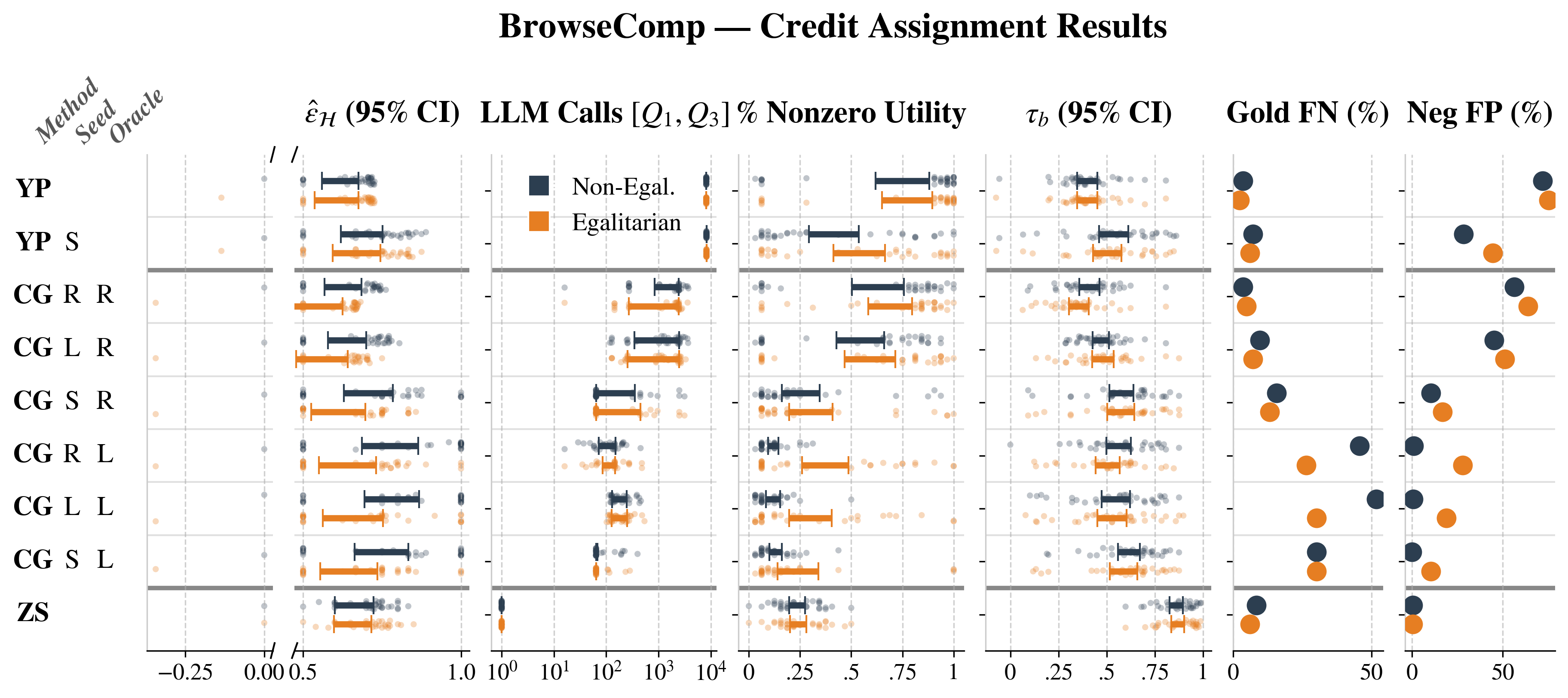}

    \caption{Credit assignment results on BrowseComp-Plus ($n{=}32$) for non-egalitarian (dark) and egalitarian (light) settings. Bars show 95\% bootstrap CIs of the sample means; points are individual problem instances. See Figure~\ref{fig:forest-synth} for the analogous Synthetic results.}
    \label{fig:forest}
\end{figure*}

\looseness=-1
\textbf{Aggregate findings.}
Across both datasets, \textbf{CG} methods with a random oracle achieve comparable or better holdout $\varepsilon$ than \textbf{YP} while using 3--17$\times$ fewer calls. On BrowseComp-Plus, \textbf{CG(R,R)} achieves $\widehat{\varepsilon}_{\mathcal{H}} \approx 0.64$ (matching \textbf{YP}) at median ${\sim}2200$ calls vs.\ 8192. On Synthetic, \textbf{CG(R,R)} achieves $\widehat{\varepsilon}_{\mathcal{H}} \approx 0.47$ vs.\ 0.48 for \textbf{YP}, with similar call savings. LLM-oracle \textbf{CG} variants converge fastest (median 64--145 calls) but at moderately higher $\varepsilon$ (0.51--0.80). 
Singleton seeding (\textbf{S}) is particularly effective for rapid convergence, establishing each document's standalone value in the first round.

An important pattern emerges in the subsidiary metrics. LLM-oracle \textbf{CG} methods produce sparser allocations (\% nonzero $\approx$ 12--25\%), concentrating utility on fewer documents. This yields low negative FP rates ($\leq$2\%) but higher gold FN rates (up to 47\%). The sparsity arises because the LLM oracle targets coalitions it expects to be high-value, so the LP only sees constraints involving genuinely useful documents and concentrates credit accordingly. By contrast, \textbf{YP}'s random sampling inevitably includes coalitions where negative documents appear alongside gold or evidence documents. 
When such coalitions happen to produce correct answers, credit leaks to the negatives. This explains \textbf{YP}'s broad-but-imprecise allocations (\% nonzero $\approx$ 77\%, negative FP 72\%, gold FN 3.6\%). The egalitarian QP objective successfully redistributes credit, roughly doubling \% nonzero for \textbf{CG} methods and reducing gold FN rates at the cost of increased negative FP, without materially degrading $\varepsilon$.

\textbf{Individual convergence traces.}
To understand the optimization dynamics, we present representative traces in Figure~\ref{fig:traces}. We use \textbf{YP}'s final holdout $\widehat{\varepsilon}_{\mathcal{H}}$ as a reference point for each instance, since \textbf{YP} exhausts its full budget and provides the strongest baseline.
In the Synthetic $n{=}32$ instance (Figure~\ref{fig:traces}, left), \textbf{YP} converges to $\widehat{\varepsilon}_{\mathcal{H}} = 0.65$ after all 8{,}192 calls. \textbf{CG(R,R)} matches this baseline, reaching $\widehat{\varepsilon}_{\mathcal{H}} \approx 0.65$ (within 0.01) in only ${\sim}1{,}900$ calls, a 4$\times$ reduction. The LLM-oracle variants \textbf{CG(L,R)} and \textbf{CG(L,S)} converge even faster at 145 and 119 calls respectively, though they overshoot the baseline by ${\sim}0.10$ ($\widehat{\varepsilon}_{\mathcal{H}} \approx 0.75$). \textbf{ZS} overshoots further ($\widehat{\varepsilon}_{\mathcal{H}} \approx 0.80$, +0.15 above \textbf{YP}). The reported $\varepsilon$ (dashed) is biased downward since it reflects only constraints seen so far; we therefore track both.

\looseness=-1
In the BrowseComp $n{=}32$ instance (Figure~\ref{fig:traces}, right), \textbf{YP} reaches $\widehat{\varepsilon}_{\mathcal{H}} = 0.62$ at 6{,}784 calls. \textbf{CG(R,R)} essentially matches it ($\widehat{\varepsilon}_{\mathcal{H}} \approx 0.63$) in only ${\sim}1{,}000$ calls---an 8$\times$ reduction. Remarkably, the LLM-oracle variants \textbf{CG(L,ZS)} and \textbf{CG(L,L)} \emph{undershoot} the baseline, achieving $\widehat{\varepsilon}_{\mathcal{H}} = 0.50$ (0.12 below \textbf{YP}) in just 104--128 calls. Perhaps our strongest evidence comes from the tractable $n{=}10$ scale (Figure~\ref{fig:trace-synth10}; full results in Appendix~\ref{app:n10}), where the coalition space is small enough for \textbf{YP} to be effectively exact: \textbf{CG} recovers the holdout optimum ($\widehat{\varepsilon}_{\mathcal{H}} = 0.5$) in just 30--60 calls, while \textbf{YP} requires $\approx2^{10}$ calls to reach the same point and \textbf{ZS} overshoots at $\widehat{\varepsilon}_{\mathcal{H}} \approx 0.82$. Aggregating across all $n{=}10$ instances, Figure~\ref{fig:cdf-synth10} confirms that \textbf{CG(L,L)} reaches within 0.01 of $\widehat{\varepsilon}_{\mathcal{H}}$ for ${>}90\%$ of instances in under 100 calls.

\begin{figure*}[t]
    \centering
    \begin{subfigure}[t]{0.44\textwidth}
        \includegraphics[height=7.9cm,trim=0 0 30 10,clip]{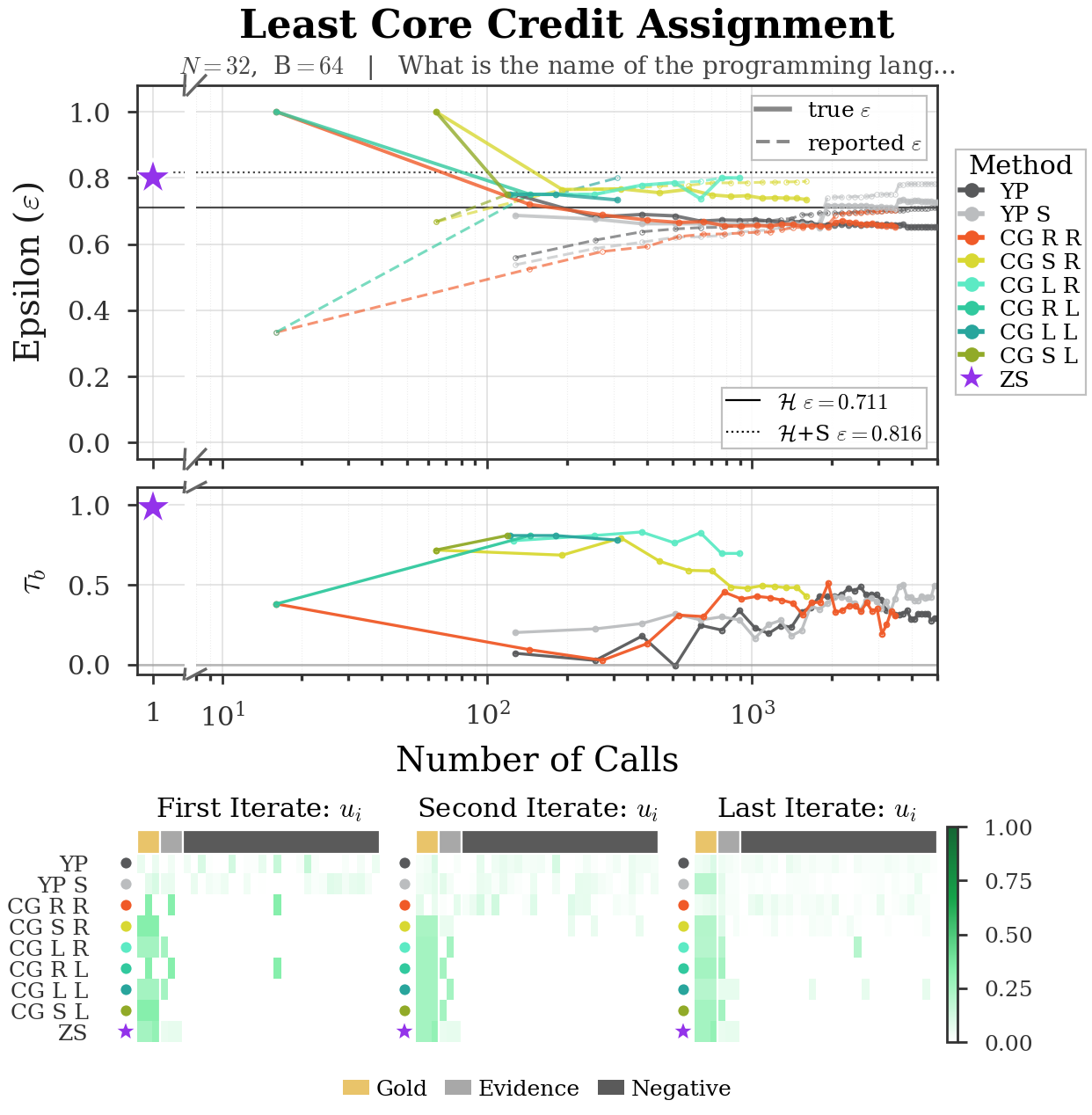}
        \caption{Synthetic, $n{=}32$}
        \label{fig:trace-synth32}
    \end{subfigure}\hfill%
    \begin{subfigure}[t]{0.52\textwidth}%
        \includegraphics[height=7.9cm,trim=0 0 0 10,clip]{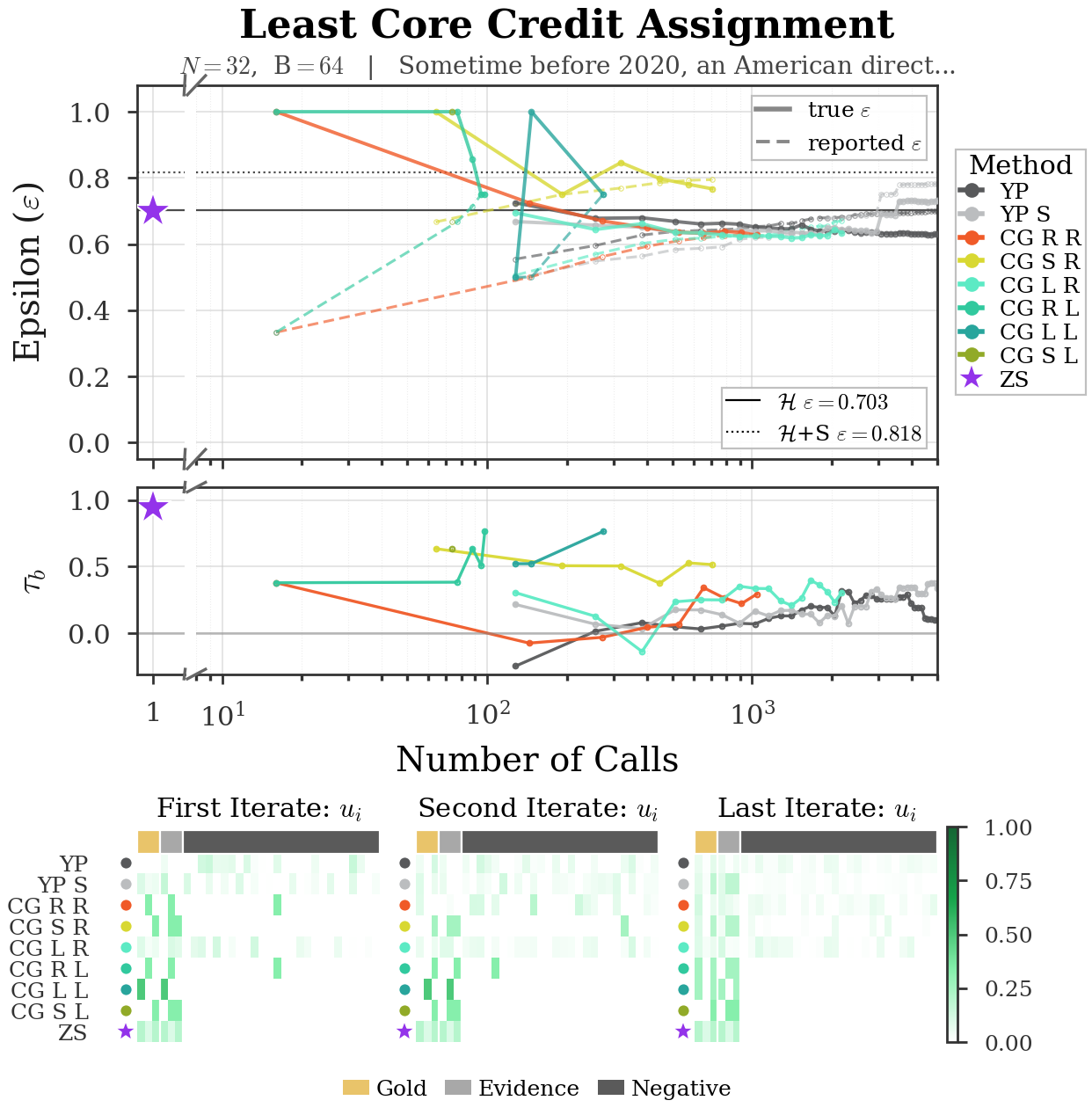}
        \caption{BrowseComp, $n{=}32$}
        \label{fig:trace-bc32}
    \end{subfigure}
    \caption{Convergence traces for representative $n{=}32$ instances. \textbf{Top panels:} holdout $\widehat{\varepsilon}_{\mathcal{H}}$ (solid) and reported $\varepsilon$ (dashed), with $\tau_b$ below. \textbf{Bottom:} credit heatmaps at the first, second, and last iterates.}
    \label{fig:traces}
\end{figure*}

\textbf{Constraint generation vs.\ zero-shot.}
\textbf{ZS} is a surprisingly strong baseline: 
a single LLM call with no value-function feedback achieves mean $\widehat{\varepsilon}_{\mathcal{H}}$ within ${\sim}20\%$ of \textbf{YP} on BrowseComp-Plus and ${\sim}10\%$ on Synthetic, with high $\tau_b$. 
While this level of accuracy may suffice in some low-stakes settings, \textbf{CG} is useful when tighter $\varepsilon$ or formal guarantees are needed: 
with a random separation oracle, \textbf{CG} inherits the probabilistic guarantees of Theorem~\ref{thm:cg}, and matches or improves upon \textbf{YP}'s $\widehat{\varepsilon}_{\mathcal{H}}$ using 3--17$\times$ fewer calls. 
In most settings, we recommend using LLM-seeded \textbf{CG} (\textbf{CG(L,$\cdot$)}), which uses the LLM's guess as a warm start and iteratively refines it, combining the speed of \textbf{ZS} with the iterative tightening of \textbf{CG}.\looseness-1

\textbf{Summary.}
Our \textbf{CG} methods achieve comparable or better holdout $\varepsilon$ relative to \textbf{YP} while using 1--2 orders of magnitude fewer value-function queries. The egalitarian variants successfully redistribute credit more broadly with only minor impacts on $\varepsilon$. Additional convergence traces, $n{=}10$ experiments, and detailed dataset statistics are provided in Appendix~\ref{app:experiments}.

\section{Conclusions and future research}\label{sec:conc}

We study the problem of in-context credit assignment for AI-generated content and formalize it as a cooperative game between creators whose contributions appear in a model’s context window. 
Our credit assignment scheme is based on the (egalitarian) least core, which minimizes the maximum under-compensation across all coalitions. 
We establish sensitivity guarantees that relate deviations in the least-core allocation to structured errors in the estimated reward function. 
On the algorithmic side, we develop scalable methods based on separation oracles, including both sampling-based and LLM-based variants, and demonstrate empirically that constraint generation can approximate least-core allocations with orders of magnitude fewer value function queries compared to other approaches.  
More broadly, our work highlights the importance of integrating economic incentives into the design of generative AI platforms, especially as questions of credit attribution and compensation become increasingly central.\looseness-1

There are several exciting directions for future work. 
\citet{nguyen2015fairest} suggests the core allocation that is closest in Euclidean distance to the Shapley allocation as an alternative to the egalitarian least core. 
It would be interesting to explore this core variant in our setting, although efficient estimation of the Shapley value is a hurdle to making this approach practical. 
Second, we assume black-box access to a value function estimate and bound the sensitivity in least core allocations, but improving value function estimation for core computation is also an important direction which we do not consider. 
Finally, extending our methods to real world AI deployments will require careful integration with platform design.\looseness-1 

\section*{Acknowledgements}
KH was supported in part by the Simons Institute for the Theory of Computing, and part of this work was conducted when he was visiting the Institute.

\bibliographystyle{plainnat}
\bibliography{main/refs}

@article{shapley1971cores,
  title={Cores of convex games},
  author={Shapley, Lloyd S},
  journal={International Journal of Game Theory},
  volume={1},
  number={1},
  pages={11--26},
  year={1971},
  publisher={Springer}
}

@inproceedings{yan2021if,
  title={If You Like {S}hapley Then You’ll Love the Core},
  author={Yan, Tom and Procaccia, Ariel D},
  booktitle={Proceedings of the AAAI Conference on Artificial Intelligence},
  volume={35},
  pages={5751--5759},
  year={2021}
}

@article{balkanski2017statistical,
  title={Statistical cost sharing},
  author={Balkanski, Eric and Syed, Umar and Vassilvitskii, Sergei},
  journal={Advances in Neural Information Processing Systems},
  volume={30},
  year={2017}
}

@inproceedings{gemp2024approximating,
  title={Approximating the Core via Iterative Coalition Sampling},
  author={Gemp, Ian and Lanctot, Marc and Marris, Luke and Mao, Yiran and Du{\'e}{\~n}ez-Guzm{\'a}n, Edgar and Perrin, Sarah and Gyorgy, Andras and Elie, Romuald and Piliouras, Georgios and Kaisers, Michael and others},
  booktitle={Proceedings of the 23rd International Conference on Autonomous Agents and Multiagent Systems},
  pages={669--678},
  year={2024}
}

@article{day2012quadratic,
  title={Quadratic core-selecting payment rules for combinatorial auctions},
  author={Day, Robert W and Cramton, Peter},
  journal={Operations Research},
  volume={60},
  number={3},
  pages={588--603},
  year={2012},
  publisher={INFORMS}
}

@article{grotschel1981ellipsoid,
  title={The ellipsoid method and its consequences in combinatorial optimization},
  author={Gr{\"o}tschel, Martin and Lov{\'a}sz, L{\'a}szl{\'o} and Schrijver, Alexander},
  journal={Combinatorica},
  volume={1},
  number={2},
  pages={169--197},
  year={1981},
  publisher={Springer}
}

@article{nguyen2015fairest,
  title={The fairest core in cooperative games with transferable utilities},
  author={Nguyen, Tri-Dung},
  journal={Operations Research Letters},
  volume={43},
  number={1},
  pages={34--39},
  year={2015},
  publisher={Elsevier}
}

@inproceedings{prasad2026weakest,
  title={Weakest bidder types and new core-selecting combinatorial auctions},
  author={Prasad, Siddharth and Balcan, Maria-Florina and Sandholm, Tuomas},
  booktitle={Proceedings of the AAAI Conference on Artificial Intelligence},
  volume={40},
  pages={17206--17214},
  year={2026}
}

@article{nguyen2016finding,
  title={Finding the nucleoli of large cooperative games},
  author={Nguyen, Tri-Dung and Thomas, Lyn},
  journal={European Journal of Operational Research},
  volume={248},
  number={3},
  pages={1078--1092},
  year={2016},
  publisher={Elsevier}
}

@article{wolsey1976nucleolus,
  title={The nucleolus and kernel for simple games or special valid inequalities for 0--1 linear integer programs},
  author={Wolsey, LA},
  journal={International Journal of Game Theory},
  volume={5},
  number={4},
  pages={227--238},
  year={1976},
  publisher={Springer}
}

@article{day2007fair,
  title={Fair payments for efficient allocations in public sector combinatorial auctions},
  author={Day, Robert W and Raghavan, Subramanian},
  journal={Management science},
  volume={53},
  number={9},
  pages={1389--1406},
  year={2007},
  publisher={INFORMS}
}

@article{chen2025browsecomp,
  title={Browse{C}omp-{P}lus: A more fair and transparent evaluation benchmark of deep-research agent},
  author={Chen, Zijian and Ma, Xueguang and Zhuang, Shengyao and Nie, Ping and Zou, Kai and Liu, Andrew and Green, Joshua and Patel, Kshama and Meng, Ruoxi and Su, Mingyi and others},
  journal={arXiv preprint arXiv:2508.06600},
  year={2025}
}

@article{shapley1953value,
  title={A value for n-person games},
  author={Shapley, Lloyd S},
  year={1953},
  publisher={Princeton University Press Princeton},
  journal={Contributions to the Theory of Games II}
}

@article{gillies1959solutions,
  title={Solutions to general non-zero-sum games},
  author={Gillies, Donald B},
  journal={Contributions to the Theory of Games},
  volume={4},
  number={40},
  pages={47--85},
  year={1959}
}

@inproceedings{balcan2015learning,
  title={Learning cooperative games},
  author={Balcan, Maria-Florina and Procaccia, Ariel D and Zick, Yair},
  booktitle={Proceedings of the 24th International Joint Conference on Artificial Intelligence},
  pages={475--481},
  year={2015}
}

@article{lundberg2017unified,
  title={A unified approach to interpreting model predictions},
  author={Lundberg, Scott M and Lee, Su-In},
  journal={Advances in neural information processing systems},
  volume={30},
  year={2017}
}

@inproceedings{ghorbani2019data,
  title={Data {S}hapley: Equitable valuation of data for machine learning},
  author={Ghorbani, Amirata and Zou, James},
  booktitle={International Conference on Machine Learning},
  pages={2242--2251},
  year={2019},
  organization={PMLR}
}

@book{desaulniers2006column,
  title={Column Generation},
  author={Desaulniers, Guy and Desrosiers, Jacques and Solomon, Marius M},
  volume={5},
  year={2006},
  publisher={Springer Science \& Business Media}
}

@misc{pyDVL_2025, title={py{DVL}: The python library for data valuation}, url={https://pydvl.org/stable/},author={TransferLab Team}, year={2025}, month={Mar}}

@article{kovalenkov2001exact,
  title={An exact bound on epsilon for nonemptiness of epsilon cores of games},
  author={Kovalenkov, Alexander and Wooders, Myrna Holtz},
  journal={Mathematics of Operations Research},
  volume={26},
  number={4},
  pages={654--678},
  year={2001},
  publisher={INFORMS}
}

@article{shapley1967balanced,
  title={On balanced sets and cores},
  author={Shapley, Lloyd S},
  journal={Naval Research Logistics Quarterly},
  volume={14},
  number={4},
  pages={453--460},
  year={1967},
  publisher={John Wiley \& Sons}
}

@inproceedings{jia2019towards,
  title={Towards efficient data valuation based on the {S}hapley value},
  author={Jia, Ruoxi and Dao, David and Wang, Boxin and Hubis, Frances Ann and Hynes, Nick and G{\"u}rel, Nezihe Merve and Li, Bo and Zhang, Ce and Song, Dawn and Spanos, Costas J},
  booktitle={The 22nd International Conference on Artificial Intelligence and Statistics},
  pages={1167--1176},
  year={2019},
  organization={PMLR}
}

@article{wang2024data,
  title={Data {S}hapley in one training run},
  author={Wang, Jiachen T and Mittal, Prateek and Song, Dawn and Jia, Ruoxi},
  journal={arXiv preprint arXiv:2406.11011},
  year={2024}
}

@article{cohen2024contextcite,
  title={Contextcite: Attributing model generation to context},
  author={Cohen-Wang, Benjamin and Shah, Harshay and Georgiev, Kristian and Madry, Aleksander},
  journal={Advances in Neural Information Processing Systems},
  volume={37},
  pages={95764--95807},
  year={2024}
}

@inproceedings{anderson2025making,
  title={Making {AI}-enhanced videos: Analyzing generative {AI} use cases in {Y}ou{T}ube content creation},
  author={Anderson, Torin and Niu, Shuo},
  booktitle={Proceedings of the Extended Abstracts of the CHI Conference on Human Factors in Computing Systems},
  pages={1--7},
  year={2025}
}

@article{kanna2024much,
  title={How much research is being written by large language models},
  author={Kanna, P},
  journal={Human-Centered Artificial Intelligence. Stanford University. Retrieved August},
  volume={16},
  year={2024}
}

@book{grotschel1988geometric,
  title={Geometric algorithms and combinatorial optimization},
  author={Gr{\"o}tschel, Martin and Lov{\'a}sz, L{\'a}szl{\'o} and Schrijver, Alexander},
  year={1988},
  publisher={Springer}
}

@article{kelley1960cutting,
  title={The cutting-plane method for solving convex programs},
  author={Kelley, Jr, James E},
  journal={Journal of the society for Industrial and Applied Mathematics},
  volume={8},
  number={4},
  pages={703--712},
  year={1960},
  publisher={SIAM}
}

@article{liu2024attribot,
  title={Attribot: A bag of tricks for efficiently approximating leave-one-out context attribution},
  author={Liu, Fengyuan and Kandpal, Nikhil and Raffel, Colin},
  journal={arXiv preprint arXiv:2411.15102},
  year={2024}
}

@inproceedings{qi2024model,
  title={Model internals-based answer attribution for trustworthy retrieval-augmented generation},
  author={Qi, Jirui and Sarti, Gabriele and Fern{\'a}ndez, Raquel and Bisazza, Arianna},
  booktitle={Proceedings of the 2024 Conference on Empirical Methods in Natural Language Processing},
  pages={6037--6053},
  year={2024}
}

@inproceedings{wang2025tracllm,
  title={$\{$TracLLM$\}$: A Generic Framework for Attributing Long Context $\{$LLMs$\}$},
  author={Wang, Yanting and Zou, Wei and Geng, Runpeng and Jia, Jinyuan},
  booktitle={34th USENIX Security Symposium (USENIX Security 25)},
  pages={3845--3864},
  year={2025}
}

@inproceedings{hirsch2025laquer,
  title={Laquer: Localized attribution queries in content-grounded generation},
  author={Hirsch, Eran and Slobodkin, Aviv and Wan, David and Stengel-Eskin, Elias and Bansal, Mohit and Dagan, Ido},
  booktitle={Proceedings of the 63rd Annual Meeting of the Association for Computational Linguistics (Volume 1: Long Papers)},
  pages={15355--15370},
  year={2025}
}

@inproceedings{shen2025transparentize,
  title={Transparentize the internal and external knowledge utilization in LLMs with trustworthy citation},
  author={Shen, Jiajun and Zhou, Tong and Chen, Yubo and Qiu, Delai and Liu, Shengping and Liu, Kang and Zhao, Jun},
  booktitle={Findings of the Association for Computational Linguistics: ACL 2025},
  pages={17858--17877},
  year={2025}
}

@inproceedings{xiao2025tokenshapley,
  title={TokenShapley: Token Level Context Attribution with Shapley Value},
  author={Xiao, Yingtai and Zhu, Yuqing and Samyoun, Sirat and Zhang, Wanrong and Wang, Jiachen T and Du, Jian},
  booktitle={Findings of the Association for Computational Linguistics: ACL 2025},
  pages={3882--3894},
  year={2025}
}

@inproceedings{chang2025jopa,
  title={JOPA: Explaining Large Language Model’s Generation via Joint Prompt Attribution},
  author={Chang, Yurui and Cao, Bochuan and Wang, Yujia and Chen, Jinghui and Lin, Lu},
  booktitle={Proceedings of the 63rd Annual Meeting of the Association for Computational Linguistics (Volume 1: Long Papers)},
  pages={22106--22122},
  year={2025}
}

@article{li2025attributing,
  title={Attributing response to context: A jensen-shannon divergence driven mechanistic study of context attribution in retrieval-augmented generation},
  author={Li, Ruizhe and Chen, Chen and Hu, Yuchen and Gao, Yanjun and Wang, Xi and Yilmaz, Emine},
  journal={arXiv preprint arXiv:2505.16415},
  year={2025}
}

@article{zhang2025taught,
  title={Who taught the lie? responsibility attribution for poisoned knowledge in retrieval-augmented generation},
  author={Zhang, Baolei and Xin, Haoran and Chen, Yuxi and Liu, Zhuqing and Yi, Biao and Li, Tong and Nie, Lihai and Liu, Zheli and Fang, Minghong},
  journal={arXiv preprint arXiv:2509.13772},
  year={2025}
}

@article{nematov2025source,
  title={Source attribution in retrieval-augmented generation},
  author={Nematov, Ikhtiyor and Kalai, Tarik and Kuzmenko, Elizaveta and Fugagnoli, Gabriele and Sacharidis, Dimitris and Hose, Katja and Sagi, Tomer},
  journal={arXiv preprint arXiv:2507.04480},
  year={2025}
}

@article{pan2025context,
  title={Context Attribution with Multi-Armed Bandit Optimization},
  author={Pan, Deng and Murugesan, Keerthiram and Moniz, Nuno and Chawla, Nitesh},
  journal={arXiv preprint arXiv:2506.19977},
  year={2025}
}

@article{patel2025maxshapley,
  title={MaxShapley: Towards Incentive-compatible Generative Search with Fair Context Attribution},
  author={Patel, Sara and Zhou, Mingxun and Fanti, Giulia},
  journal={arXiv preprint arXiv:2512.05958},
  year={2025}
}

@article{de2025improving,
  title={Improving Context-Attribution with Semi-Supervised Cross-Encoders},
  author={De Grandisa, Luca and Granataa, Francesco Maria and Costaa, Davide and Lanzaa, Antonio and Oroa, Ermelinda},
  journal={European Conference on Artificial Intelligence (ECAI)},
  year={2025}
}

@techreport{gemini3pro_modelcard_2025,
  title        = {Gemini 3 Pro Model Card},
  author       = {{Google DeepMind}},
  institution  = {Google DeepMind},
  year         = {2025},
  month        = {November},
  url          = {https://storage.googleapis.com/deepmind-media/Model-Cards/Gemini-3-Pro-Model-Card.pdf},
  note         = {Model card},
}

@techreport{gemini3flash_modelcard_2025,
  title        = {Gemini 3 Flash Model Card},
  author       = {{Google DeepMind}},
  institution  = {Google DeepMind},
  year         = {2025},
  month        = {December},
  url          = {https://storage.googleapis.com/deepmind-media/Model-Cards/Gemini-3-Flash-Model-Card.pdf},
  note         = {Model card},
}

@article{zheng2023judging,
  title={Judging llm-as-a-judge with mt-bench and chatbot arena},
  author={Zheng, Lianmin and Chiang, Wei-Lin and Sheng, Ying and Zhuang, Siyuan and Wu, Zhanghao and Zhuang, Yonghao and Lin, Zi and Li, Zhuohan and Li, Dacheng and Xing, Eric and others},
  journal={Advances in neural information processing systems},
  volume={36},
  pages={46595--46623},
  year={2023}
}


\newpage
\appendix
\section{Discussion: Cooperative solution concepts}
The Shapley value and the core are two of the most canonical methods for determining payments in coalitional game settings. 
The Shapley value is notable for being the is the only allocation that simultaneously satisfies the following four properties: efficiency (the sum of Shapley values equals the total value), symmetry (equivalent agents receive equivalent values), linearity (if the value of every coalition is a linear combination of two value functions, then each player’s Shapley value is the same linear combination of their Shapley values under those functions), and null player (adding a dummy player does not change the Shapley values of others). 
However it is generally not an equilibrium credit allocation scheme.\looseness-1 

In contrast, the least core may be viewed as the cooperative analog of the Nash equilibrium, as under core payments the incentive of any subset of players to unilaterally deviate is minimized. 
The egalitarian least core also satisfies all Shapley properties, with the exception of linearity.
We choose to use the least core instead of the Shapley values as our solution concept, as the core better aligns with our goal of aligning creator incentives by preventing any subset of creators from being systematically under-compensated for their contribution to the AI-generated content item.\looseness-1 

\section{Appendix for Section~\ref{sec:sensitive}: Sensitivity of the least core}


\dbsb*
\begin{proof}
    By LP duality, we have
    \begin{equation*}
            \varepsilon^* = \max_{\vec{\lambda},\mu}\left\{\sum_{S\subseteq N}\lambda_Sv(S) - \mu : 
                \sum_{S\subseteq N} \lambda_S = 1,~\sum_{S\ni i}\lambda_S = \mu~\forall~ i\in N,~\lambda_S\ge 0~\forall~S\subseteq N\right\}
    \end{equation*} and
    \begin{equation*}
            \widehat{\varepsilon} = \max_{\vec{\lambda},\mu}\left\{\sum_{S\subseteq N}\lambda_S\widehat{v}(S) - \mu : 
                \sum_{S\subseteq N} \lambda_S = 1,~\sum_{S\ni i}\lambda_S = \mu~\forall~ i\in N,~\lambda_S\ge 0~\forall~S\subseteq N\right\}.
    \end{equation*}
    Let $\vec{\lambda}^*,\mu^*$ be a dual optimal solution for $\varepsilon^*$. 
    We have $$\varepsilon^* = \sum_{S\subseteq N}\lambda_S^*v(S) - \mu^*\quad\text{and}\quad\widehat{\varepsilon}\ge \sum_{S\subseteq N}\lambda_S^*\widehat{v}(S) - \mu^*.$$
    So, $$\varepsilon^* - \widehat{\varepsilon} \le \sum_{S\subseteq N}\lambda_S^*(v(S) - \widehat{v}(S)).$$ Symmetrically, considering a dual optimal solution $\widehat{\vec{\lambda}},\widehat{\mu}$ for $\widehat{\varepsilon}$, we get $$\widehat{\varepsilon}-\varepsilon^*\le\sum_{S\subseteq N}\widehat{\lambda}_S(\widehat{v}(S) - v(S)).$$
    Therefore $$\left|\varepsilon^* - \widehat{\varepsilon}\right|\le \max\left\{\sum_{S\subseteq N}\lambda_S^*(v(S) - \widehat{v}(S)), \sum_{S\subseteq N}\widehat{\lambda}_S(\widehat{v}(S) - v(S))\right\}\le \max_{\vec{\lambda}\in\Lambda}\sum_{S\subseteq N}\lambda_S|v(S) -\widehat{v}(S)|.$$
\end{proof}

\subsection{Supermodular value functions}\label{sec:super}

The general sensitivity bound developed above applies to arbitrary value functions, but it does not directly yield a constructive procedure for computing approximately stable allocations. 
In this subsection, we show that when the value function exhibits additional structure---specifically, supermodularity---it is possible to obtain approximate core allocations in closed form, without explicitly solving the least-core optimization problem.

Supermodularity corresponds to a setting with increasing marginal returns: adding a creator to a larger coalition provides at least as much incremental value as adding them to a smaller one.
Supermodularity implies that the core is non-empty in the exact case. 
We show that this structure can also be leveraged in the approximate setting.

\begin{restatable}{proposition}{supercore}\label{prop:supermodular-core}
    Suppose $v$ is supermodular, i.e. for all $ S, T\subseteq N$, $v( S)+v( T)\le v( S\cup T)+v( S\cap T)$. Let $S_i = \{1,\ldots, i\}$ and define the total variation distance along this chain by  
    $\textsc{TV}(v, \widehat{v}) = \frac{1}{2} \sum_{i=1}^N |v(S_i) - \widehat{v}(S_i)|$. 
    Finally, define the allocation $\widehat{\vec{u}} \in \mathbb{R}^N$ by the marginal increments
    $\widehat{u}_i = \widehat{v}(S_i) - \widehat{v}(S_{i-1})$. 
    Then, $\widehat{\vec{u}}\in\core_{4\textsc{TV}(v, \widehat{v})}$.   
\end{restatable}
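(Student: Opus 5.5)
The plan is to compare $\widehat{\vec{u}}$ with the \emph{true} marginal-contribution vector along the same chain, which is exactly in the core because $v$ is supermodular. Then I will control the gap between the two vectors coalition by coalition using telescoping differences of the errors along the chain.

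\emph{Step 1 (exact core for $v$).} Define $u_i = v(S_i) - v(S_{i-1})$, with $S_0=\emptyset$. This is the marginal vector for the ordering $1,\ldots,n$. By Shapley's theorem on convex games~\citep{shapley1971cores}, it lies in the core of the supermodular game $v$. The direct argument is short. Take $S=\{i_1<\cdots<i_k\}$ and apply supermodularity to the pair $S_{i_j-1}$ and $\{i_1,\ldots,i_j\}$. This gives $v(S_{i_j})-v(S_{i_j-1})\ge v(\{i_1,\ldots,i_j\})-v(\{i_1,\ldots,i_{j-1}\})$. Summing over $j$ gives $\sum_{i\in S}u_i\ge v(S)$. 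Efficiency $\sum_i u_i=v(N)=1$ holds by telescoping.

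\emph{Step 2 (efficiency of $\widehat{\vec{u}}$).} By telescoping, $\sum_{i\in N}\widehat{u}_i=\widehat{v}(N)-\widehat{v}(\emptyset)=1$.

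\emph{Step 3 (perturbation bound).} Set $d_i=\widehat{v}(S_i)-v(S_i)$, with $d_0=0$. Then $\widehat{u}_i-u_i=d_i-d_{i-1}$. For any $S\subseteq N$,
\[
\Bigl|\sum_{i\in S}(\widehat{u}_i-u_i)\Bigr|\le\sum_{i\in S}\bigl(|d_i|+|d_{i-1}|\bigr)\le 2\sum_{i=1}^n|d_i| = 4\,\textsc{TV}(v,\widehat{v}).
\]
The middle inequality holds because each $|d_j|$ appears at most twice: once as $|d_i|$ with $i=j$, and once as $|d_{i-1}|$ with $i=j+1$. Combining this with Step 1 gives $\sum_{i\in S}\widehat{u}_i\ge v(S)-4\,\textsc{TV}(v,\widehat{v})$ for every $S$. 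Together with Step 2, this is the required family of $\core_{4\textsc{TV}}$ constraints.

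\emph{Main obstacle.} The only delicate point is the box constraint $\widehat{\vec{u}}\in[0,1]^N$ built into the definition of $\core_\varepsilon$. If $\widehat{v}$ is not monotone along the chain, some $\widehat{u}_i$ can be negative. The proposition only places $\widehat{\vec{u}}$ in $\mathbb{R}^N$, so I would state explicitly that membership is meant with respect to the efficiency and coalition constraints. Alternatively, I would note that the box constraint holds automatically when $\widehat{v}$ is monotone along $S_0\subseteq S_1\subseteq\cdots\subseteq S_n$, since then $0\le \widehat{u}_i$ and $\sum_i \widehat{u}_i=1$ force $\widehat{u}_i\le 1$.
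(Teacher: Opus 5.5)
Your proof is correct and arrives at the same intermediate estimate as the paper, $\sum_{i\in S}\widehat{u}_i \ge v(S)-\sum_{i\in S}\bigl(|d_i|+|d_{i-1}|\bigr)\ge v(S)-4\textsc{TV}(v,\widehat v)$, but you organize it differently. The paper runs a single induction on $|T|$. With $i=\max T$, it applies supermodularity to $T$ and $S_{i-1}$, replaces $v(S_i)-v(S_{i-1})$ by $\widehat u_i$ at a cost of $|d_i|+|d_{i-1}|$, and recurses on $T\setminus\{i\}$. The estimation error is thus threaded through Shapley's convex-game argument. You instead split the proof into two independent pieces. The first is an exact statement: the true marginal vector lies in $\core_0$, which is the only place supermodularity is used. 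The second is a purely linear perturbation step: two marginal vectors along a common chain differ on every coalition by a telescoping sum of chain errors, for arbitrary $v,\widehat v$. The paper's version is self-contained. Yours is more modular, and it makes the constant transparent. If you group $S$ into maximal runs $\{a,\dots,b\}$ of consecutive indices, then $\sum_{i\in S}(d_i-d_{i-1})$ collapses to $\sum (d_b-d_{a-1})$ with all indices distinct. So the coalition constraints in fact hold with deficit $\sum_j|d_j|=2\textsc{TV}(v,\widehat v)$.

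Your remark about the box constraint is well taken, and the paper's proof does not address it. As literally stated, the proposition can fail. Take $n=3$, $v(S)=\mathbf{1}[S=N]$ (which is supermodular), $\widehat v(S_1)=1$ and $\widehat v(S_2)=0$. Then $\widehat u_2=-1$, so $\widehat{\vec u}\notin[0,1]^N$ and hence $\widehat{\vec u}\notin\core_\varepsilon$ for every $\varepsilon$. One of your two fixes is genuinely needed: read membership as efficiency plus coalition constraints, or assume $\widehat v$ is monotone along the chain.
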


\begin{proof}
    The proof is a direct generalization of the proof that a supermodular value function induces a nonempty core~\citep{shapley1971cores}.
    By definition, $\sum_{j\in N}\widehat{u}_j = \widehat{v}( N) = 1$. 
    Let $i\in N$ be minimal such that $ T\subseteq\{1,\ldots, i\}$. 
    We show that, for all $ T\subseteq N$, $\sum_{j\in T}\widehat{u}_j\ge v( T) - \sum_{j =1}^i |v(S_j) - \widehat{v}(S_j)| - \sum_{j =1}^{i-1} |v(S_j) - \widehat{v}(S_j)|$ by induction on $| T|$ (which implies $\widehat{\vec{u}}\in\core_{4\textsc{TV}(v, \widehat{v})}$). 
    The base case of $ T=\emptyset$ holds trivially. Consider now a general $ T$. 
    We have 
    $$\begin{aligned}v( T)&\le v( T\cup S_{i-1}) + v( T\cap S_{i-1}) - v(S_{i-1}) \\ 
    &=v(S_i) + v( T\setminus\{i\}) - v(S_{i-1}) \\ 
    &\le \widehat{v}(S_i) - \widehat{v}(S_{i-1}) + v( T\setminus\{i\}) + |v(S_i) - \widehat{v}(S_i)| + |v(S_{i-1}) - \widehat{v}(S_{i-1})|\\ 
    &= \widehat{u}_i + v( T\setminus\{i\}) + |v(S_i) - \widehat{v}(S_i)| + |v(S_{i-1}) - \widehat{v}(S_{i-1})|\\
    &\le\sum_{j\in T} \widehat{u}_j + \sum_{j =1}^i |v(S_j) - \widehat{v}(S_j)| + \sum_{j =1}^{i-1} |v(S_j) - \widehat{v}(S_j)|,
    \end{aligned}$$ as desired, where the final inequality is due to the inductive hypothesis.
\end{proof}

This result shows that a simple allocation constructed from successive marginal contributions under $\widehat{v}$ is guaranteed to be approximately stable with respect to the true value function $v$, with approximation quality controlled by how well $\widehat{v}$ matches $v$ along the chain.

Next, we consider the complementary setting in which the estimated value function $\widehat{v}$ is supermodular.

\begin{restatable}{proposition}{superestcore}\label{prop:supermodular-estimated-core}
        Suppose $\widehat{v}$ is supermodular. Define $\delta$ and $\widehat{u}_i$ as in Proposition~\ref{prop:supermodular-core}. Then, $\widehat{u}\in\core_{\delta}$.
\end{restatable}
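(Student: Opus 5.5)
The plan is to reduce the statement to the classical fact that a supermodular game has a nonempty core, applied to $\widehat{v}$ itself rather than to $v$. Once we know that $\widehat{\vec{u}}$ lies in the exact core of $\widehat{v}$, the uniform bound $\delta=\max_{S\subseteq N}|v(S)-\widehat{v}(S)|$ transfers the constraints to $v$ at a cost of exactly $\delta$. This avoids the chain-wise error accumulation that produced the factor $4\textsc{TV}$ in Proposition~\ref{prop:supermodular-core}.

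\textbf{Step 1: $\widehat{\vec{u}}$ is a valid value-sharing scheme.} Telescoping gives $\sum_{i\in N}\widehat{u}_i=\widehat{v}(N)-\widehat{v}(\emptyset)=1$. For nonnegativity, apply supermodularity of $\widehat{v}$ to the sets $S_{i-1}$ and $\{i\}$, which are disjoint:
\[
\widehat{v}(S_i)-\widehat{v}(S_{i-1})\ \ge\ \widehat{v}(\{i\})-\widehat{v}(\emptyset)\ \ge\ 0,
\]
using $\widehat{v}\ge 0$ and $\widehat{v}(\emptyset)=0$. Since the nonnegative entries sum to $1$, each $\widehat{u}_i\in[0,1]$.

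\textbf{Step 2: $\widehat{\vec{u}}$ is in the exact core of $\widehat{v}$.} We rerun the induction from the proof of Proposition~\ref{prop:supermodular-core} with $v$ replaced by $\widehat{v}$. Because the chain is now evaluated under the same function, all error terms vanish. Concretely, take $T\neq\emptyset$ and let $i$ be its largest element, so $T\subseteq S_i$ and $i\in T$. Supermodularity applied to $T$ and $S_{i-1}$ gives
\[
\widehat{v}(T)\ \le\ \widehat{v}(T\cup S_{i-1})+\widehat{v}(T\cap S_{i-1})-\widehat{v}(S_{i-1})\ =\ \widehat{u}_i+\widehat{v}(T\setminus\{i\}).
\]
The inductive hypothesis bounds $\widehat{v}(T\setminus\{i\})$ by $\sum_{j\in T\setminus\{i\}}\widehat{u}_j$. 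Hence $\sum_{j\in T}\widehat{u}_j\ge\widehat{v}(T)$ for every $T\subseteq N$.

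\textbf{Step 3: transfer to $v$.} For every $T\subseteq N$,
\[
\sum_{j\in T}\widehat{u}_j\ \ge\ \widehat{v}(T)\ \ge\ v(T)-\delta .
\]
Together with Step 1, this gives $\widehat{\vec{u}}\in\core_{\delta}$.

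I expect no serious obstacle. The only points needing care are two small ones. First, the set identities $T\cup S_{i-1}=S_i$ and $T\cap S_{i-1}=T\setminus\{i\}$ both require choosing $i$ as the maximal element of $T$. Second, the $[0,1]$ box constraint on $\widehat{\vec{u}}$ relies on nonnegativity of $\widehat{v}$ to obtain monotonicity from supermodularity.
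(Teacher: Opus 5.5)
Your proof is correct and takes the same route as the paper. The paper notes that supermodularity of $\widehat{v}$ places the marginal vector $\widehat{\vec{u}}$ in $\widehat{\core}_0$ (Shapley's classical argument, which your Step~2 reproduces), then applies Corollary~\ref{cor:core-containment} with $\widehat{\varepsilon}=0$; that is exactly your Step~3 transfer via $\delta=\max_{S\subseteq N}|v(S)-\widehat{v}(S)|$, and your Step~1 just makes explicit the normalization and box constraints the paper leaves implicit.
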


\begin{proof}
    Supermodularity of $\widehat{v}$ implies supermodularity of $\widehat{v}$ which implies $\widehat{\core}_0\neq\emptyset$ with $\widehat{\vec{u}}\in\widehat{\core}_{0}$. By Corollary~\ref{cor:core-containment}, $\widehat{\vec{u}}\in\core_{\delta}$.
\end{proof}

In this case, supermodularity of $\widehat{v}$ guarantees that the constructed allocation lies exactly in the estimated core, and therefore, by the general sensitivity bound, it is also approximately stable with respect to the true value function. 
Taken together, these results show that supermodularity enables efficient and direct construction of approximately optimal allocations without solving the full least-core LP.
\section{Appendix for Section~\ref{sec:comp}: Computing the least core}

\begin{algorithm}[t]
\SetAlgoLined
\textbf{Input:} number of samples $m$, value function estimate $\widehat{v}$, value-sharing scheme $\vec{u}$, $\varepsilon \geq 0$\\
\For{$j=1, \ldots, m$}{
    Sample $ S_j \sim \cD$\\
    Return $ S_j$ if $\sum_{i \in  S_{j}}  u_i + \varepsilon < \widehat{v}(S_j)$

}
\caption{GetUnhappyCoalition: sampling-based}\label{alg:samp-sep}
\end{algorithm}

\begin{algorithm}[t]
\SetAlgoLined
\textbf{Input:} number of samples $m$, value function estimate $\widehat{v}$, value-sharing scheme $u$, $\varepsilon \geq 0$\\
\For{$j=1, \ldots, m$}{
    Ask the LLM which subset it believes is most violated given $u$. Denote this subset by $S_j$.\\
    Return $ S_j$ if $\sum_{i \in  S_{j}}  u_i + \varepsilon < \widehat{v}(S_j)$

}
\caption{GetUnhappyCoalition: LLM-based}\label{alg:llm-sep}
\end{algorithm}

\begin{algorithm}[t]
\SetAlgoLined

\caption{Ellipsoid method for the least estimated core}\label{alg:ellipsoid}
Set $T = 2(n+1)(n+2)\ln(R/r)$ and initialize ellipsoid $E$ to be the ball $\{x\in\mathbb R^{n+1}:\|x-c\|_2\le R\}$, where $c=(\frac{1}{2},\dots,\frac{1}{2})\in\mathbb R^{n+1}$ and $R=\frac{\sqrt{n+1}}{2}$. 

\For{$t=1,2,\dots,T$}{
    Let $(\vec{u}^{(t)},\varepsilon^{(t)})$ be the center of $E$

    \If{$\sum_{i\in N} u_i^{(t)} \neq 1$}{
        Use the violated hyperplane $\sum_{i\in N} u_i = 1$ 
        to update $E$

        \textbf{continue}
    }

    \If{$u_i^{(t)} \notin [0,1]$ for some $i$, or $\varepsilon^{(t)} \notin [0,1]$}{
        Use the corresponding constraint to update $E$

        \textbf{continue}
    }

    $S^{(t)} \leftarrow \mathrm{GetUnhappyCoalition}(\vec{u}^{(t)},\varepsilon^{(t)})$

    \If{$S^{(t)} \neq \mathtt{None}$}{
        Use the violated hyperplane $\sum_{i\in S^{(t)}} u_i + \varepsilon = \widehat{v}(S^{(t)})$
        to update $E$
        
        \textbf{continue}
    }

    \Return $(u^{(t)},\varepsilon^{(t)})$
}
\end{algorithm}

%

\corellipsoid*
\begin{proof}
    The classical central-cut ellipsoid method in dimension $d$ makes at most $2d(d+1)\log(R/r)$ oracle calls; in our LP, $d=n+1$, so this is at most $2(n+1)(n+2)\ln(R/r)$.
    Here, $R$ is the radius of a ball containing $\mathcal{K} := \{(\vec{u}, \epsilon) \; : \; \vec{u} \in \Delta^{N}, \; \epsilon \in (0, 1)\}$
    and $r$ is the radius of a ball contained within the relative interior of $\mathcal{K}$. 
    By standard bit complexity arguments, $R/r = O(\mathrm{poly}(n))$. 

    In any given round, the probability of the sampling-based separation oracle failing is at most $(1 - \delta)^m$. 
    Taking a union bound over the $O(n^2 \log(R/r))$ calls to the separation oracle, the probability of failure $\gamma$ is at most $O((1 - \delta)^m n^2 \log(n))$. 
    Rearranging terms and solving for $m$ yields the desired sample size in the sampling-based separation oracle.\looseness-1 

    If the ellipsoid method terminates at $(\vec{u}^{(\alg)},\varepsilon^{(\alg)})$ under the clean event, then it must be the case that
    \begin{equation*}
        p(\vec{u}^{(\alg)},\varepsilon^{(\alg)}) = \mathbb{P}_{S\sim D} \left[\sum_{i\in S} u_i^{(\alg)} + \varepsilon^{(\alg)} < \widehat{v}(S)\right] < \delta,
    \end{equation*}
    because otherwise the final oracle call would have returned a violated coalition instead of allowing termination. 
    Since $\varepsilon \leq \widehat{\varepsilon}$, $$\mathbb{P}_{S \sim \cD}\left[ \sum_{i \in S} u_i^{(\alg)} + \varepsilon^{(\alg)} \geq \widehat{v}(S) \right] \leq \mathbb{P}_{S \sim \cD}\left[ \sum_{i \in S} u_i^{(\alg)} + \widehat{\varepsilon} \geq \widehat{v}(S) \right],$$ and so  
    \begin{equation*}
        \mathbb{P}_{ S \sim \cD}\left[\sum_{i \in  S} u_i^{(\alg)} + \widehat{\epsilon} \geq \widehat{v}( S) \right] \geq 1 - \delta.
    \end{equation*}
\end{proof}

\elipsqp*
\begin{proof}
Introduce an auxiliary variable $t$ and rewrite the quadratic program in epigraph form:
\[
\min_{\vec{u},t} \ t
\]
subject to
\[
\vec{u} \in [0,1]^N,\qquad \sum_{i\in N} u_i = 1,
\qquad
\sum_{i\in S} u_i \ge \hat v(S)-\varepsilon^{(\alg)}\ \ \forall S \subset N,
\qquad
t \ge \frac12\|\vec{u}\|_2^2.
\]
This is a convex optimization problem in dimension $n+1$.

We first describe a separation oracle for the epigraph constraint $t \ge \frac{1}{2}\|\vec{u}\|_2^2$:
If a candidate solution $(\vec{u}_0,t_0)$ satisfies $t_0 \ge \frac{1}{2}\|\vec{u}_0\|_2^2$, then it is feasible for the epigraph. 
Otherwise, since the function $f(\vec{u}) := \frac{1}{2}\|\vec{u}\|_2^2$ is convex and differentiable with gradient $\nabla f(\vec{u}_0)=\vec{u}_0$, the first-order supporting hyperplane inequality gives
\[
f(\vec{u}) \ge f(\vec{u}_0) + \langle \vec{u}_0, \vec{u}-\vec{u}_0\rangle
= \vec{u}_0^\top \vec{u} - \frac{1}{2}\|\vec{u}_0\|_2^2
\qquad \forall \vec{u} \in \mathbb{R}^N.
\]
Therefore every point $(\vec{u},t)$ in the epigraph satisfies $t \geq \vec{u}_0^\top \vec{u} - \frac{1}{2}\|\vec{u}_0\|_2^2$. 
Since $(\vec{u}_0,t_0)$ violates $t_0 \geq \frac{1}{2}\|\vec{u}_0\|_2^2$, we have
\[
\vec{u}_0^\top \vec{u}_0 - t_0 > \frac{1}{2}\|\vec{u}_0\|_2^2,
\]
so $(\vec{u}_0,t_0)$ lies strictly outside the halfspace, which therefore separates it from the epigraph.
Hence the quadratic epigraph admits an exact deterministic separation oracle.

The remaining constraints are the coalition constraints
\[
\sum_{i\in S} u_i \geq \widehat{v}(S)-\varepsilon^{(\alg)} \qquad \forall S\subseteq N.
\]
For these, use the same sampling-based oracle as in Theorem~\ref{thm:ellipse}, now with $\varepsilon^{(\alg)}$ fixed. Define
\[
p(\vec{u}) := \mathbb{P}_{S\sim D}\!\left[\sum_{i\in S} u_i + \varepsilon^{(\alg)} < \widehat{v}(S)\right].
\]
Exactly as in the proof of Theorem~4.2, if $p(\vec{u})\geq \delta$, then one call to Algorithm~\ref{alg:samp-sep} fails to find a violated coalition with probability at most $(1-\delta)^m$. The classical central-cut ellipsoid method makes at most $O(n^2\log(R/r))$ oracle calls, so choosing
\[
m = O\left(\frac{\log(n^2\log n/\gamma)}{\delta}\right)
\]
and taking a union bound implies that with probability at least $1-\gamma$, every oracle call behaves correctly whenever the current point violates a $\delta$-mass of coalition constraints.

Condition on this clean event. If the ellipsoid method terminates at $(\vec{u}^{(\mathrm{egal})}, t^{(\mathrm{egal})})$, then it cannot be the case that $p(\vec{u}^{(\mathrm{egal})}) \geq \delta$, because otherwise the final oracle call would have returned a violated coalition constraint rather than allowing termination. Therefore
\[
\Pr_{S\sim D}\!\left[\sum_{i\in S} u^{(\mathrm{egal})}_i + \varepsilon^{(\alg)} \geq \widehat{v}(S)\right] \geq 1-\delta.
\]
This proves the $\delta$-probable least estimated-core guarantee.

Because the optimization is carried out in epigraph form with objective $\min t$, any optimal solution satisfies
\[
t^{(\mathrm{egal})} = \frac{1}{2}\|\vec{u}^{(\mathrm{egal})}\|_2^2.
\]
Indeed, if strict inequality held, one could decrease $t$ while preserving feasibility, contradicting optimality. 
Thus $(\vec{u}^{(\mathrm{egal})}, t^{(\mathrm{egal}}))$ minimizes the $\ell_2$ norm among allocations satisfying the sampled ellipsoid feasibility conditions.
Taking a final union bound over both runs of ellipsoid obtains the desired result. 
\end{proof}

\begin{restatable}{theorem}{cgthm}\label{thm:cg}
    Let $\cD$ be a distribution over coalitions. Consider running CG (Alg.~\ref{alg:cg-qp}) instantiated such that $\mathrm{GetUnhappyCoalition}$ is the sampling-based separation oracle (Alg.~\ref{alg:samp-sep}) with number of samples $m$, and let $\tau \leq T$ be the round on which it terminates with solution $(u^{(*, \tau)}, \epsilon^{(*, \tau)})$. 
    Then if $m \geq \frac{1}{\delta} \log(T/\gamma)$, with probability at least $1 - \gamma$, $\mathbb{P}_{ S \sim \cD}\left[\sum_{i \in  S} u_i^{(*, \tau)} + \widehat{\varepsilon} \geq \widehat{v}( S) \right] \geq 1 - \delta.$
    %
    %
    Furthermore, $\vec{u}^{(*, \tau)}$ minimizes $\| \vec{u} \|_2^2$ among all feasible value-sharing schemes defined by the $\tau-1$ sampled constraints.\looseness-1 
\end{restatable}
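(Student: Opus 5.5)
The plan mirrors the proof of Theorem~\ref{thm:ellipse}, with the union bound now taken over the at most $T$ rounds of Algorithm~\ref{alg:cg-qp} instead of the ellipsoid iterations.

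\emph{Step 1: a per-round failure bound.} For a pair $(\vec{u},\varepsilon)$, set
\[
p(\vec{u},\varepsilon)=\mathbb{P}_{S\sim\cD}\Bigl[\textstyle\sum_{i\in S}u_i+\varepsilon<\widehat{v}(S)\Bigr].
\]
Call round $t$ a \emph{failure} if the current iterate $(\vec{u}^{(t)},\varepsilon^{(t)})$ has $p\ge\delta$ but all $m$ independent samples of Algorithm~\ref{alg:samp-sep} miss a violated coalition. Conditional on the iterate, this happens with probability at most $(1-\delta)^m\le e^{-\delta m}$. The iterate depends on earlier samples, but the fresh samples in round $t$ are independent of that history, so the conditional bound is legitimate. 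A union bound over the at most $T$ rounds gives total failure probability at most $Te^{-\delta m}$. This is at most $\gamma$ once $m\ge\frac{1}{\delta}\log(T/\gamma)$.

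\emph{Step 2: the probabilistic guarantee on the clean event.} Assume no round fails. The algorithm stops at round $\tau$ only when the oracle returns $\mathtt{None}$, so $p(\vec{u}^{(*,\tau)},\varepsilon^{(*,\tau)})<\delta$. Next, $\varepsilon^{(*,\tau)}$ is the optimum of the restricted LP $\widehat{\mathrm{LP}}(\cS^\tau)$. That LP is a relaxation of LP~\eqref{eq:least-estimated-core-lp}, since it keeps only a subset of the constraints. Hence $\varepsilon^{(*,\tau)}\le\widehat{\varepsilon}$. Replacing $\varepsilon^{(*,\tau)}$ by the larger $\widehat{\varepsilon}$ only enlarges the event $\{\sum_{i\in S}u_i+\varepsilon\ge\widehat{v}(S)\}$. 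Therefore the satisfaction probability at $\widehat{\varepsilon}$ is at least $1-\delta$.

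\emph{Step 3: the egalitarian claim.} This is deterministic. By construction, $\vec{u}^{(*,\tau)}$ is the optimal solution of $\widehat{\mathrm{QP}}(\cS^\tau,\varepsilon^{(*,\tau)})$. Its feasible set is exactly the set of value-sharing schemes that satisfy the constraints accumulated before the final round. So it minimizes $\|\vec{u}\|_2^2$ among those schemes.

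\emph{Two bookkeeping points.}
\begin{itemize}
\item The statement refers to ``$\tau-1$ sampled constraints.'' This should be read as the constraints in $\cS^\tau$, i.e.\ the seed plus the coalitions added in rounds $1,\dots,\tau-1$.
\item If the loop reaches round $T$ without the oracle returning $\mathtt{None}$, Step 2 does not apply. The statement's hypothesis that the algorithm terminates at some $\tau\le T$ via the oracle check is what rules this case out.
\end{itemize}

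The only real obstacle is making the adaptivity in Step 1 rigorous: the iterates are random and depend on past samples. Conditioning on the history at each round and using fresh samples handles this. No other step requires real work.
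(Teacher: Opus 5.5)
Your proposal is correct and follows the paper's argument: condition on the history, bound the per-round miss probability by $(1-\delta)^m$ whenever $p\ge\delta$, and take a union bound over the $T$ rounds. You also spell out steps the paper's proof leaves implicit: termination on the clean event forces $p<\delta$, the restricted LP being a relaxation gives $\varepsilon^{(*,\tau)}\le\widehat{\varepsilon}$, and optimality of $\widehat{\mathrm{QP}}(\cS^\tau,\varepsilon^{(*,\tau)})$ gives the egalitarian claim. In addition, your bound $(1-\delta)^m\le e^{-\delta m}$ justifies the sample size more cleanly than the paper's inequality chain, which is written in the reverse direction.
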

%
\begin{proof}
    Let
    \begin{equation*}
        p(u, \epsilon) := \mathbb{P}_{ S \sim \cD} \left[ \sum_{i \in  S} u_i + \epsilon < \widehat{v}(S) \right] 
    \end{equation*}
    be the probability that a core constraint (under $\widehat{v}$) is violated by more than $\epsilon$, and suppose we'd like to control $p(u, \epsilon)$ such that it is at most $\delta$. 
    Consider the run of Algorithm~\ref{alg:samp-sep} for round $t$ of constraint generation and condition on the entire history up to round $t$. 
    It is ok if Algorithm~\ref{alg:samp-sep} fails if $p(u^{(*, t)}, \epsilon^{(*, t)}) < \delta$, as this does not affect the high-probability guarantee of the $\delta$-probable core. 
    Therefore, assume that $p(u^{(*, t)}, \epsilon^{(*, t)}) \geq \delta$. 
    The probability of Algorithm~\ref{alg:samp-sep} failing in round $t$ is  
    $(1 - p(u^{(*, t)}, \epsilon^{(*, t)}))^m \leq (1 - \delta)^m$, which we'd like to be at most $\gamma / T$. 
    Therefore, if 
    \begin{equation*}
        m \geq \frac{\log(T/\gamma)}{-\log(1 - \delta)} \geq \frac{\log(T/\gamma)}{\delta},
    \end{equation*}
    then one oracle call will fail with probability at most $\gamma/T$. 

    Taking a union bound over all $T$, we have that if $m \geq \frac{1}{\delta} \log(T/\gamma)$, then the probability of any of the $T$ oracle calls failing is at most $\gamma$. 
    %
\end{proof}

\lemmwc*
\begin{proof}
    Seed the least core LP with constraints corresponding to $S_1,\ldots, S_k$ and solve. 
    Let $S$ be any winning coalition, and let $S_j$ be a MWC such that $S_j\subseteq S$. 
    For any $\vec{u}\in [0,1]^N$, we have $\widehat{v}(S) - \sum_{i\in S} u_i = 1 - \sum_{i\in S} u_i \le 1 - \sum_{i\in S_j} u_i = \widehat{v}(S_j) - \sum_{i\in S_j}u_i$. 
    So, $\widehat{v}(S_j) - \sum_{i\in S_j} u_i\le\varepsilon \Rightarrow \widehat{v}(S) - \sum_{i\in S} u_i\le\varepsilon$ for any $\varepsilon > 0$ and therefore the constraint corresponding to $S$ is redundant.
    The same reasoning extends to egalitarian least core computation (and we obtain a polynomial run-time since the quadratic program has a convex objective).    
\end{proof}

\mwcpred*
\begin{proof}
    Seed the least core LP with constraints corresponding to $S_1,\ldots, S_k$ and run constraint generation with the given oracle. The same reasoning as in Lemma~\ref{lem:mwc} shows that for any $\vec{u}$, the oracle returns a MWC. Since there are at most $\eta$ MWCs that are not already in the LP, CG terminates in at most $\eta$ iterations. The same reasoning extends to egalitarian least core computation (and we obtain a polynomial run-time since the quadratic program has a convex objective).
\end{proof}
\section{Additional Experimental Details}
\label{app:experiments}

\begin{algorithm}[t]
\SetAlgoLined
Initialize $\cS^0\leftarrow\emptyset$\\
\For{$t = 1,\ldots, T$}{
    Sample $S^t \sim \cD$\\
    $\cS^t \leftarrow \cS^{t-1} \cup S^t$
}
    Solve $\widehat{\mathrm{LP}}(\cS^T)$; let $\varepsilon$ be the returned optimal deficit \\
    Solve $\widehat{\mathrm{QP}}(\cS^T,\varepsilon)$; let $\vec{u}$ be the returned value-sharing scheme\\
{\bf return} $\vec{u},\varepsilon$
\caption{\textbf{YP} baseline}\label{alg:yp}
\end{algorithm}

\subsection{BrowseComp-Plus Dataset Details}
\label{app:browsecomp-details}

As an example BrowseComp-Plus query: \emph{``A CEO who founded a company in the mid-1990s was raised in Southern Africa. His father was an engineer, and their relationship was not delightful. His first child sadly passed away as a result of SIDS. In the early 2020s, he had a child whose name had the internet buzzing. This CEO has a younger sister who co-founded a film streaming service. In an article from the early 2020s, at what age did the mother of the lady who claimed to be related to him give birth to her?''}

A \emph{gold} document for this question explicitly states, \emph{``...Her mother was 21...''}. A \emph{negative} document contains information that appears relevant at first glance---e.g., \emph{``...he shares 21-year-old twins...''}---which does not actually answer the question but contains salient keywords. An \emph{evidence} document contains, for example, \emph{``Tosca is the co-founder of a streaming service,''} which helps narrow down the answer.

Ideally, gold documents should receive large values in $\mathbf{u}$, evidence documents should receive nonzero but moderate values, and negative documents should receive zero. However, in our preliminary investigation, we found several violations of this intuition. In the example above, a gold document is repeated verbatim in the evidence set, violating our uniqueness assumption. Furthermore, at least one negative document contains facts relevant to the query, making it not truly negative for our purposes. We observed several similar issues: for instance, gold documents that do not contain the answer (which, in some cases, nonetheless evoke the correct response from some language models).

\subsection{Additional experiment discussion}
\label{app:more-discussion}

\paragraph{Compute resources.} All value function estimation and synthetic dataset generation was performed using the Gemini 3 API.
The constraint generation algorithms and LP/QP optimizations were executed locally. Running experiments for the $n=32$ case
with 32 rows takes approximately 16 hours due to API throughput and the large number of coalitions tested for baselines.
We developed the method at small scale on a handful of held-out examples.

\paragraph{Limitations.} One limitation of our approach is that our \textbf{CG} methods require many calls, which could be expensive (though far cheaper than baselines). We speculate that the LLM compute required for such counterfactual estimation will become more abundant in the future, together with further algorithmic
improvements in efficiently estimating the core. A limitation of our experiment section is that an exact ground truth is impossible
to compute for large $n$ (though we ran the exact version for the $n=10$ case). Calculating the baseline and holdout coalition values
was a significant bottleneck for our experiments, resulting in a relatively low sample size for our study. We also found it difficult to find or construct a dataset with true interaction effects within coalitions, as language models are often able to infer the correct answer from partial information given their expansive training corpora. However, in such cases, it may be reasonable to consider the reward to be the marginal value provided to the LLM on top of its world knowledge. Finally, we could only investigate our method using one frontier language model due to the availability of API keys with lenient rate limits, though we are confident results would be similar with others.

\newpage
\subsection{Synthetic $n{=}32$ Results}
\label{app:synth-n32}

The aggregate Synthetic $n{=}32$ results are shown in \cref{fig:forest-synth}. Below we present selected convergence traces organized by \textbf{ZS} performance, using \textbf{YP}'s final holdout $\hat{\varepsilon}_{\mathcal{H}}$ as the reference point for each instance.
To report \textbf{YP}'s performance over the total number of calls, we incrementally build the full set of coalitions
by adding random batches of size $2^7$. Note that this method is not subject to same guarantees or stopping conditions as our \textbf{CG} algorithms,
and is included for illustrative purposes only.

\begin{figure*}[!htb]
    \centering
    \includegraphics[width=\textwidth]{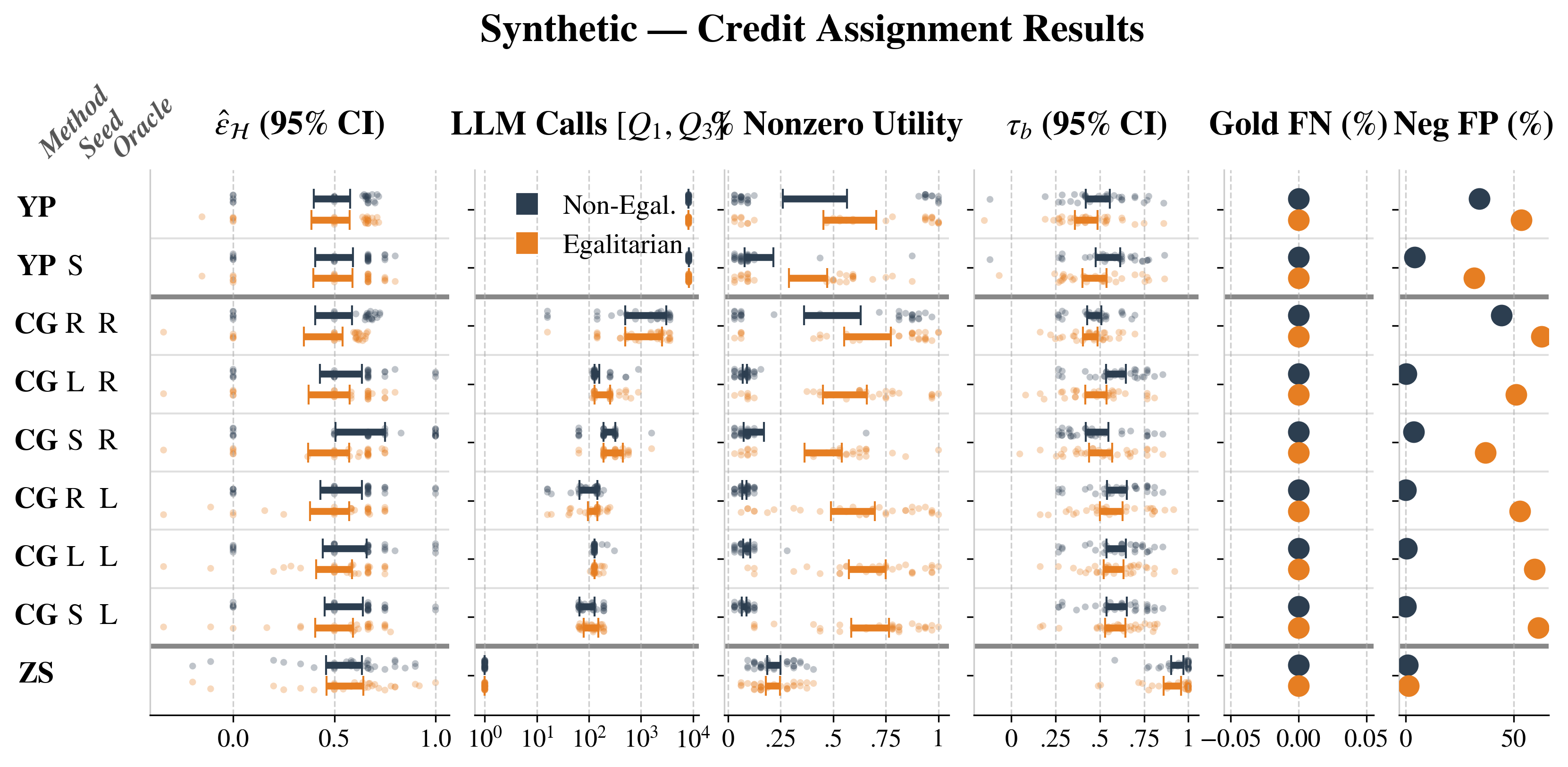}
    \caption{Credit assignment results on Synthetic ($n{=}32$) for non-egalitarian (dark) and egalitarian (light) settings. Lower $\hat{\varepsilon}_{\mathcal{H}}$ is better. Bars show 95\% bootstrap CIs; points are individual problem instances. Compare with \cref{fig:forest} for BrowseComp-Plus.}
    \label{fig:forest-synth}
\end{figure*}

\newpage
\subsubsection{Cases Where ZS Fails}

These instances (\cref{fig:zs-fail}) demonstrate that \textbf{ZS} can produce poor allocations, motivating the need for \textbf{CG}'s iterative refinement with guarantees. In Row~2, \textbf{YP} reaches $\hat{\varepsilon}_{\mathcal{H}} = 0.67$ and \textbf{CG(R,R)} matches it (within 0.04) in ${\sim}2{,}000$ calls, while \textbf{ZS} overshoots substantially at $\hat{\varepsilon}_{\mathcal{H}} = 0.90$ (+0.23 above \textbf{YP}). Row~21 shows a similar pattern: \textbf{CG(R,R)} reaches within 0.04 of \textbf{YP}'s baseline, but \textbf{ZS} overshoots by +0.23.

\begin{figure*}[!htb]
    \centering
    \begin{subfigure}[t]{0.49\textwidth}
        \includegraphics[width=\textwidth]{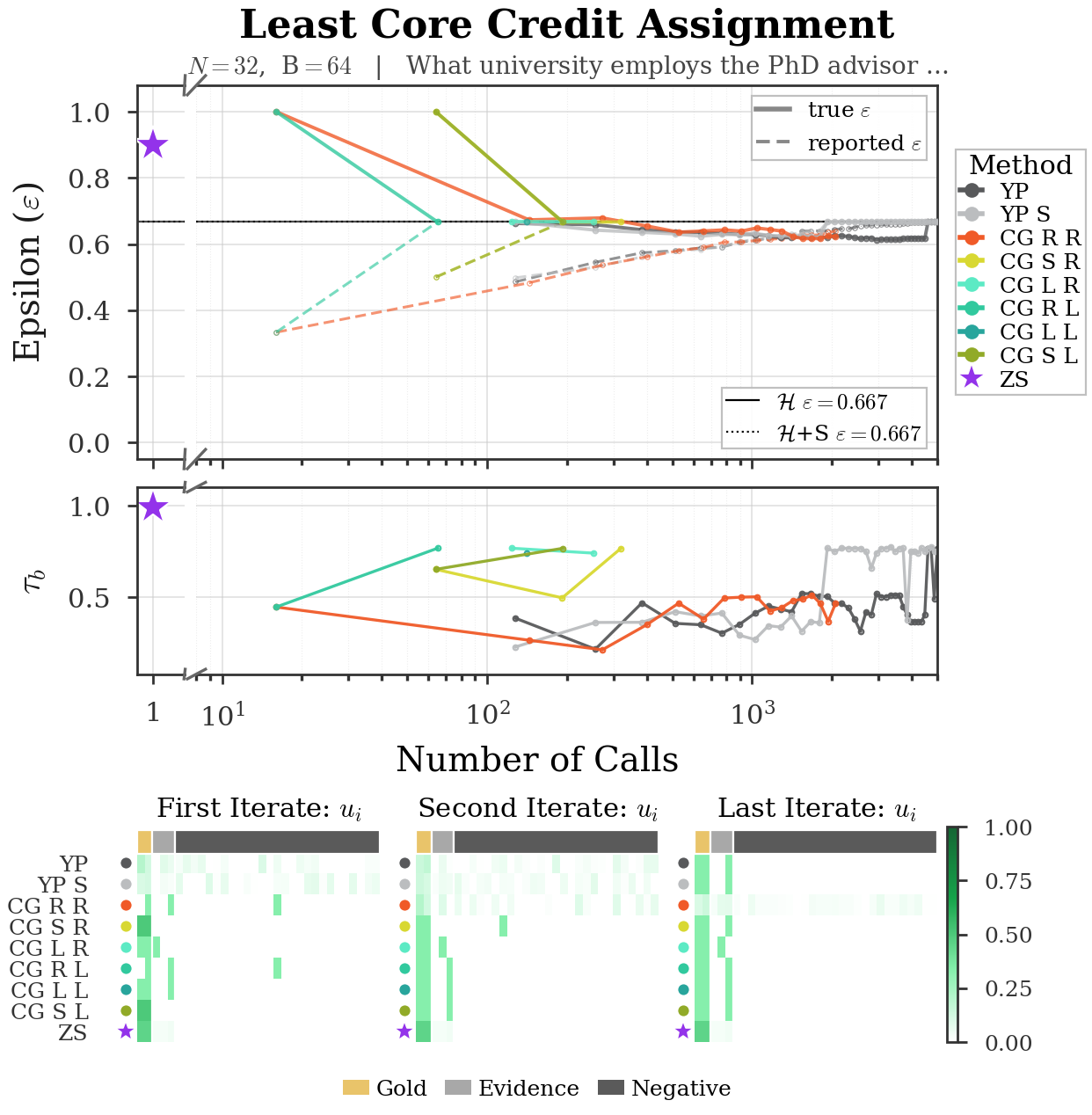}
        \caption{Synthetic $n{=}32$, Row~2 (non-egal.)}
    \end{subfigure}\hfill
    \begin{subfigure}[t]{0.49\textwidth}
        \includegraphics[width=\textwidth]{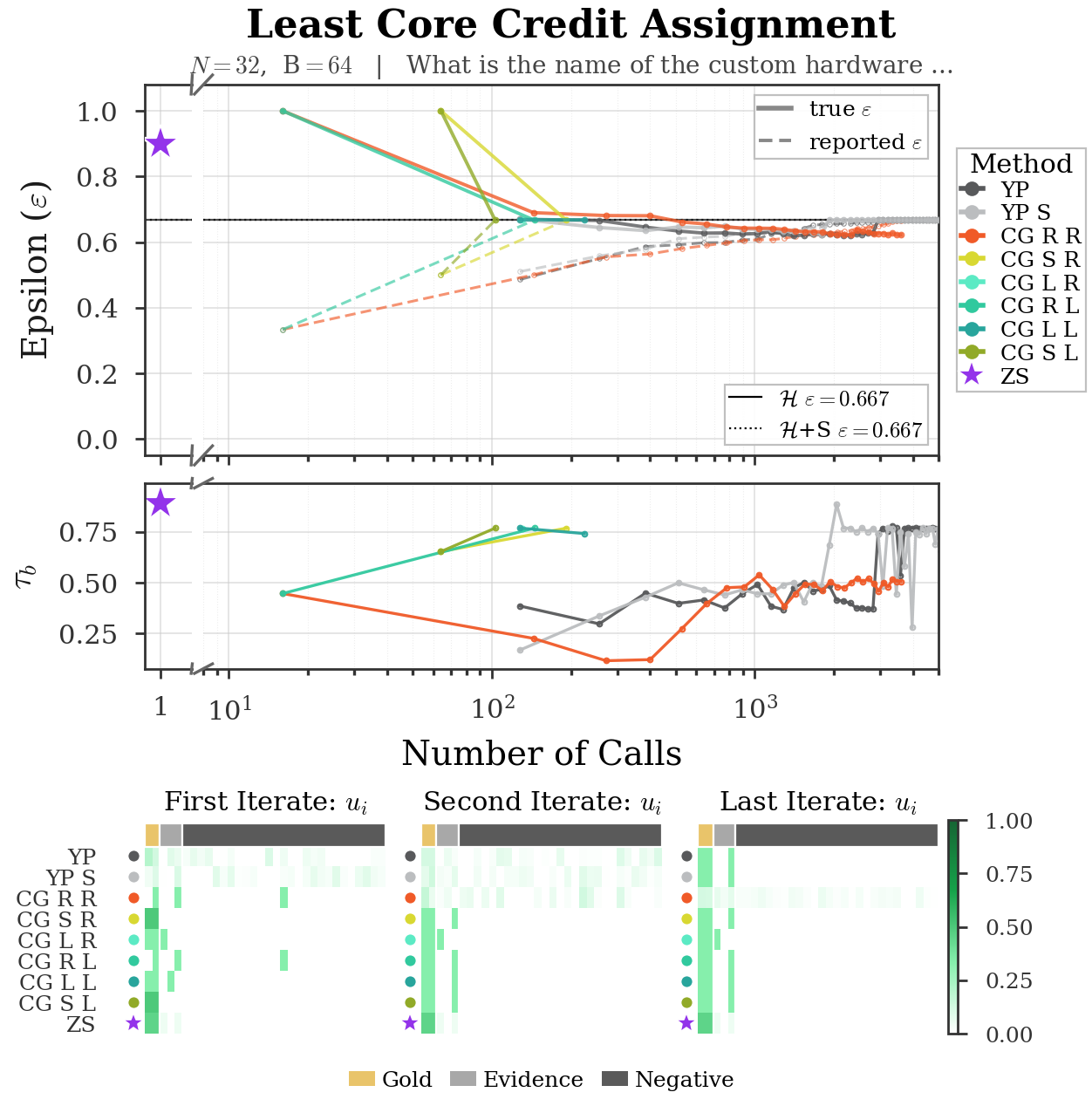}
        \caption{Synthetic $n{=}32$, Row~21 (non-egal.)}
    \end{subfigure}
    \caption{\textbf{ZS} failure cases: \textbf{ZS} overshoots \textbf{YP}'s holdout $\hat{\varepsilon}_{\mathcal{H}}$ by +0.23, while \textbf{CG(R,R)} matches the baseline with 3--4$\times$ fewer calls.}
    \label{fig:zs-fail}
\end{figure*}

\newpage
\subsubsection{Cases Where ZS Succeeds}

These instances (\cref{fig:zs-success}) show that \textbf{ZS} can sometimes produce remarkably good allocations in a single call. In Row~0, all methods---including \textbf{ZS}---match \textbf{YP}'s $\hat{\varepsilon}_{\mathcal{H}} = 0.50$ exactly, suggesting this is an easy instance. In Row~3, \textbf{ZS} again matches \textbf{YP} exactly at $\hat{\varepsilon}_{\mathcal{H}} = 0.50$, and all \textbf{CG} variants converge to the same optimum, confirming the allocation is tight.

\begin{figure*}[!htb]
    \centering
    \begin{subfigure}[t]{0.49\textwidth}
        \includegraphics[width=\textwidth]{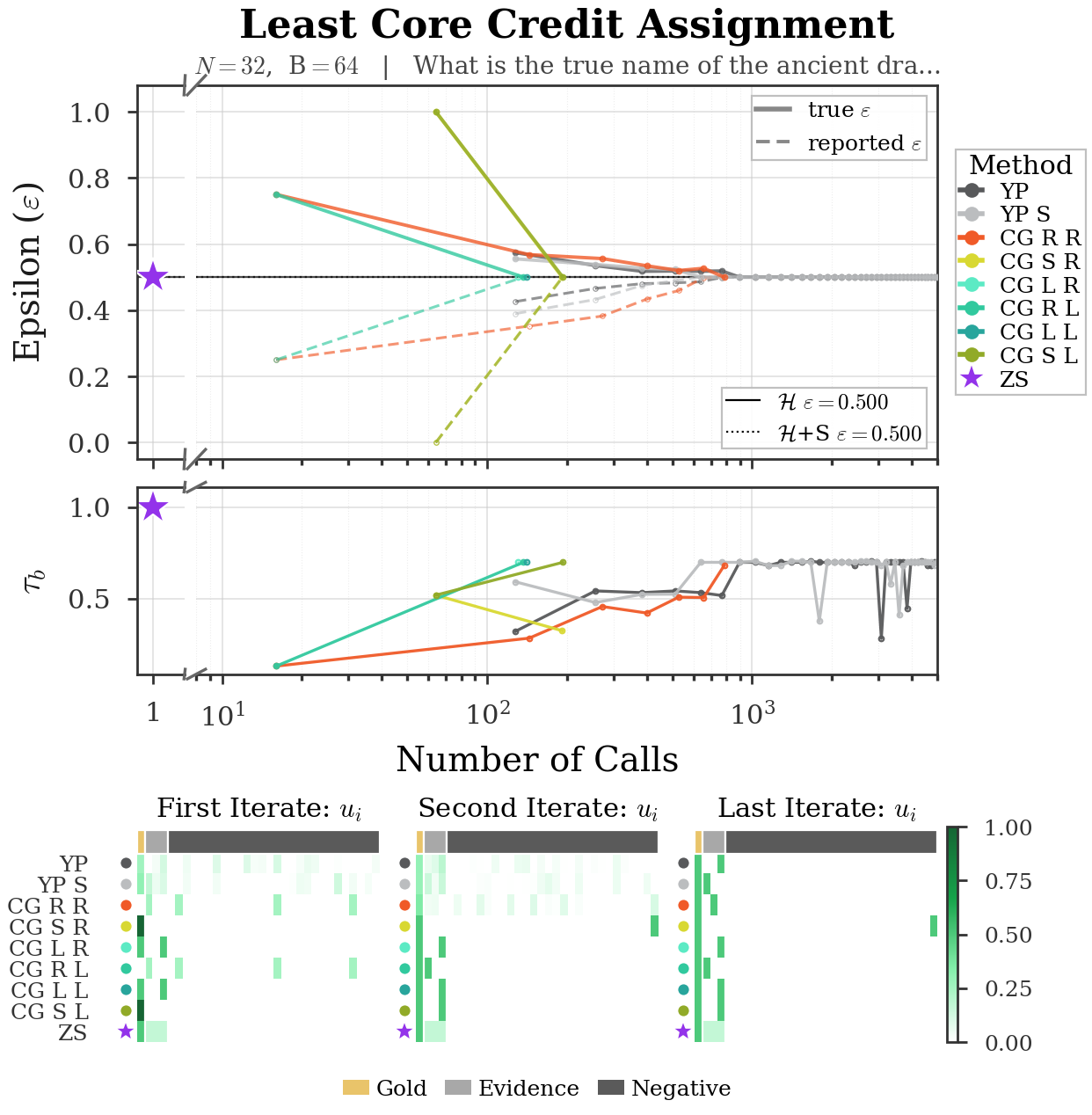}
        \caption{Synthetic $n{=}32$, Row~0 (non-egal.)}
    \end{subfigure}\hfill
    \begin{subfigure}[t]{0.49\textwidth}
        \includegraphics[width=\textwidth]{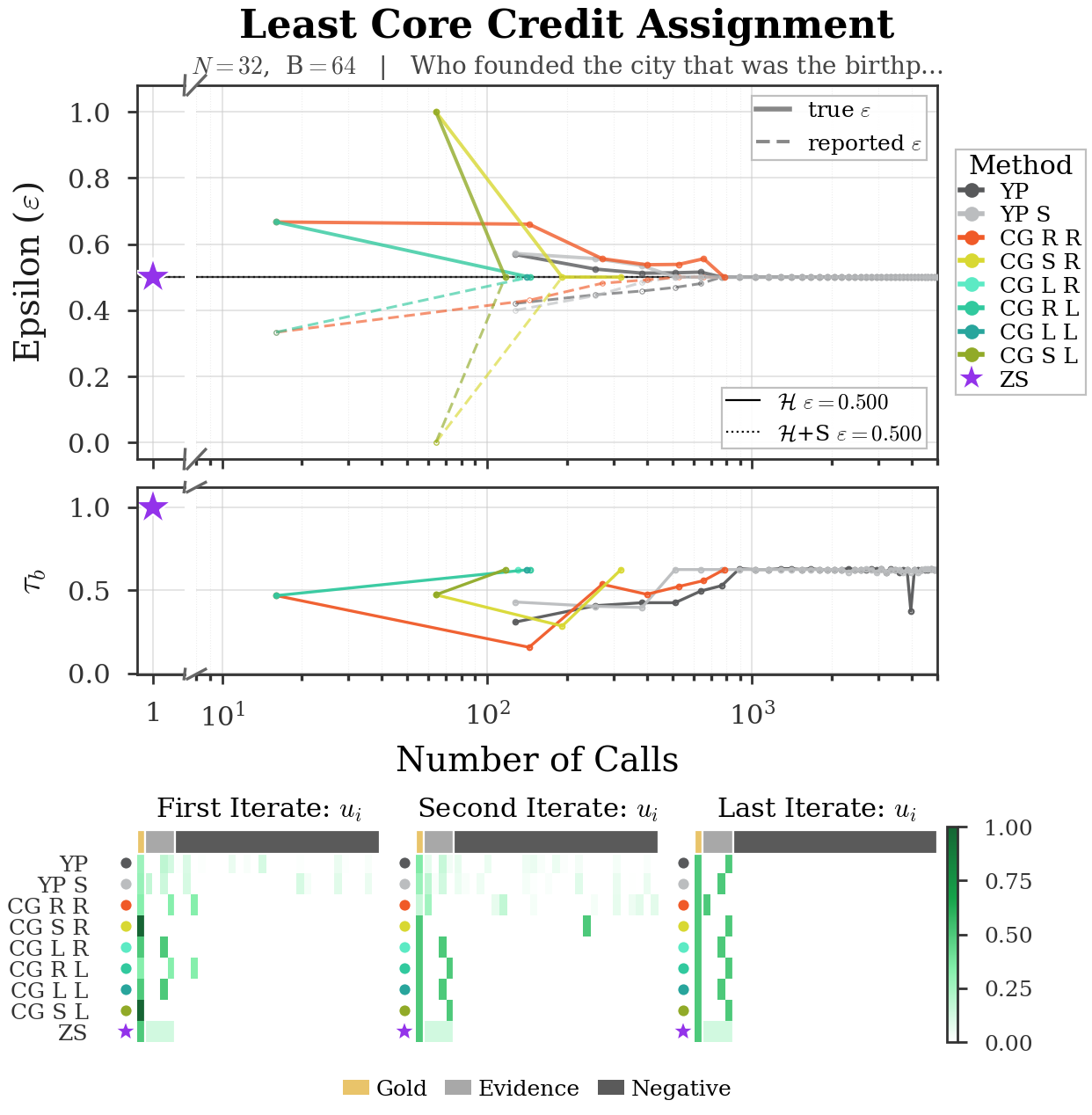}
        \caption{Synthetic $n{=}32$, Row~3 (non-egal.)}
    \end{subfigure}
    \caption{\textbf{ZS} success cases: \textbf{ZS} matches \textbf{YP}'s holdout $\hat{\varepsilon}_{\mathcal{H}} = 0.50$ in a single call, with high $\tau_b$.}
    \label{fig:zs-success}
\end{figure*}

\newpage
\subsubsection{Egalitarian Effects}

These traces (\cref{fig:egal-effects-1,fig:egal-effects-2}) compare non-egalitarian (left) and egalitarian (right) variants on the same problem instance, showing how the QP objective redistributes credit without substantially degrading $\hat{\varepsilon}_{\mathcal{H}}$. In Row~2, the egalitarian \textbf{CG(R,R)} spreads utility more broadly (visible in the heatmaps) while remaining within 0.04 of \textbf{YP}'s baseline. Rows~4, 7, and 13 exhibit the same pattern: the egalitarian heatmaps show less sparse allocations, and $\hat{\varepsilon}_{\mathcal{H}}$ is either unchanged or increases by at most ${\sim}0.07$.

\begin{figure*}[!htb]
    \centering
    \begin{subfigure}[t]{0.49\textwidth}
        \includegraphics[width=\textwidth]{appendix_figs/traces/synth_N32_row2_nonegal.png}
        \caption{Row~2 (non-egal.)}
    \end{subfigure}\hfill
    \begin{subfigure}[t]{0.49\textwidth}
        \includegraphics[width=\textwidth]{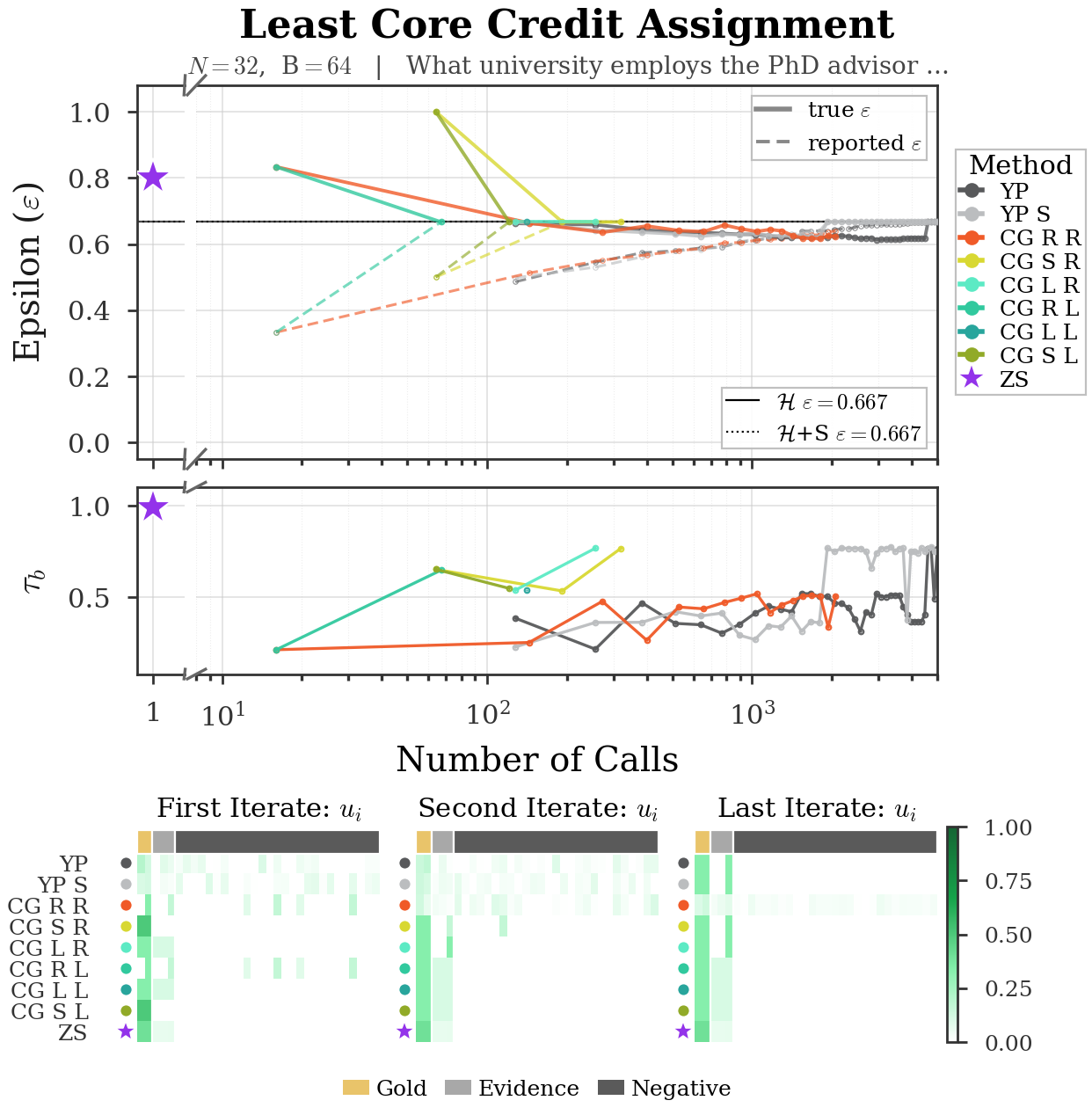}
        \caption{Row~2 (egal.)}
    \end{subfigure}\\[6pt]
    \begin{subfigure}[t]{0.49\textwidth}
        \includegraphics[width=\textwidth]{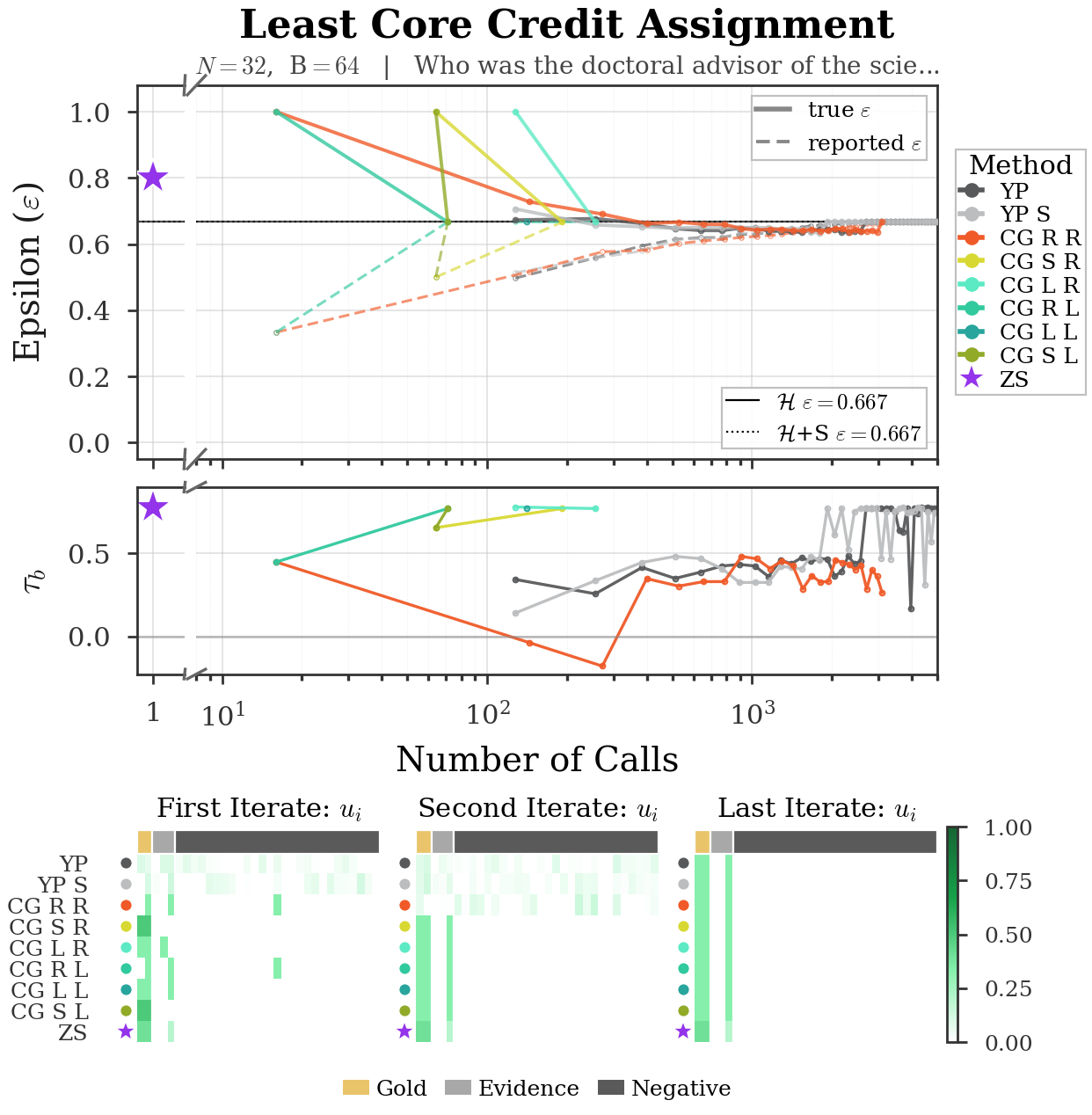}
        \caption{Row~4 (non-egal.)}
    \end{subfigure}\hfill
    \begin{subfigure}[t]{0.49\textwidth}
        \includegraphics[width=\textwidth]{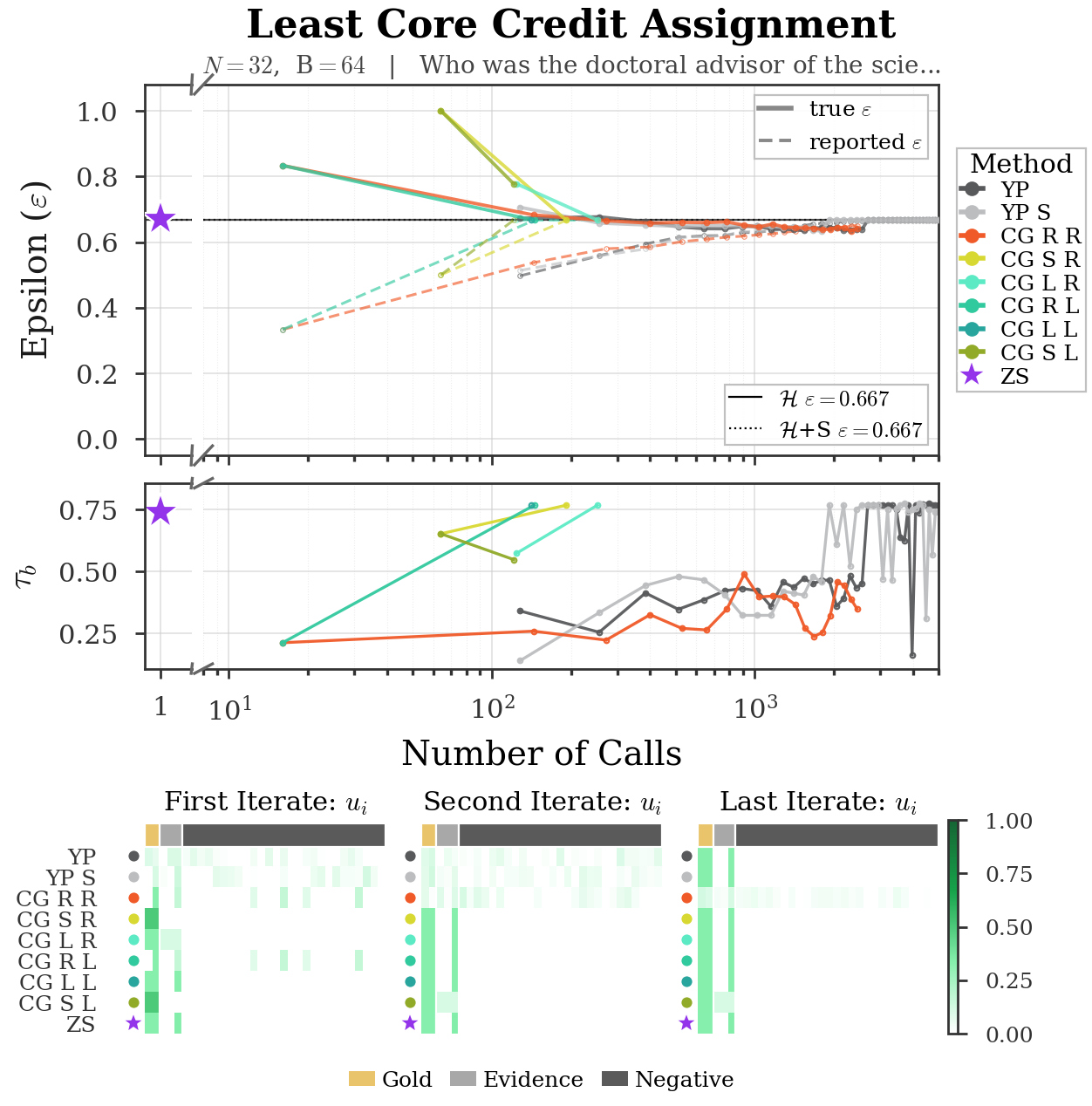}
        \caption{Row~4 (egal.)}
    \end{subfigure}
    \caption{Egalitarian effects on Synthetic $n{=}32$: non-egalitarian (left) vs.\ egalitarian (right). The QP objective redistributes credit more broadly, as visible in the heatmaps, with only minor impact on $\hat{\varepsilon}_{\mathcal{H}}$ relative to \textbf{YP}'s baseline.}
    \label{fig:egal-effects-1}
\end{figure*}

\newpage

\begin{figure*}[!htb]
    \centering
    \begin{subfigure}[t]{0.49\textwidth}
        \includegraphics[width=\textwidth]{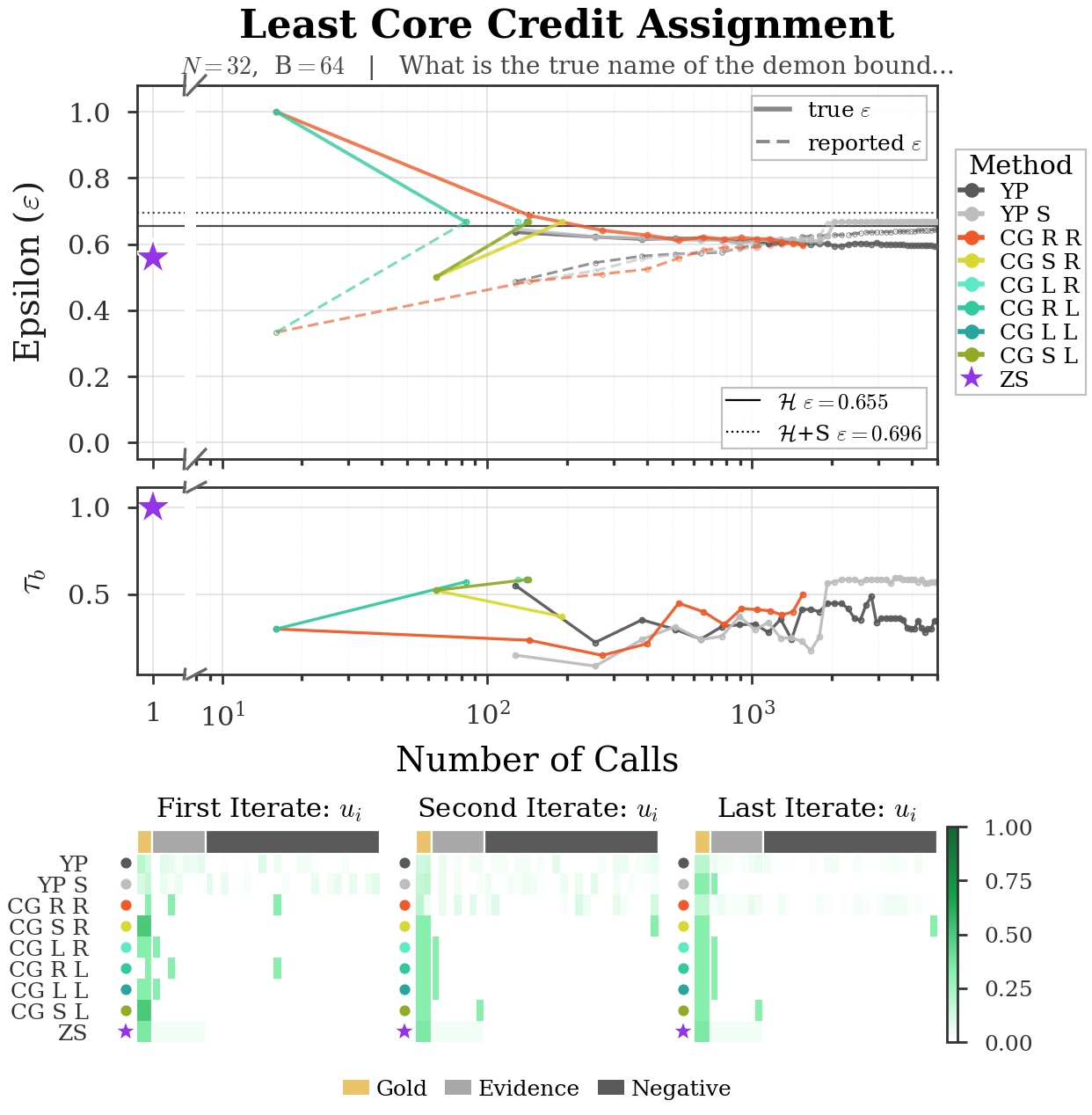}
        \caption{Row~7 (non-egal.)}
    \end{subfigure}\hfill
    \begin{subfigure}[t]{0.49\textwidth}
        \includegraphics[width=\textwidth]{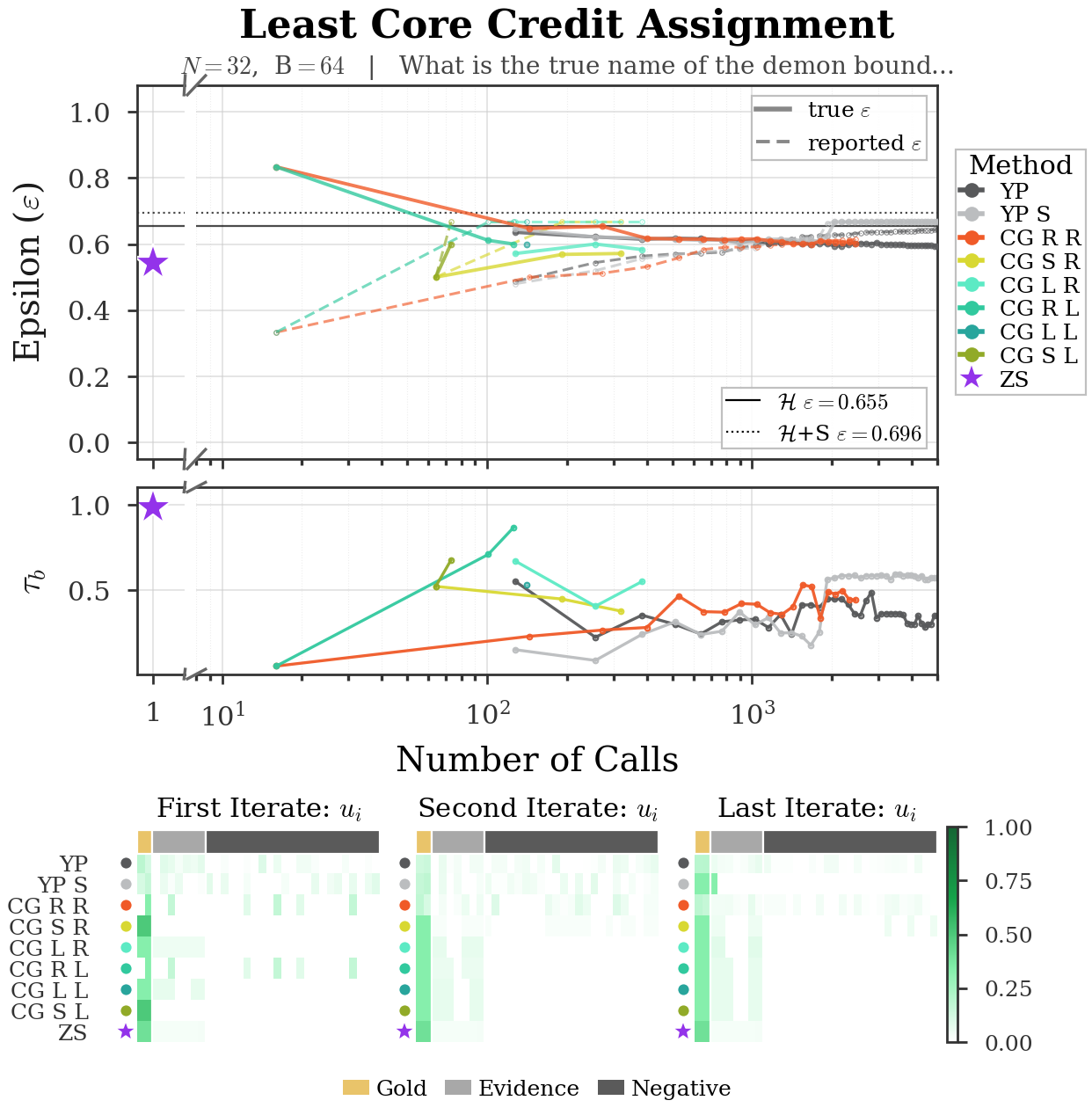}
        \caption{Row~7 (egal.)}
    \end{subfigure}\\[6pt]
    \begin{subfigure}[t]{0.49\textwidth}
        \includegraphics[width=\textwidth]{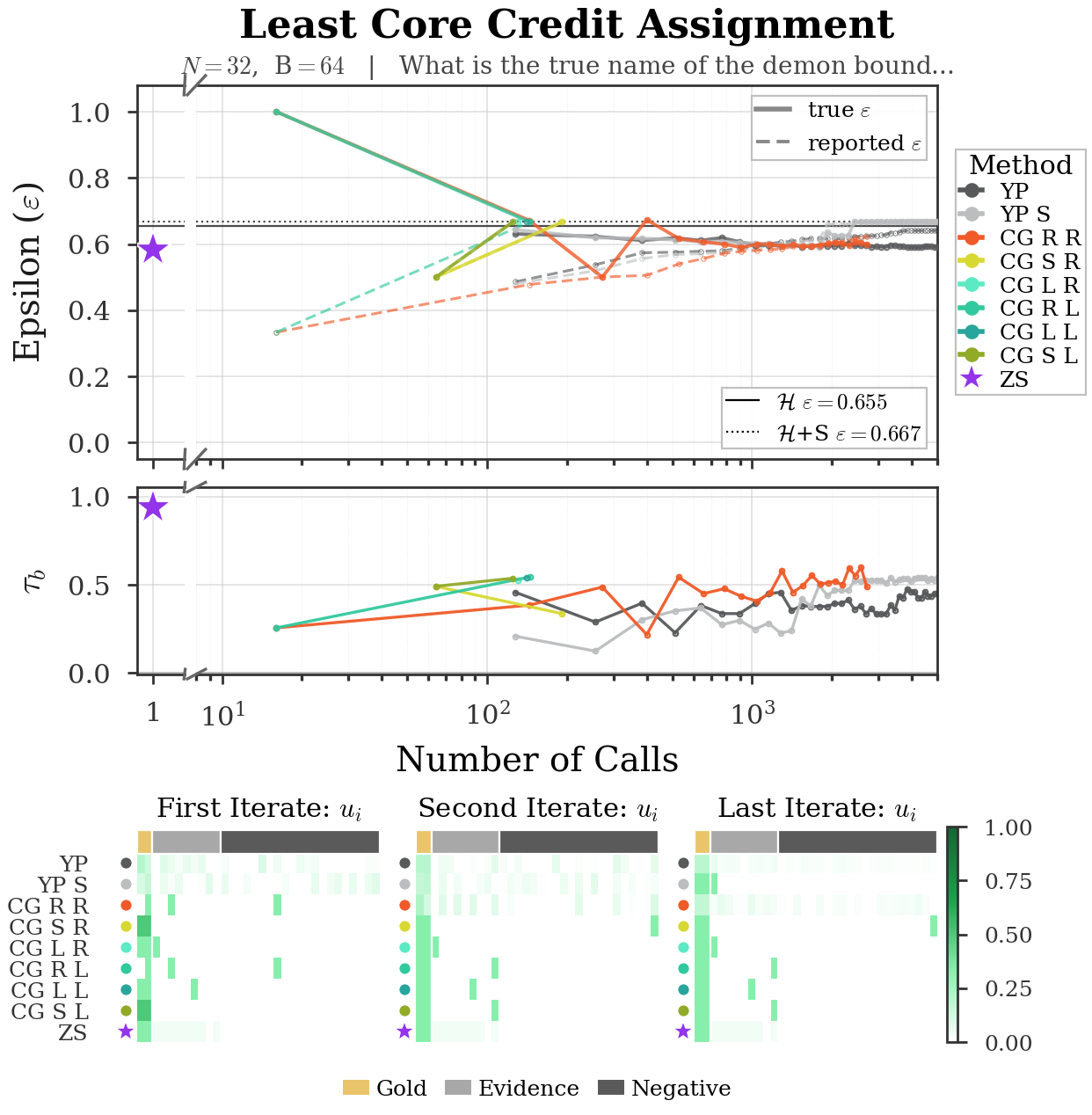}
        \caption{Row~13 (non-egal.)}
    \end{subfigure}\hfill
    \begin{subfigure}[t]{0.49\textwidth}
        \includegraphics[width=\textwidth]{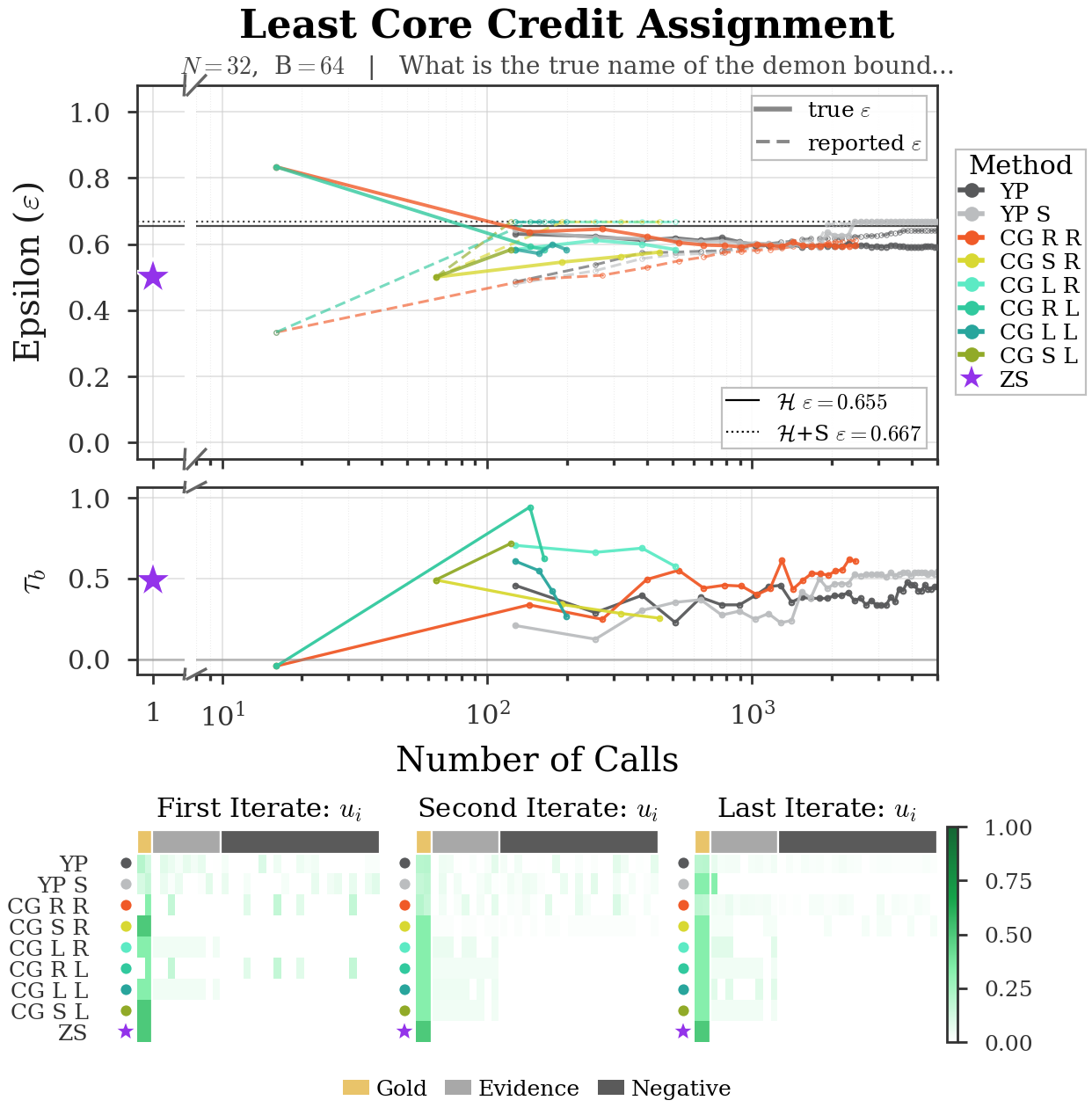}
        \caption{Row~13 (egal.)}
    \end{subfigure}
    \caption{Egalitarian effects on Synthetic $n{=}32$ (continued): non-egalitarian (left) vs.\ egalitarian (right). Credit heatmaps are visibly less sparse under the egalitarian objective.}
    \label{fig:egal-effects-2}
\end{figure*}

\newpage
\subsection{Additional BrowseComp $n{=}32$ Traces}
\label{app:bc-n32-traces}

Additional BrowseComp $n{=}32$ convergence traces are shown in \cref{fig:bc-n32-extra}, illustrating diverse \textbf{CG} dynamics across problem instances. In Row~1, \textbf{CG(R,R)} reaches within +0.01 of \textbf{YP}'s $\hat{\varepsilon}_{\mathcal{H}} = 0.65$ in ${\sim}900$ calls (a 9$\times$ reduction), while \textbf{ZS} overshoots by +0.12. Row~9 is similar: \textbf{CG(R,R)} matches \textbf{YP} within +0.01 at ${\sim}1{,}000$ calls. Row~22 shows \textbf{CG(R,R)} matching \textbf{YP}'s $\hat{\varepsilon}_{\mathcal{H}} = 0.66$ within +0.004 in ${\sim}1{,}000$ calls (an 8$\times$ reduction). In Row~31, \textbf{CG(L,ZS)} and \textbf{CG(L,S)} \emph{undershoot} \textbf{YP}'s $\hat{\varepsilon}_{\mathcal{H}} = 0.58$, achieving $\hat{\varepsilon}_{\mathcal{H}} = 0.50$ ($-0.08$) in only 49--64 calls.

\begin{figure*}[!htb]
    \centering
    \begin{subfigure}[t]{0.49\textwidth}
        \includegraphics[width=\textwidth]{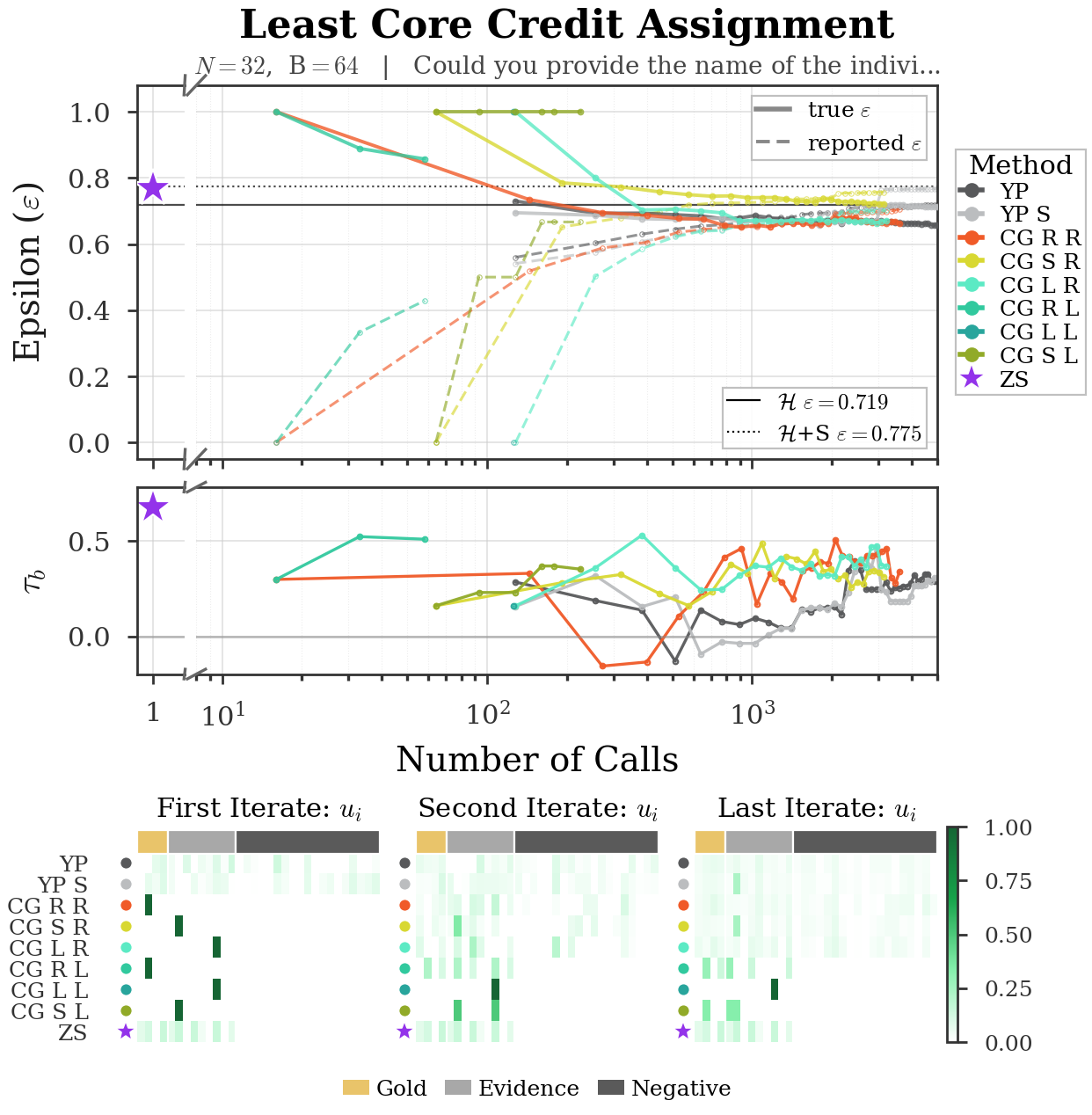}
        \caption{BrowseComp $n{=}32$, Row~1 (non-egal.)}
    \end{subfigure}\hfill
    \begin{subfigure}[t]{0.49\textwidth}
        \includegraphics[width=\textwidth]{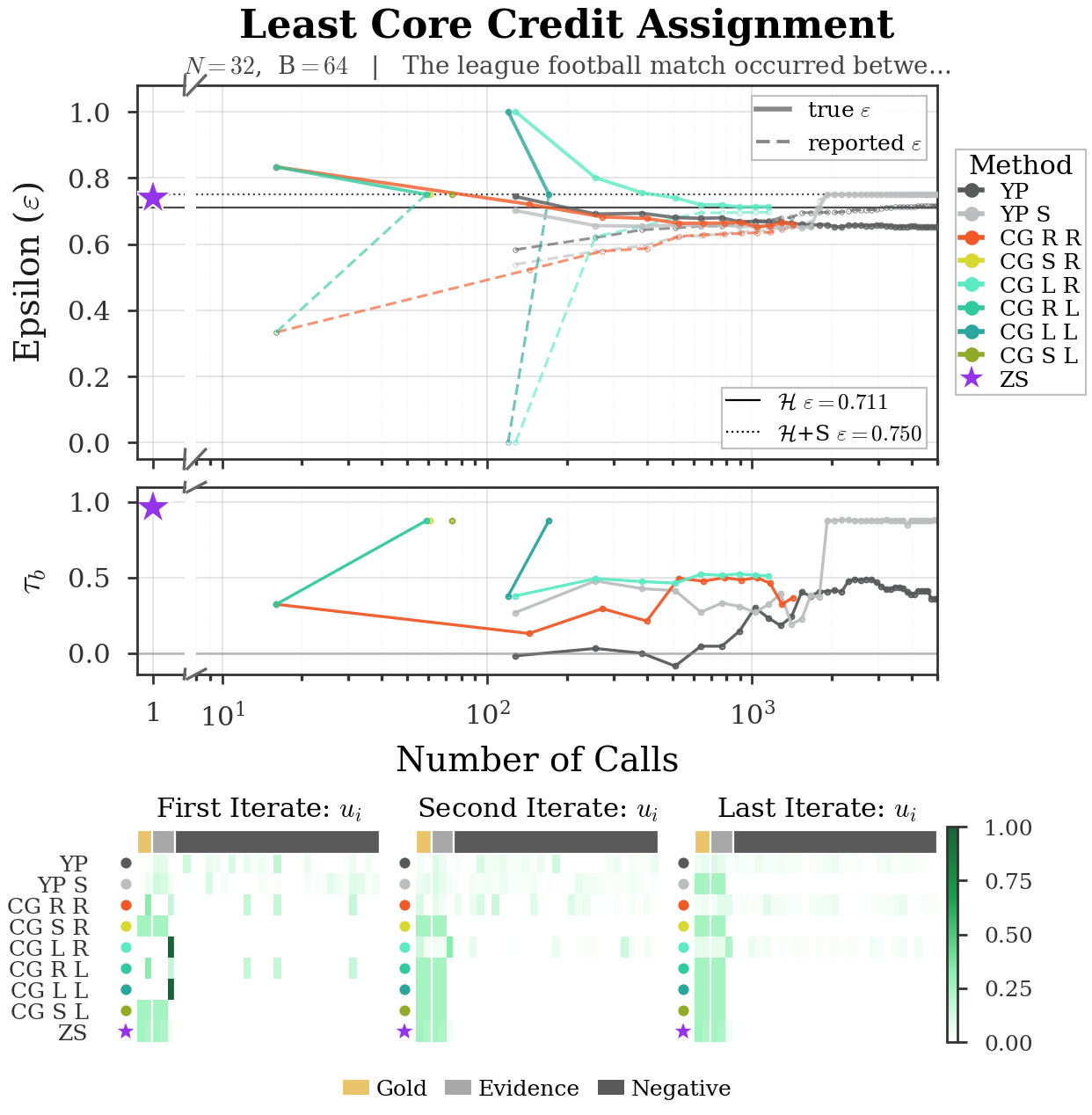}
        \caption{BrowseComp $n{=}32$, Row~9 (non-egal.)}
    \end{subfigure}\\[6pt]
    \begin{subfigure}[t]{0.49\textwidth}
        \includegraphics[width=\textwidth]{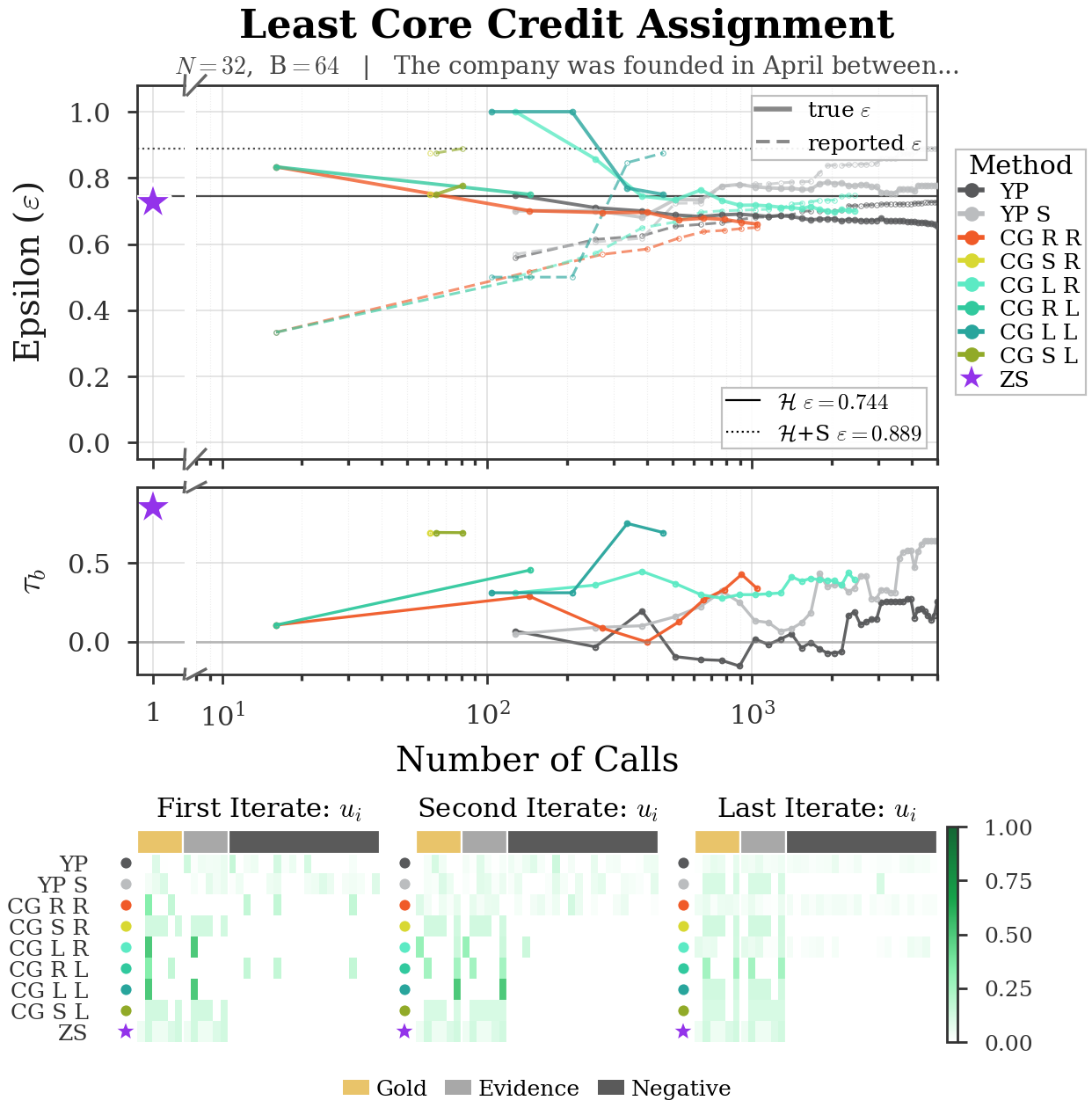}
        \caption{BrowseComp $n{=}32$, Row~22 (non-egal.)}
    \end{subfigure}\hfill
    \begin{subfigure}[t]{0.49\textwidth}
        \includegraphics[width=\textwidth]{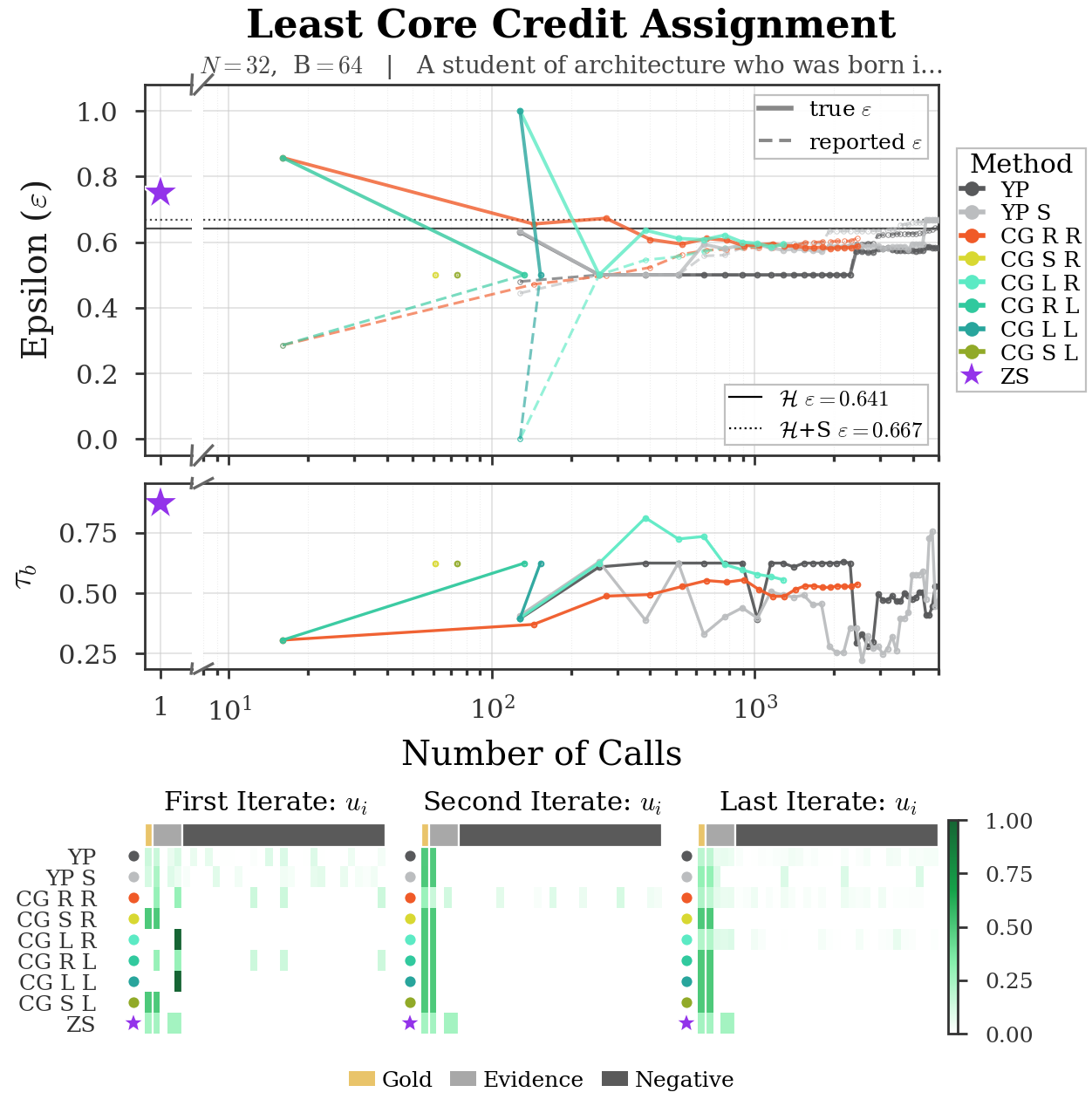}
        \caption{BrowseComp $n{=}32$, Row~31 (non-egal.)}
    \end{subfigure}
    \caption{Additional BrowseComp $n{=}32$ convergence traces. \textbf{CG(R,R)} consistently matches \textbf{YP}'s holdout $\hat{\varepsilon}_{\mathcal{H}}$ with 8--9$\times$ fewer calls; in Row~31 the LLM-oracle variants undershoot \textbf{YP} by 0.08.}
    \label{fig:bc-n32-extra}
\end{figure*}

\newpage
\subsection{$n{=}10$ Experiments}
\label{app:n10}

We repeat our experiments with $n{=}10$ documents per problem instance. At this scale, the full coalition space $|2^n| = 1024$ is tractable, allowing a more precise holdout evaluation. The \textbf{YP} baseline and $\hat{\varepsilon}_{\mathcal{H}}$ are effectively exact. Forest plots and selected convergence traces for Synthetic and BrowseComp are provided in \cref{fig:forest-synth-n10,fig:synth-n10-nonegal,fig:synth-n10-egal} and \cref{fig:forest-bc-n10,fig:bc-n10-nonegal,fig:bc-n10-egal}, respectively.

\subsubsection{Synthetic $n{=}10$}

\begin{figure*}[!htb]
    \centering
    \includegraphics[width=\textwidth]{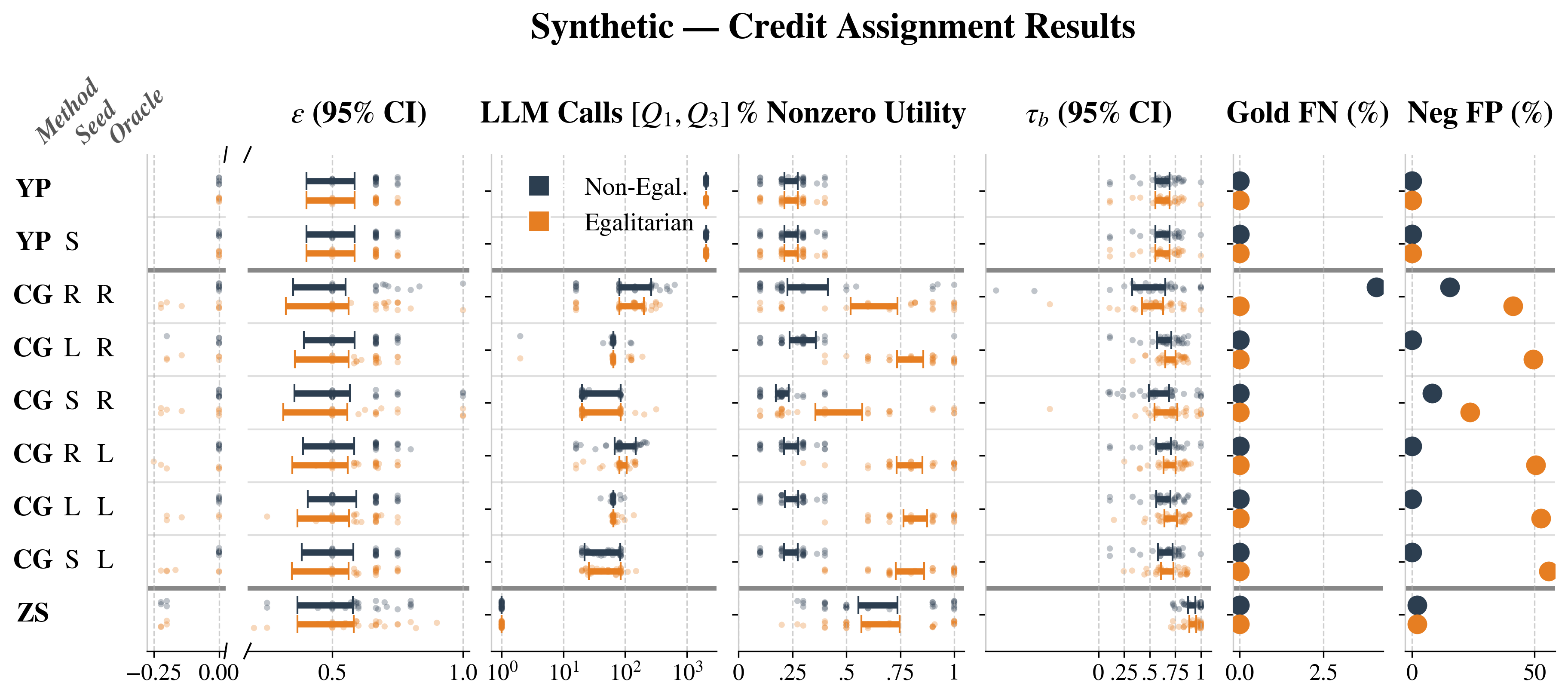}
    \caption{Credit assignment results on Synthetic ($n{=}10$) for non-egalitarian (dark) and egalitarian (light) settings. Lower $\hat{\varepsilon}_{\mathcal{H}}$ is better.}
    \label{fig:forest-synth-n10}
\end{figure*}

\newpage
\paragraph{Selected non-egalitarian traces.}
At $n{=}10$, \textbf{CG} methods converge rapidly and often match \textbf{YP}'s holdout $\hat{\varepsilon}_{\mathcal{H}}$ exactly. In Row~10, all \textbf{CG} variants reach \textbf{YP}'s $\hat{\varepsilon}_{\mathcal{H}} = 0.50$ in 53--146 calls. In Row~13, \textbf{CG(R,R)} and \textbf{CG(R,S)} undershoot \textbf{YP}'s $\hat{\varepsilon}_{\mathcal{H}} = 0.67$ by 0.17, reaching $\hat{\varepsilon}_{\mathcal{H}} = 0.50$, while \textbf{ZS} undershoots modestly ($\hat{\varepsilon}_{\mathcal{H}} = 0.58$, $-0.08$). Row~6 and Row~15 show \textbf{CG(R,R)} matching or slightly beating \textbf{YP}'s baseline.

\begin{figure*}[!htb]
    \centering
    \begin{subfigure}[t]{0.49\textwidth}
        \includegraphics[width=\textwidth]{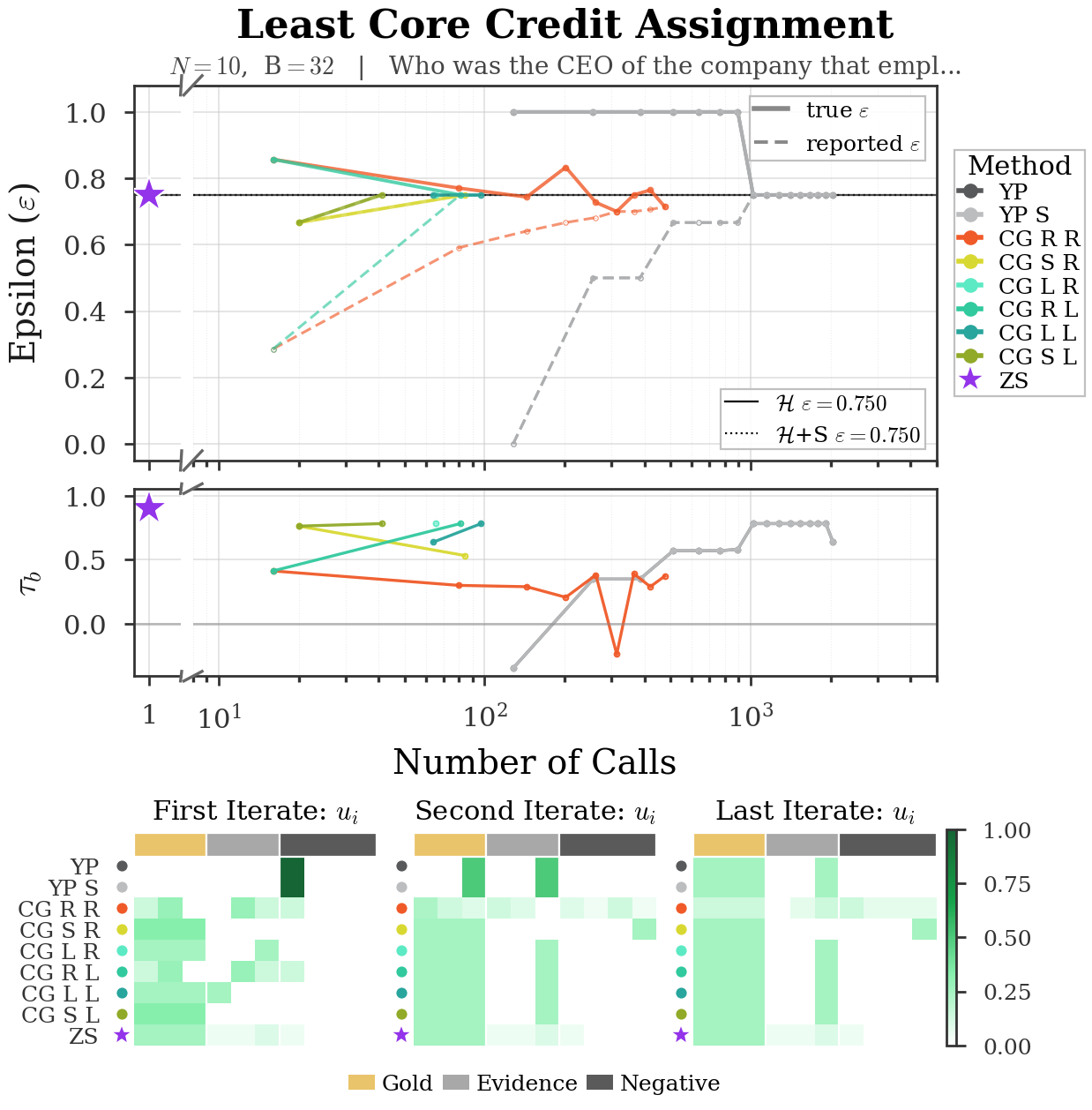}
        \caption{Row~6 (non-egal.)}
    \end{subfigure}\hfill
    \begin{subfigure}[t]{0.49\textwidth}
        \includegraphics[width=\textwidth]{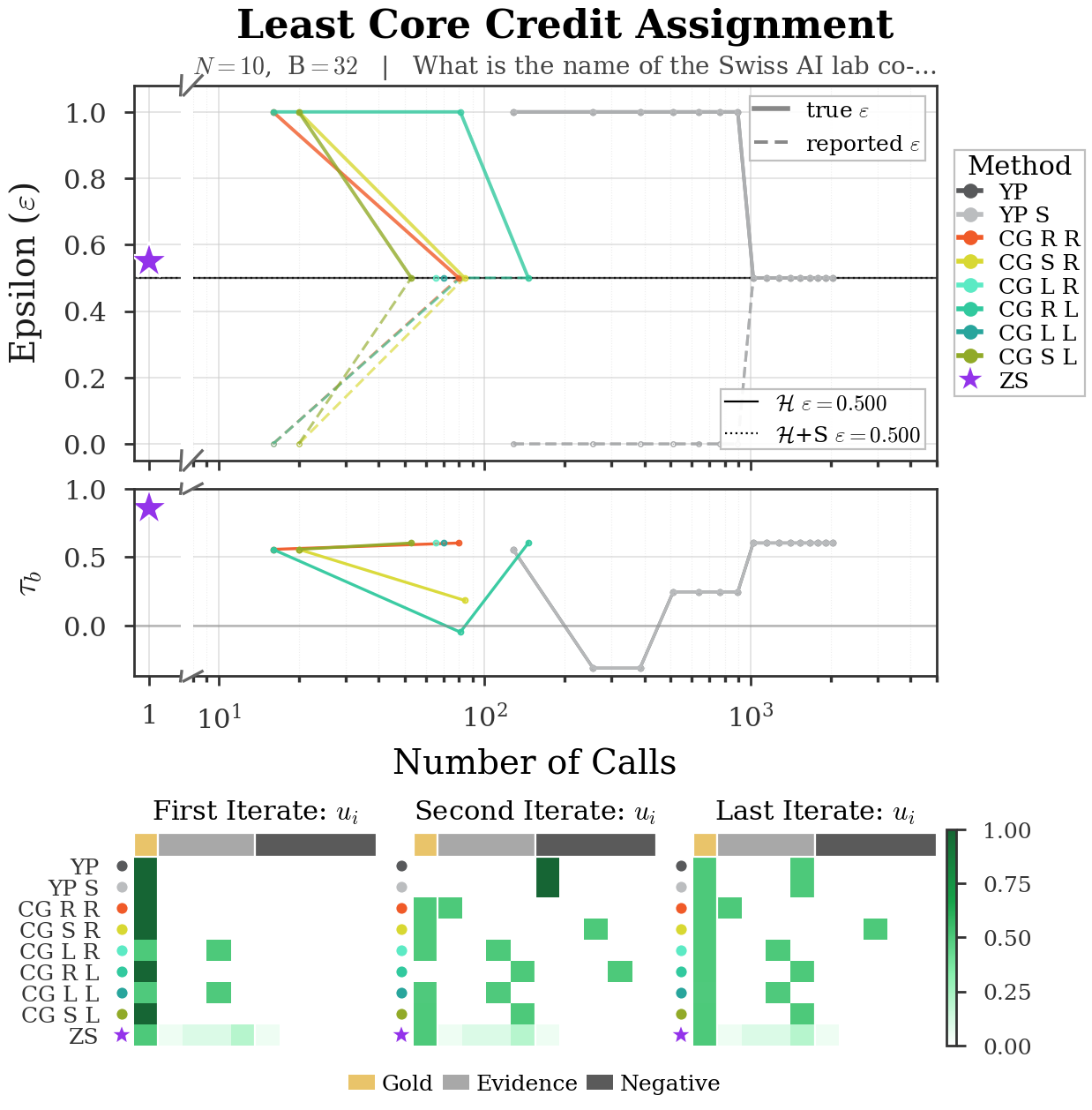}
        \caption{Row~10 (non-egal.)}
    \end{subfigure}\\[6pt]
    \begin{subfigure}[t]{0.49\textwidth}
        \includegraphics[width=\textwidth]{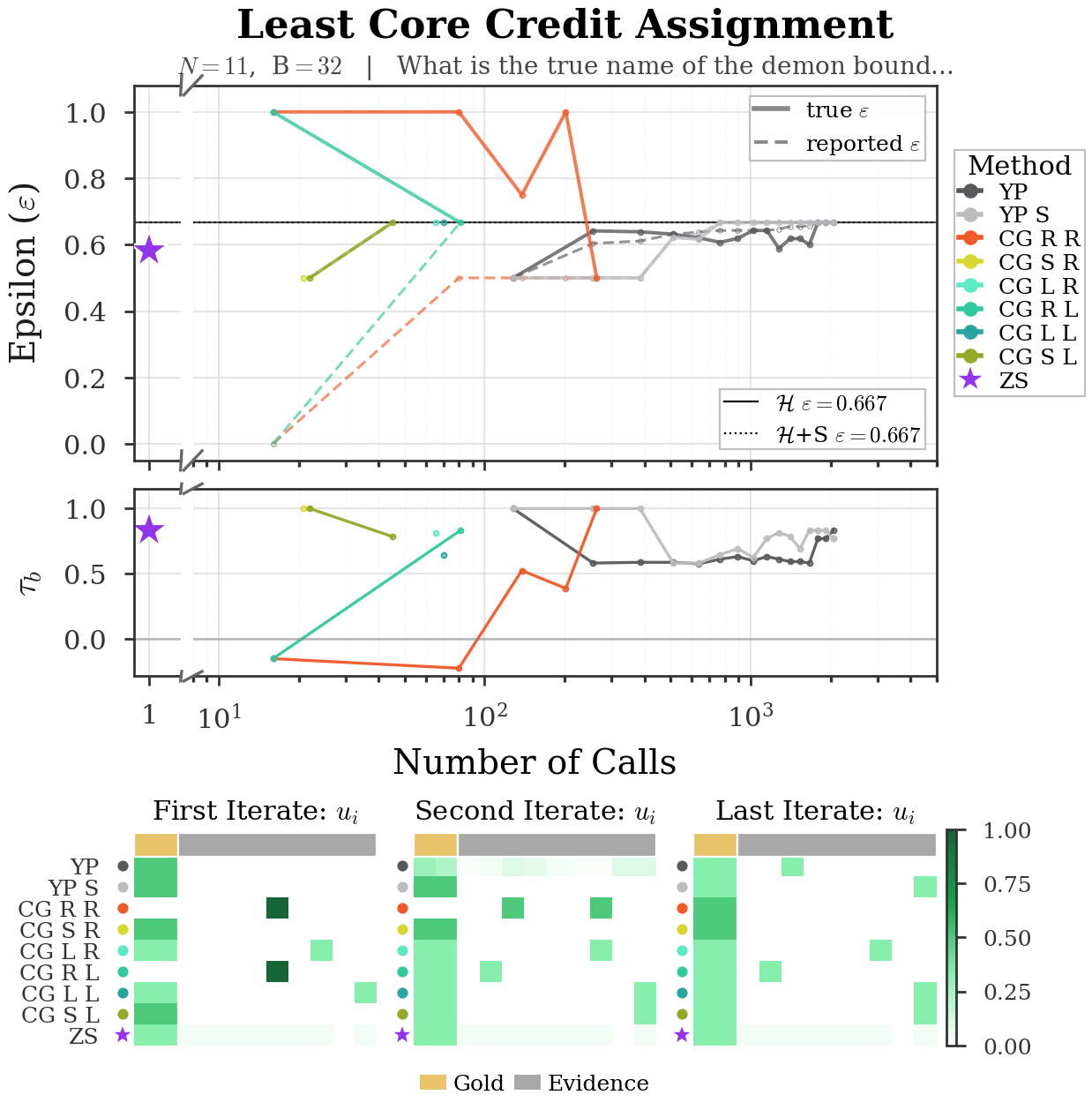}
        \caption{Row~13 (non-egal.)}
    \end{subfigure}\hfill
    \begin{subfigure}[t]{0.49\textwidth}
        \includegraphics[width=\textwidth]{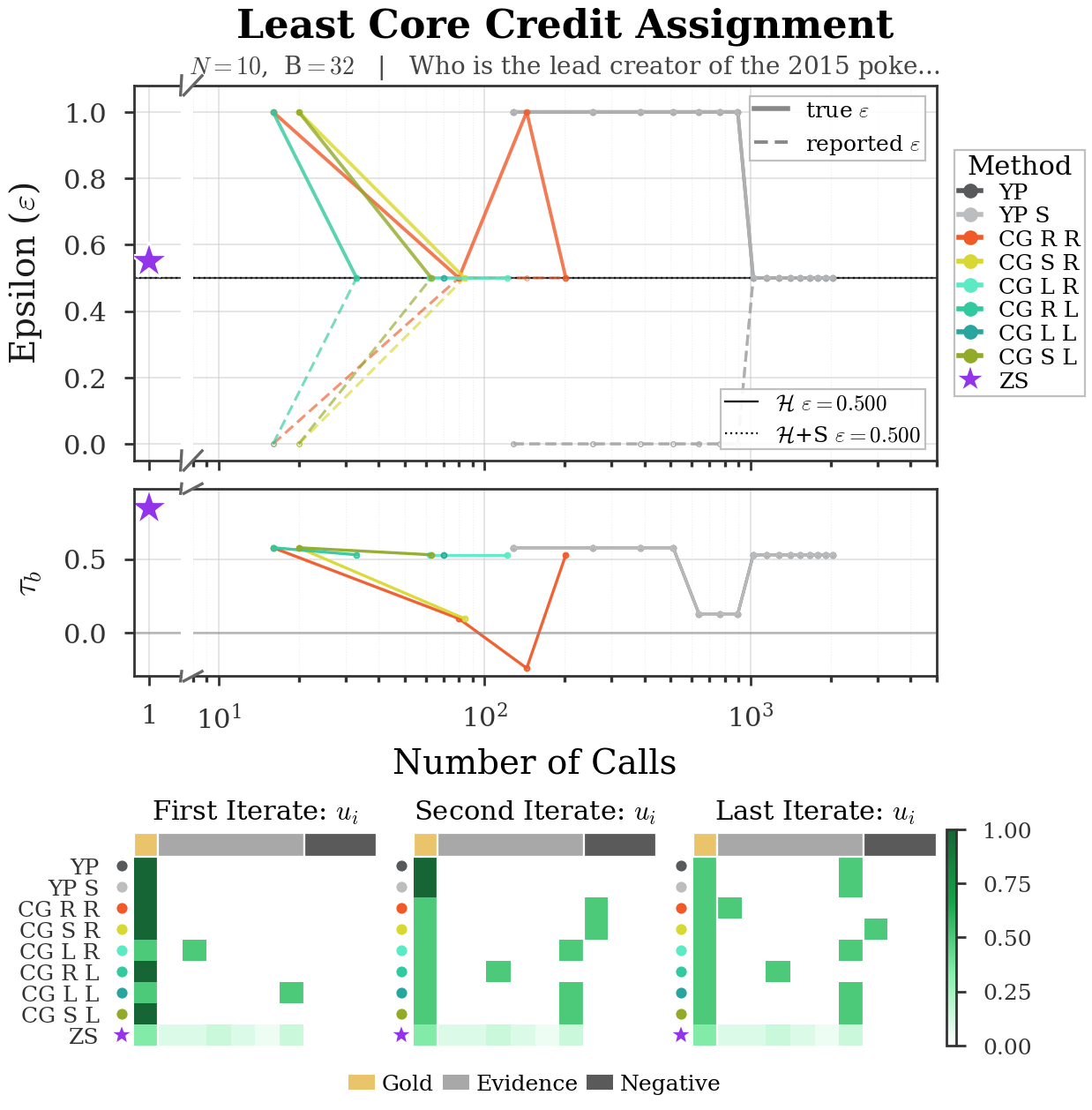}
        \caption{Row~20 (non-egal.)}
    \end{subfigure}
    \caption{Synthetic $n{=}10$, non-egalitarian convergence traces. \textbf{CG} methods match or undershoot \textbf{YP}'s holdout $\hat{\varepsilon}_{\mathcal{H}}$ within ${\sim}50$--300 calls.}
    \label{fig:synth-n10-nonegal}
\end{figure*}

\begin{figure*}[!htb]\ContinuedFloat
    \centering
    \begin{subfigure}[t]{0.49\textwidth}
        \includegraphics[width=\textwidth]{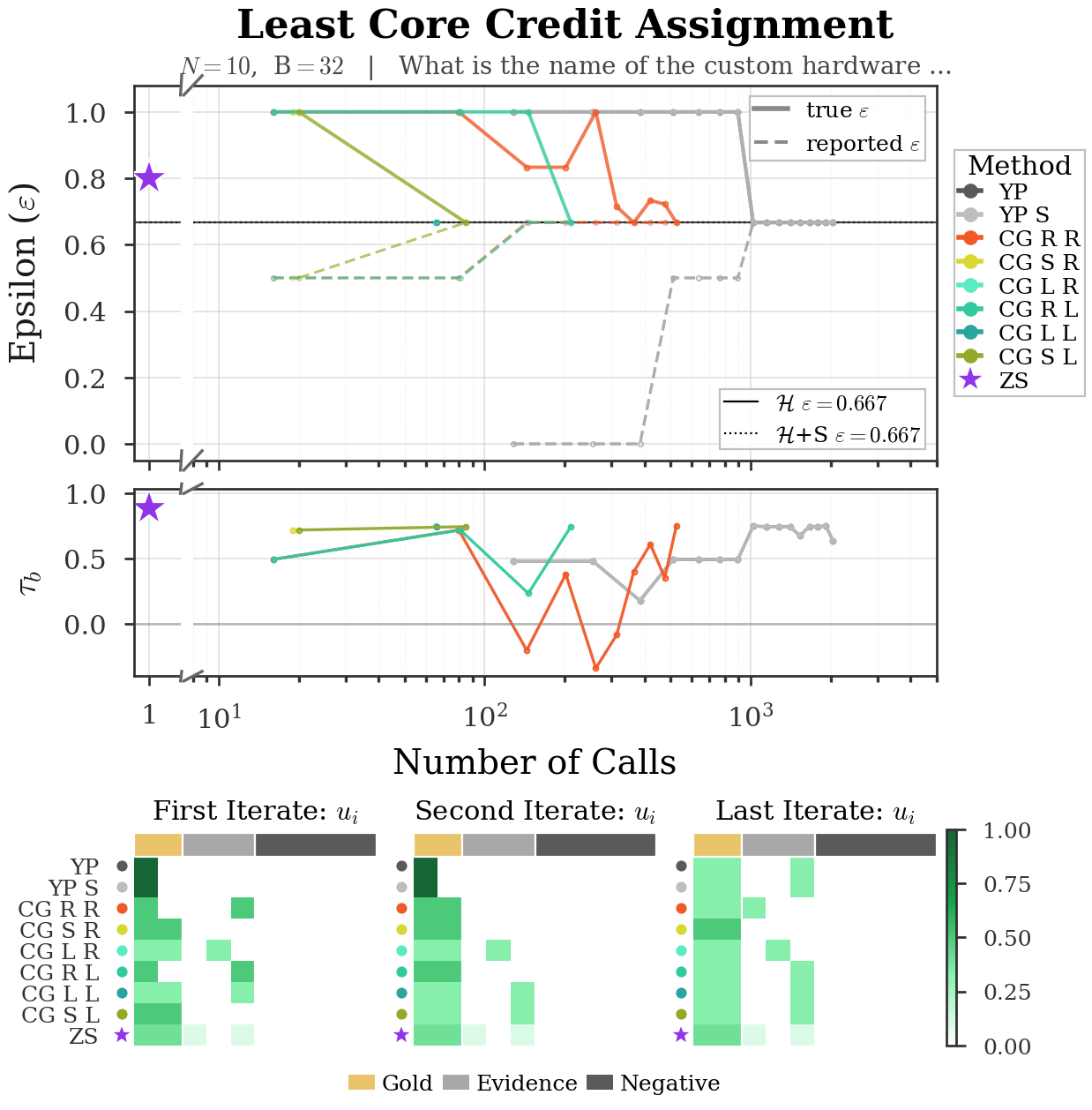}
        \caption{Row~21 (non-egal.)}
    \end{subfigure}\hfill
    \begin{subfigure}[t]{0.49\textwidth}
        \includegraphics[width=\textwidth]{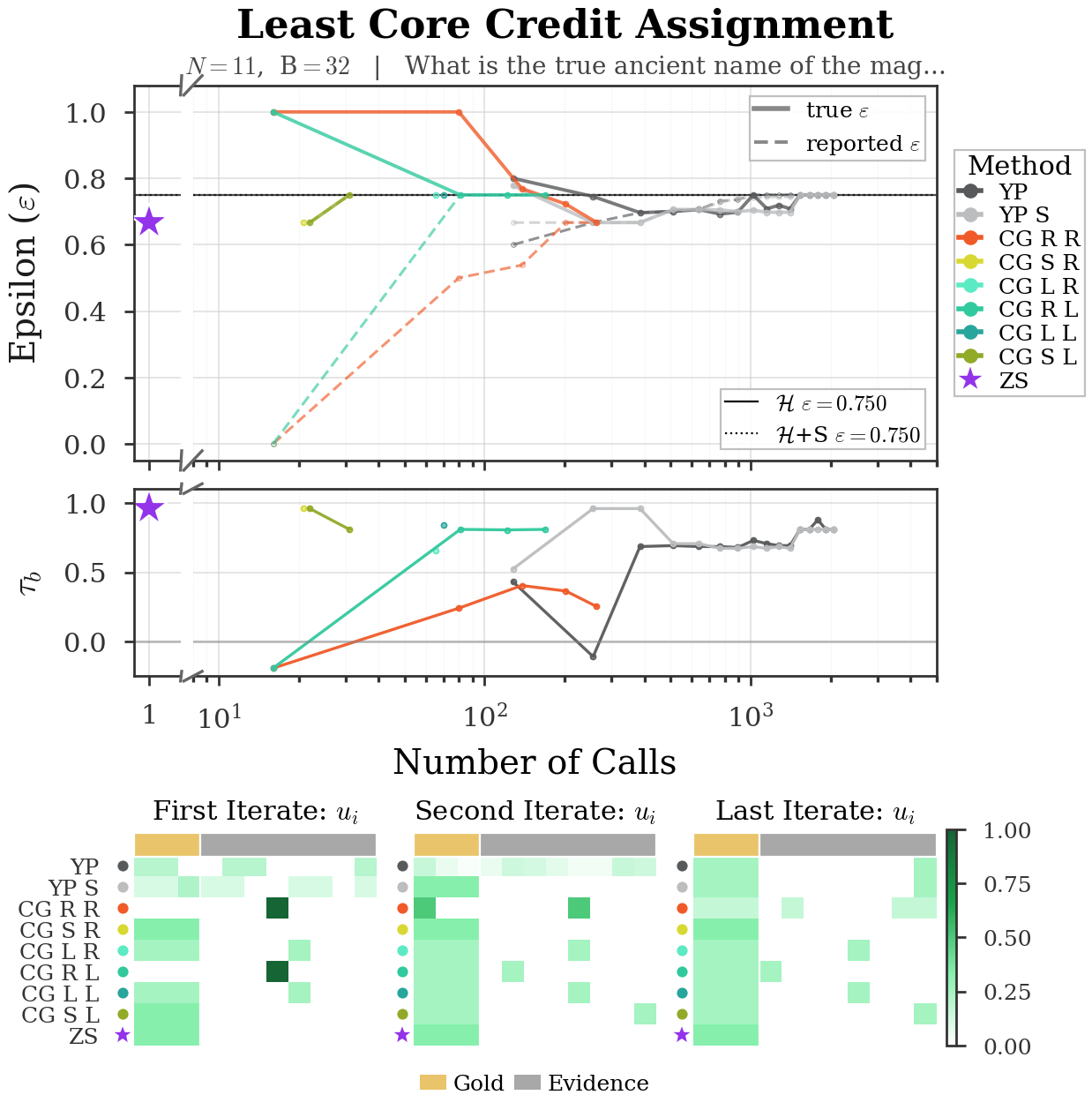}
        \caption{Row~15 (non-egal.)}
    \end{subfigure}
    \caption[]{Synthetic $n{=}10$, non-egalitarian convergence traces (cont.).}
\end{figure*}

\newpage
\paragraph{Selected egalitarian traces.}
The egalitarian objective redistributes credit without substantially degrading $\hat{\varepsilon}_{\mathcal{H}}$. In Row~10 (egal.), all \textbf{CG} variants still match \textbf{YP}'s $\hat{\varepsilon}_{\mathcal{H}} = 0.50$, with visibly less sparse heatmaps. Row~15 shows \textbf{CG(R,R)} matching \textbf{YP}'s baseline within 0.01, with broader credit distribution.

\begin{figure*}[!htb]
    \centering
    \begin{subfigure}[t]{0.49\textwidth}
        \includegraphics[width=\textwidth]{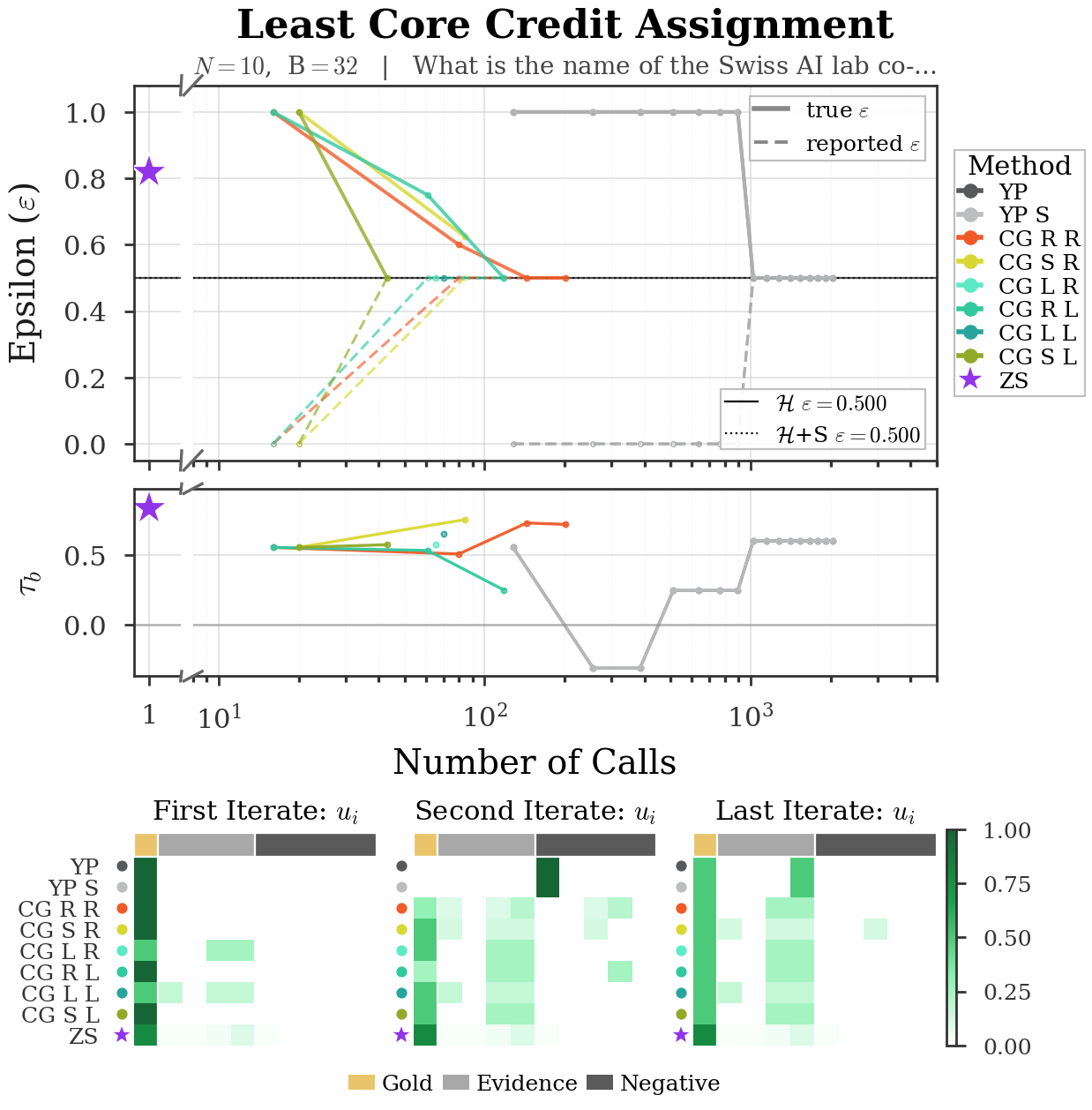}
        \caption{Row~10 (egal.)}
    \end{subfigure}\hfill
    \begin{subfigure}[t]{0.49\textwidth}
        \includegraphics[width=\textwidth]{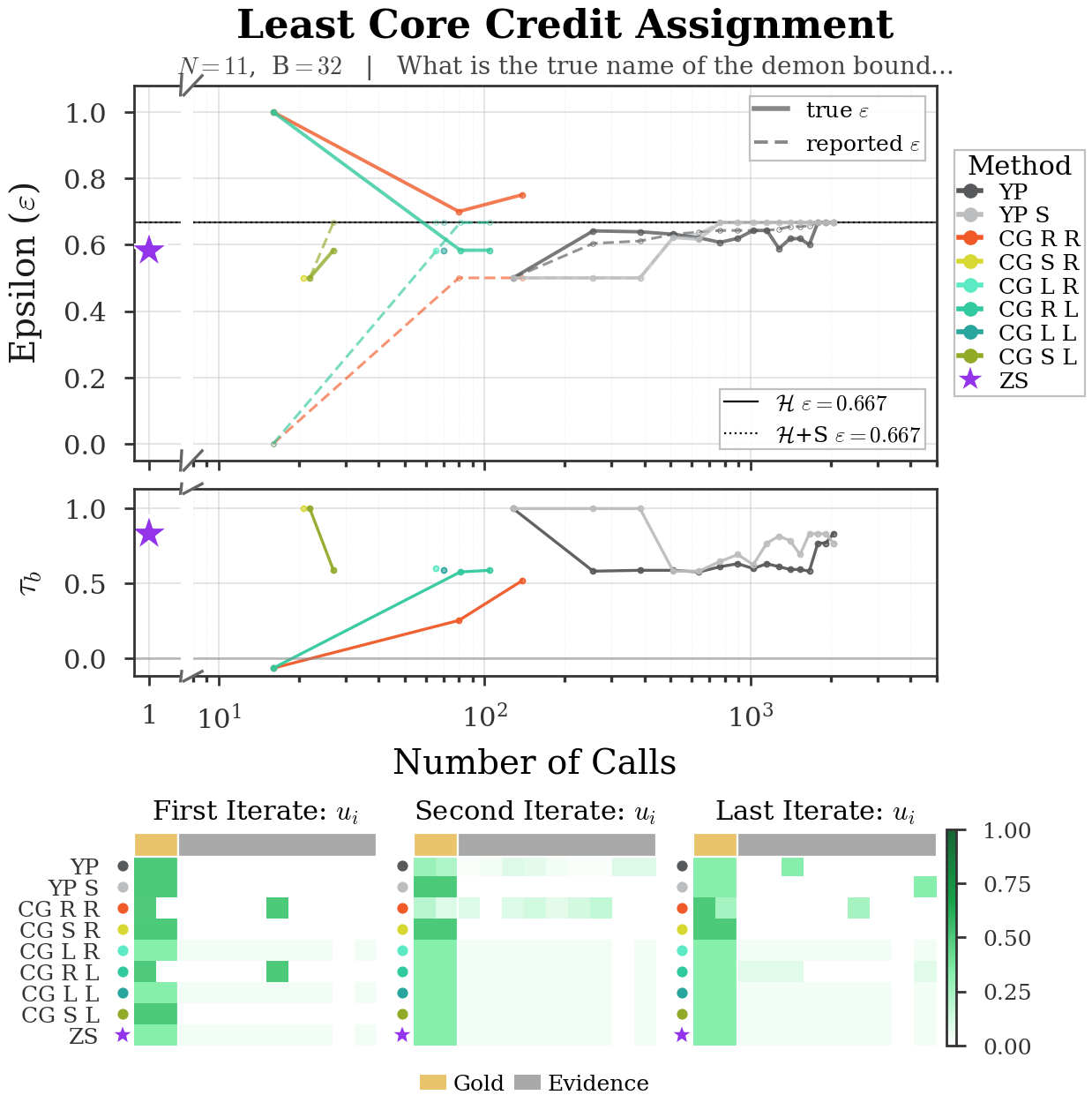}
        \caption{Row~13 (egal.)}
    \end{subfigure}\\[6pt]
    \begin{subfigure}[t]{0.49\textwidth}
        \includegraphics[width=\textwidth]{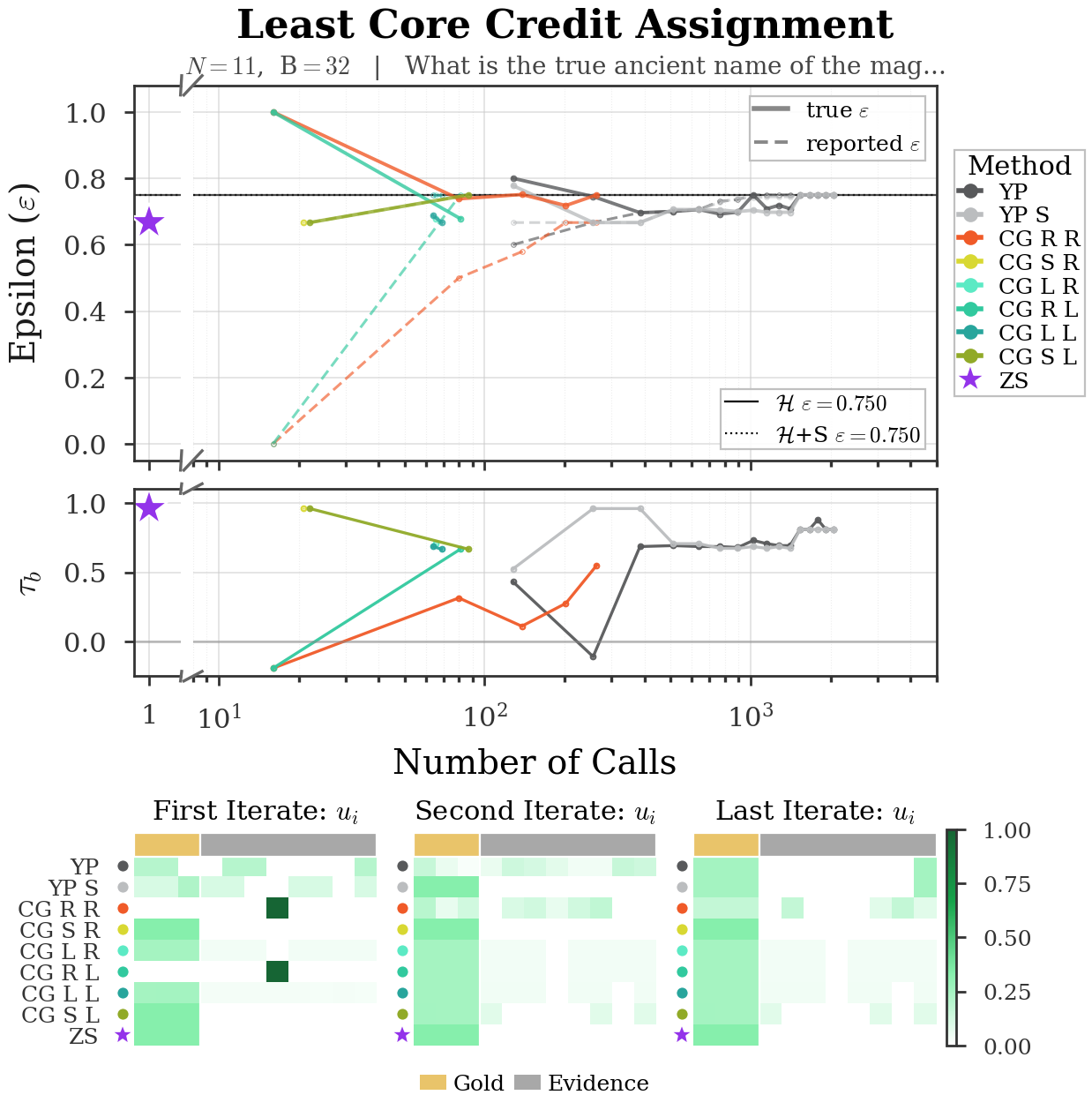}
        \caption{Row~15 (egal.)}
    \end{subfigure}\hfill
    \begin{subfigure}[t]{0.49\textwidth}
        \includegraphics[width=\textwidth]{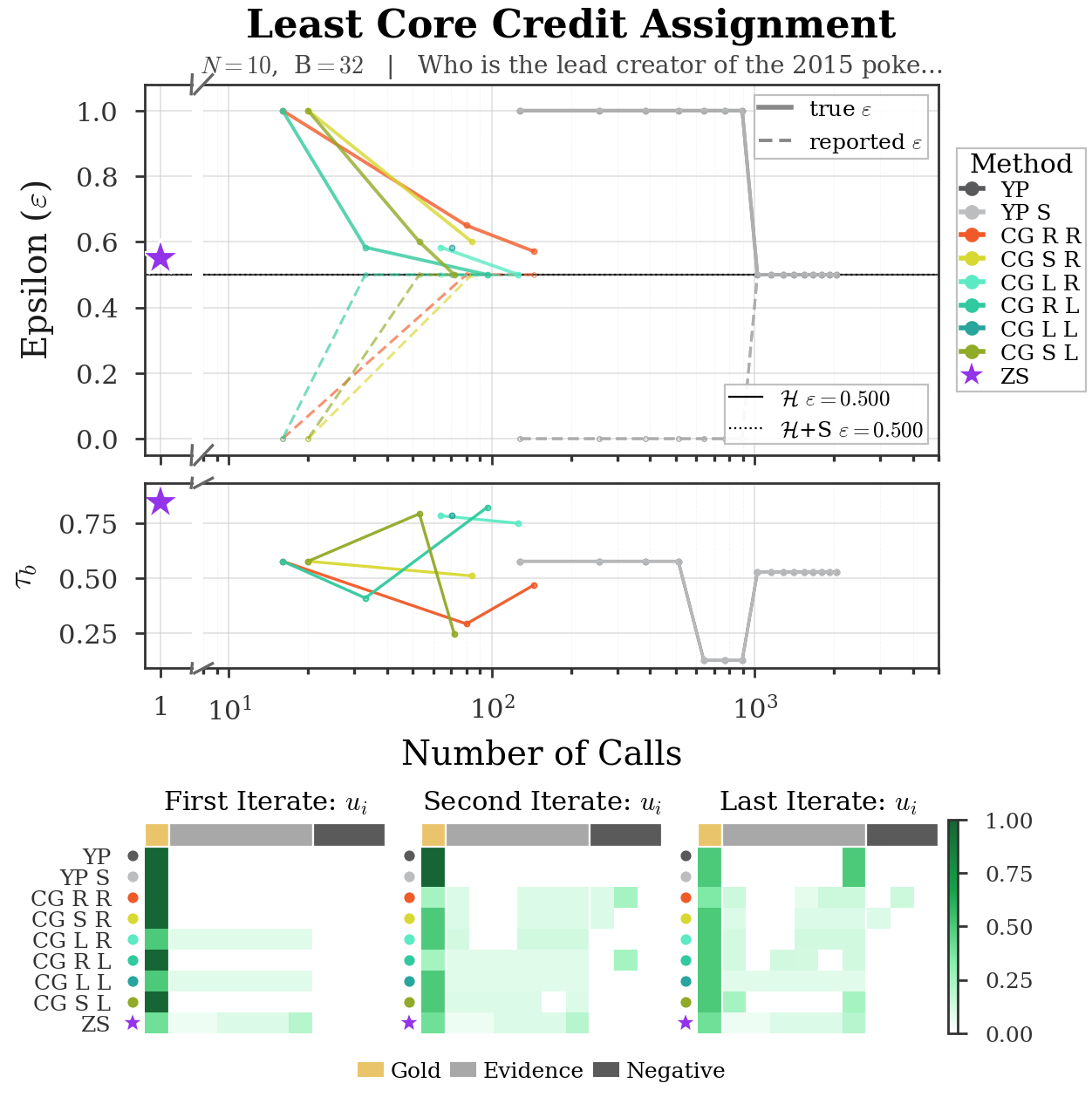}
        \caption{Row~20 (egal.)}
    \end{subfigure}
    \caption{Synthetic $n{=}10$, egalitarian convergence traces. Heatmaps show broader credit distribution with minimal $\hat{\varepsilon}_{\mathcal{H}}$ degradation relative to \textbf{YP}.}
    \label{fig:synth-n10-egal}
\end{figure*}

\newpage
\subsubsection{BrowseComp $n{=}10$}

\begin{figure*}[!htb]
    \centering
    \includegraphics[width=\textwidth]{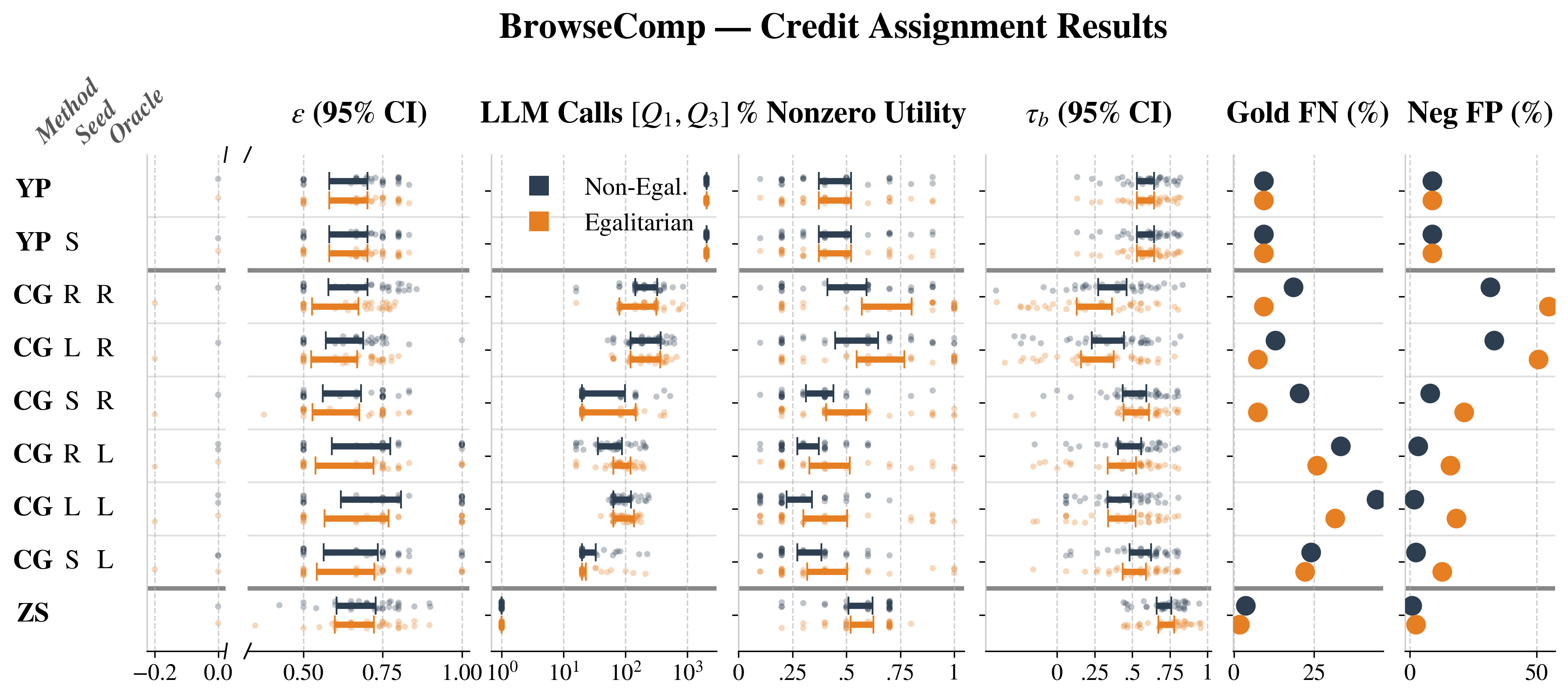}
    \caption{Credit assignment results on BrowseComp-Plus ($n{=}10$) for non-egalitarian (dark) and egalitarian (light) settings. Lower $\hat{\varepsilon}_{\mathcal{H}}$ is better.}
    \label{fig:forest-bc-n10}
\end{figure*}

\newpage
\paragraph{Selected non-egalitarian traces.}
BrowseComp at $n{=}10$ shows noisier dynamics due to label-quality issues. In Row~18, \textbf{CG(R,R)} matches \textbf{YP}'s $\hat{\varepsilon}_{\mathcal{H}} = 0.67$ exactly in ${\sim}200$ calls (a 10$\times$ reduction). In Row~21, \textbf{CG(R,ZS)} undershoots \textbf{YP}'s $\hat{\varepsilon}_{\mathcal{H}} = 0.75$ by 0.05. Row~22 shows \textbf{CG(R,R)} undershooting \textbf{YP}'s $\hat{\varepsilon}_{\mathcal{H}} = 0.80$ by ${\sim}0.09$ in ${\sim}420$ calls. Row~1 demonstrates \textbf{CG(R,R)} matching \textbf{YP}'s $\hat{\varepsilon}_{\mathcal{H}} = 0.75$ exactly within ${\sim}80$ calls.

\begin{figure*}[!htb]
    \centering
    \begin{subfigure}[t]{0.49\textwidth}
        \includegraphics[width=\textwidth]{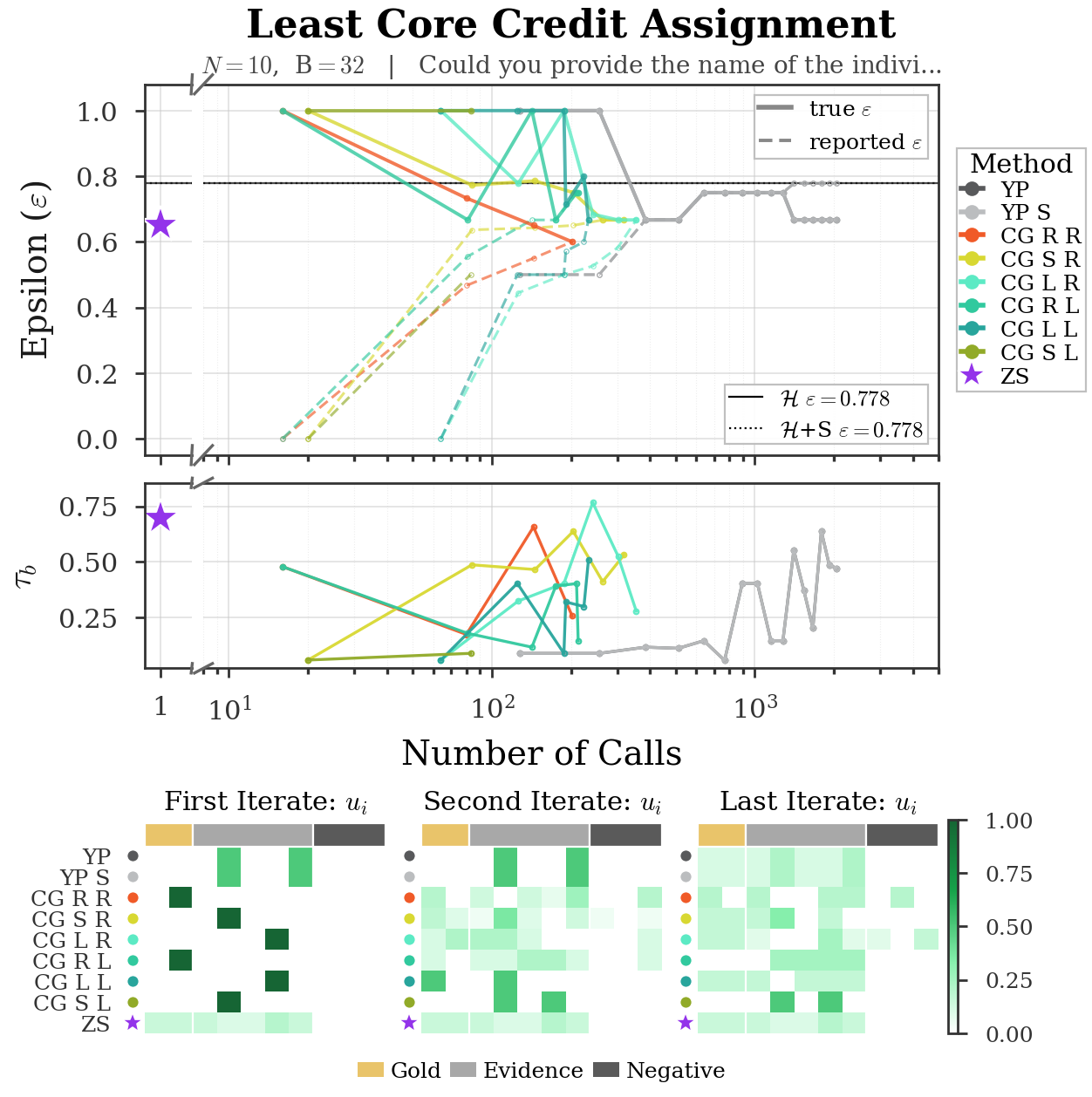}
        \caption{Row~1 (non-egal.)}
    \end{subfigure}\hfill
    \begin{subfigure}[t]{0.49\textwidth}
        \includegraphics[width=\textwidth]{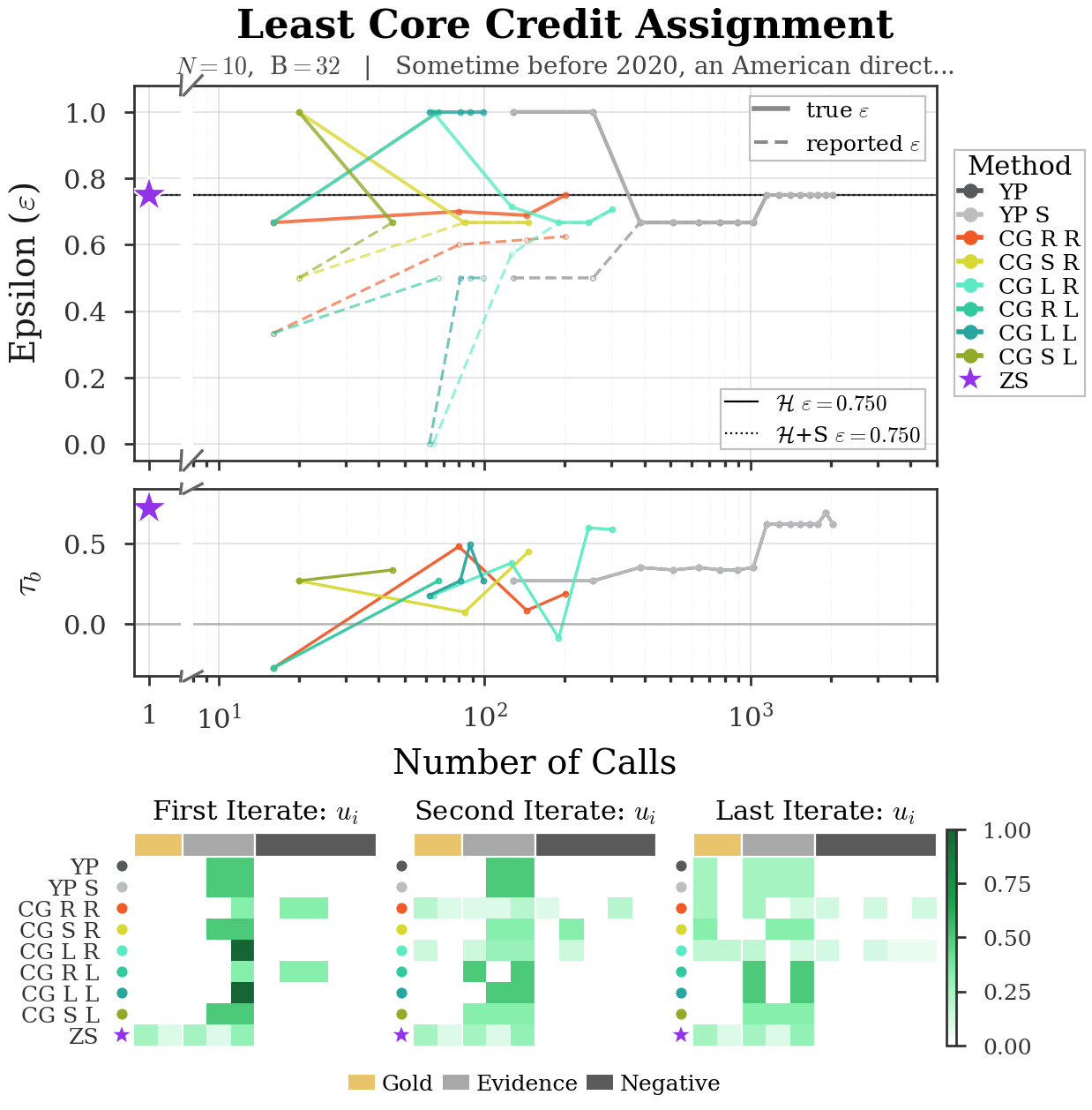}
        \caption{Row~10 (non-egal.)}
    \end{subfigure}\\[6pt]
    \begin{subfigure}[t]{0.49\textwidth}
        \includegraphics[width=\textwidth]{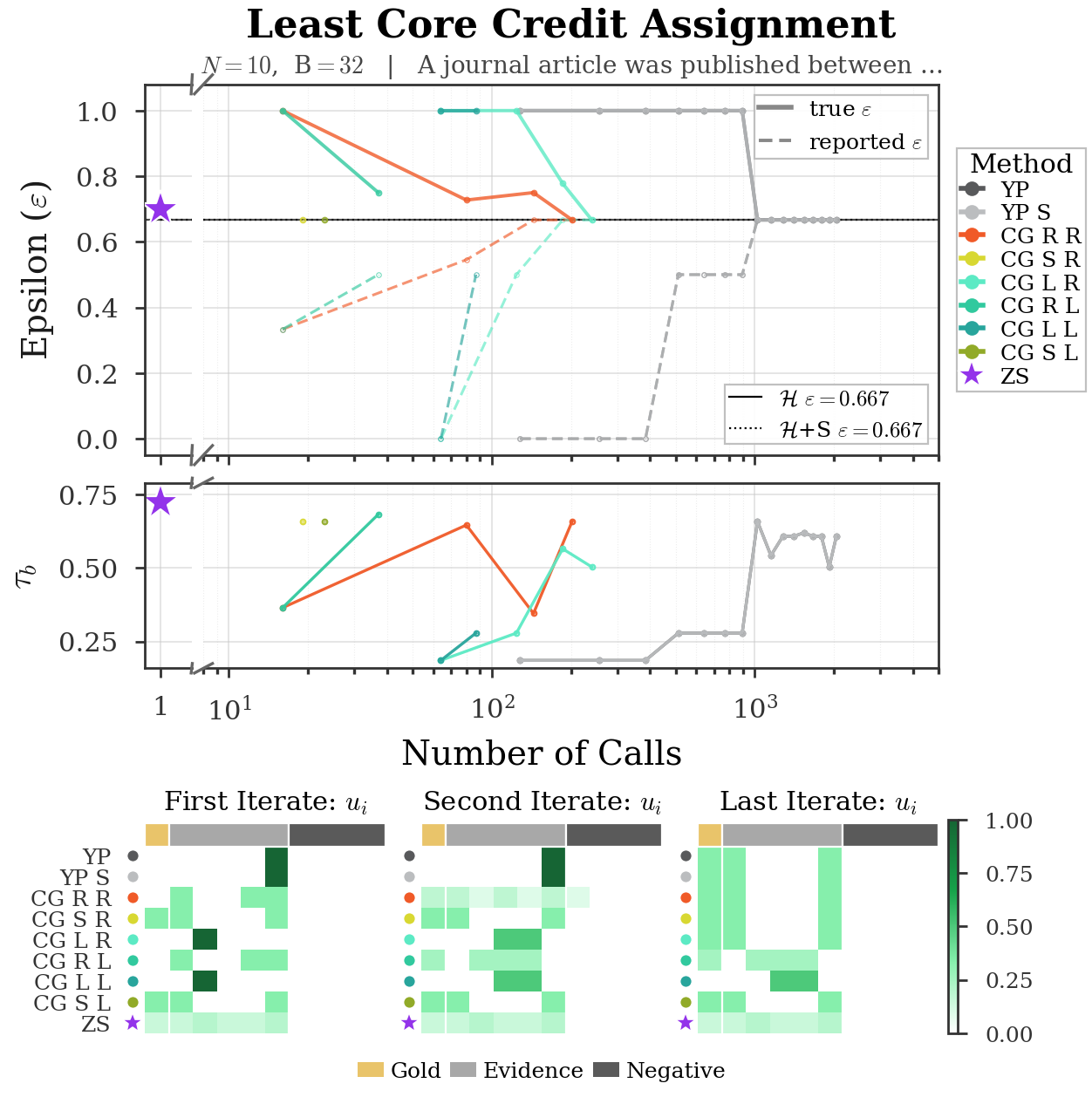}
        \caption{Row~18 (non-egal.)}
    \end{subfigure}\hfill
    \begin{subfigure}[t]{0.49\textwidth}
        \includegraphics[width=\textwidth]{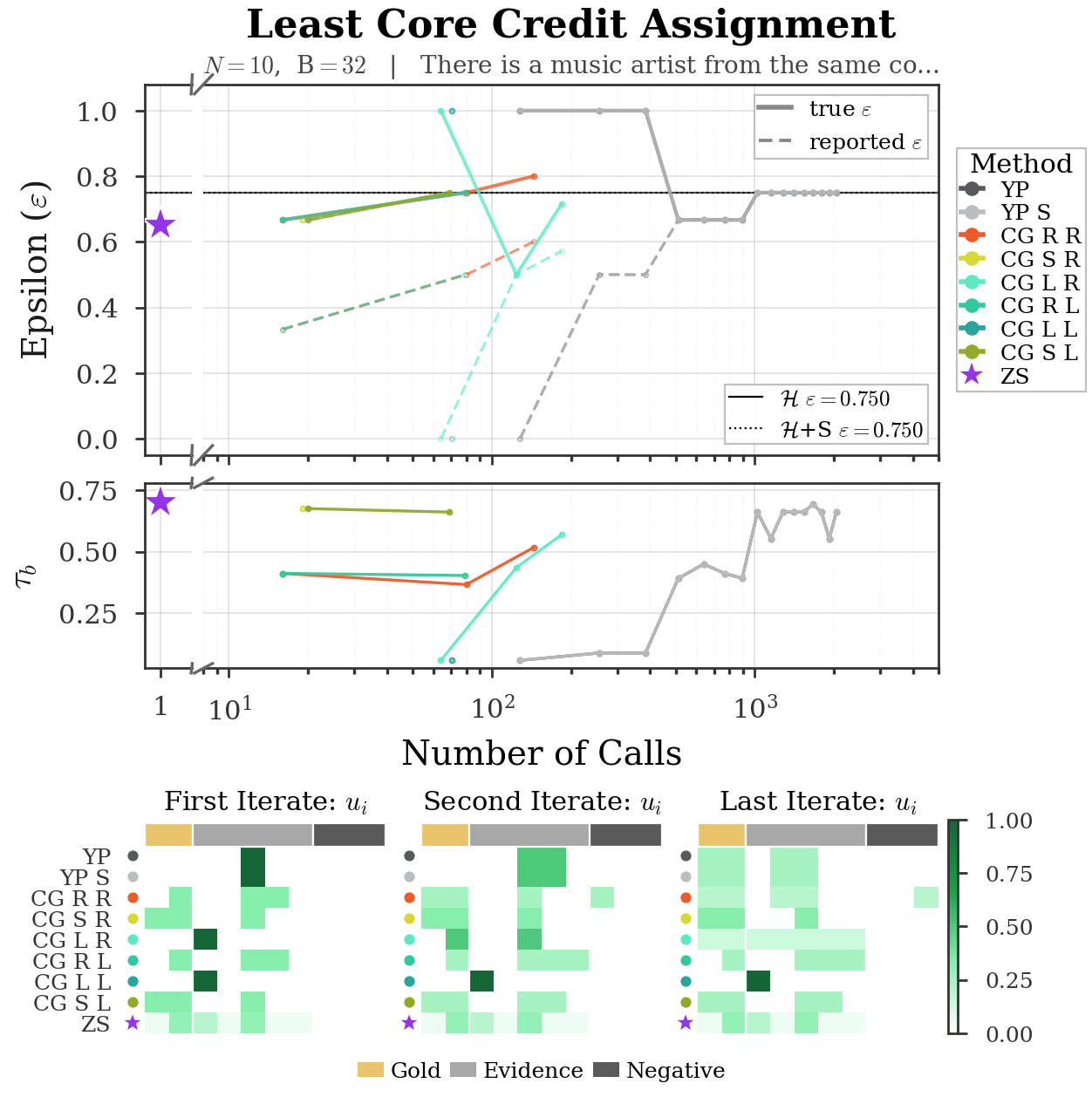}
        \caption{Row~21 (non-egal.)}
    \end{subfigure}
    \caption{BrowseComp $n{=}10$, non-egalitarian convergence traces. \textbf{CG} methods match or undershoot \textbf{YP}'s holdout $\hat{\varepsilon}_{\mathcal{H}}$ in 80--420 calls.}
    \label{fig:bc-n10-nonegal}
\end{figure*}

\begin{figure*}[!htb]\ContinuedFloat
    \centering
    \begin{subfigure}[t]{0.49\textwidth}
        \includegraphics[width=\textwidth]{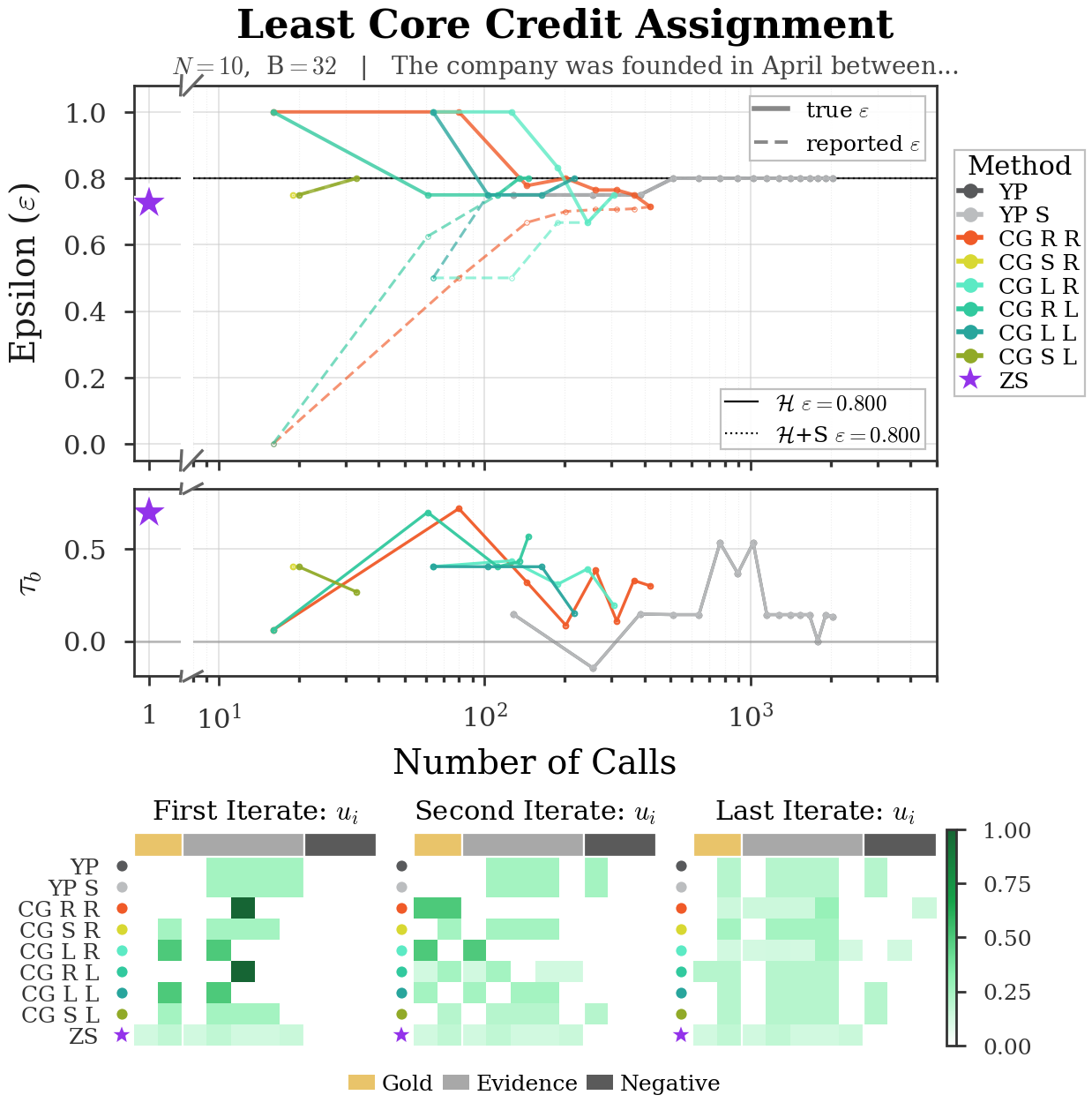}
        \caption{Row~22 (non-egal.)}
    \end{subfigure}
    \caption[]{BrowseComp $n{=}10$, non-egalitarian convergence traces (cont.).}
\end{figure*}

\newpage
\paragraph{Selected egalitarian traces.}
The egalitarian traces show broader credit redistribution with minimal $\hat{\varepsilon}_{\mathcal{H}}$ degradation. In Row~26, \textbf{CG(R,R)} undershoots \textbf{YP}'s $\hat{\varepsilon}_{\mathcal{H}} = 0.80$ by 0.13 ($\hat{\varepsilon}_{\mathcal{H}} = 0.67$ at ${\sim}144$ calls). Row~25 shows \textbf{CG(R,R)} slightly undershooting \textbf{YP} ($-0.03$). In Row~28, \textbf{CG(R,ZS)} matches \textbf{YP}'s $\hat{\varepsilon}_{\mathcal{H}} = 0.69$ within 0.01, while the LLM-oracle variants overshoot by ${\sim}0.31$---a reminder that \textbf{CG(L,$\cdot$)} trades convergence speed for precision.

\begin{figure*}[!htb]
    \centering
    \begin{subfigure}[t]{0.49\textwidth}
        \includegraphics[width=\textwidth]{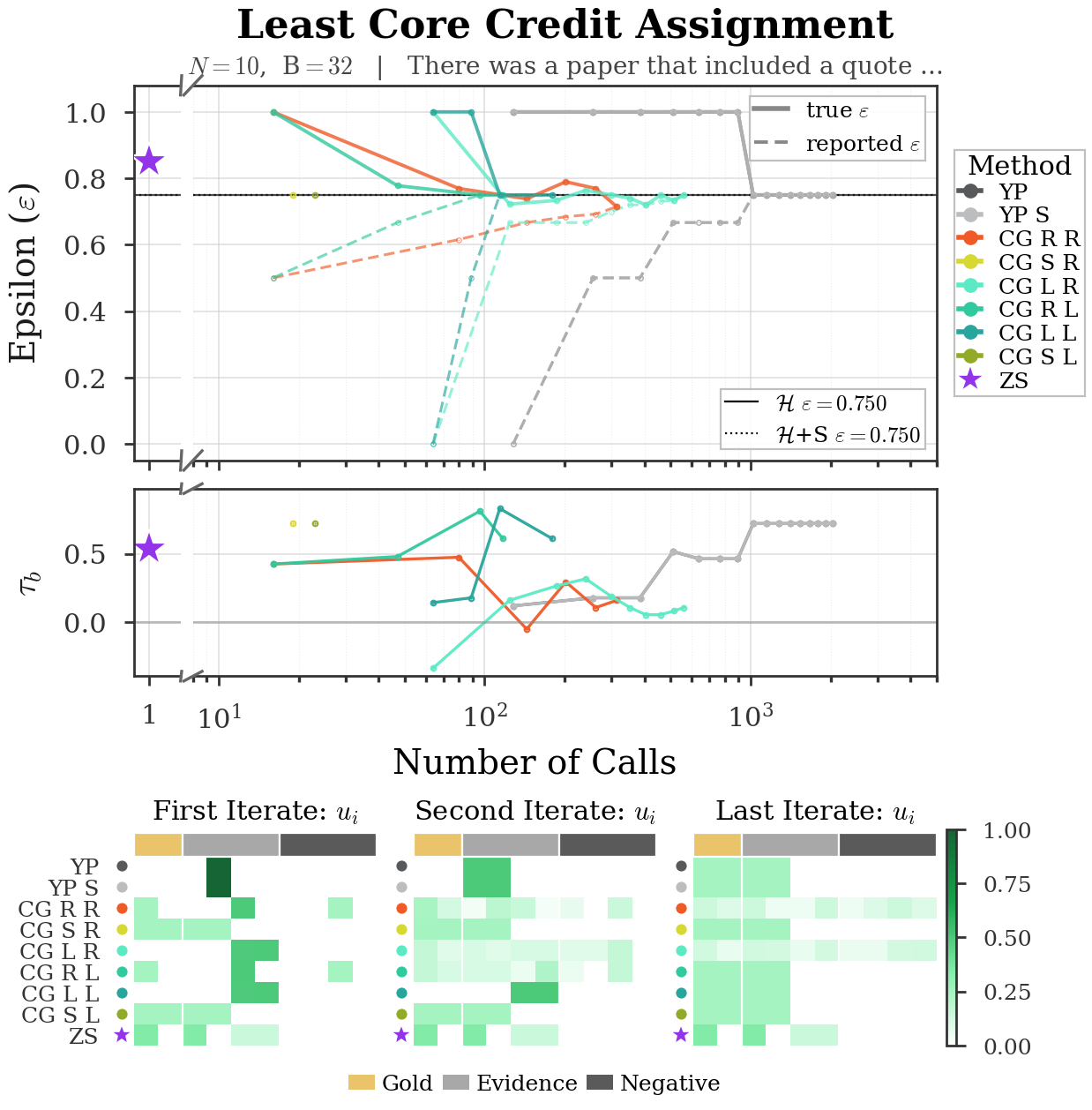}
        \caption{Row~12 (egal.)}
    \end{subfigure}\hfill
    \begin{subfigure}[t]{0.49\textwidth}
        \includegraphics[width=\textwidth]{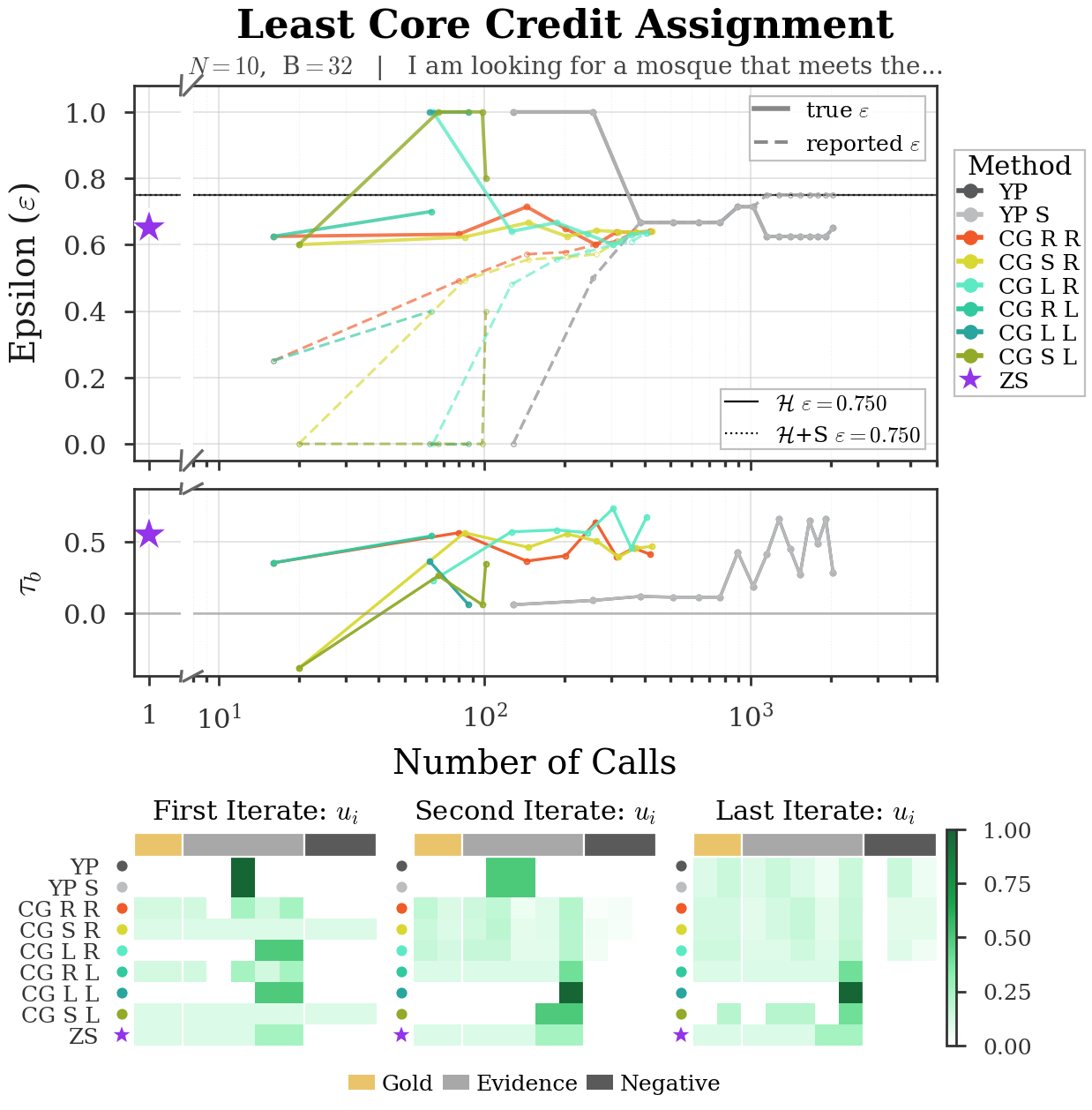}
        \caption{Row~15 (egal.)}
    \end{subfigure}\\[6pt]
    \begin{subfigure}[t]{0.49\textwidth}
        \includegraphics[width=\textwidth]{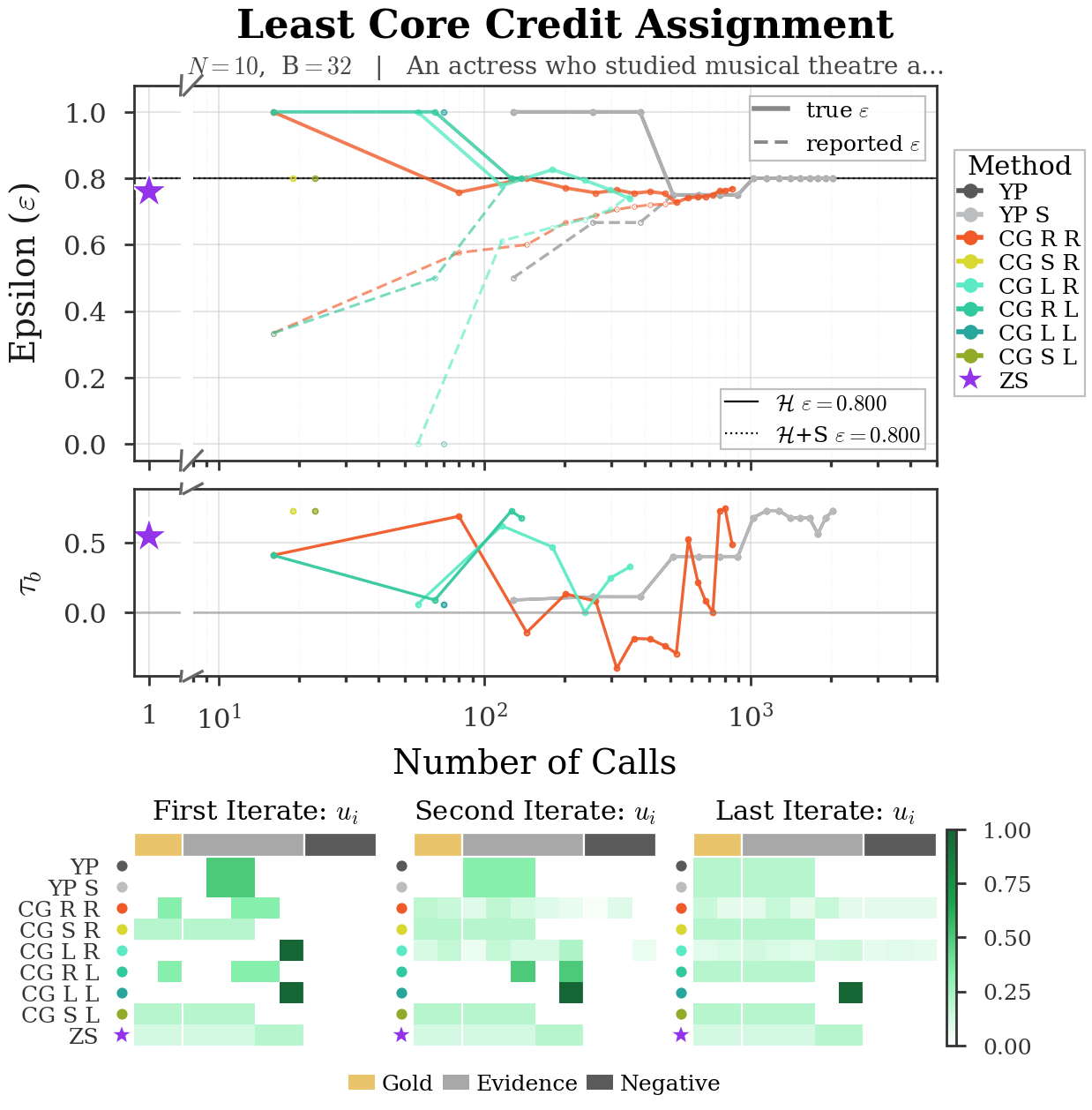}
        \caption{Row~25 (egal.)}
    \end{subfigure}\hfill
    \begin{subfigure}[t]{0.49\textwidth}
        \includegraphics[width=\textwidth]{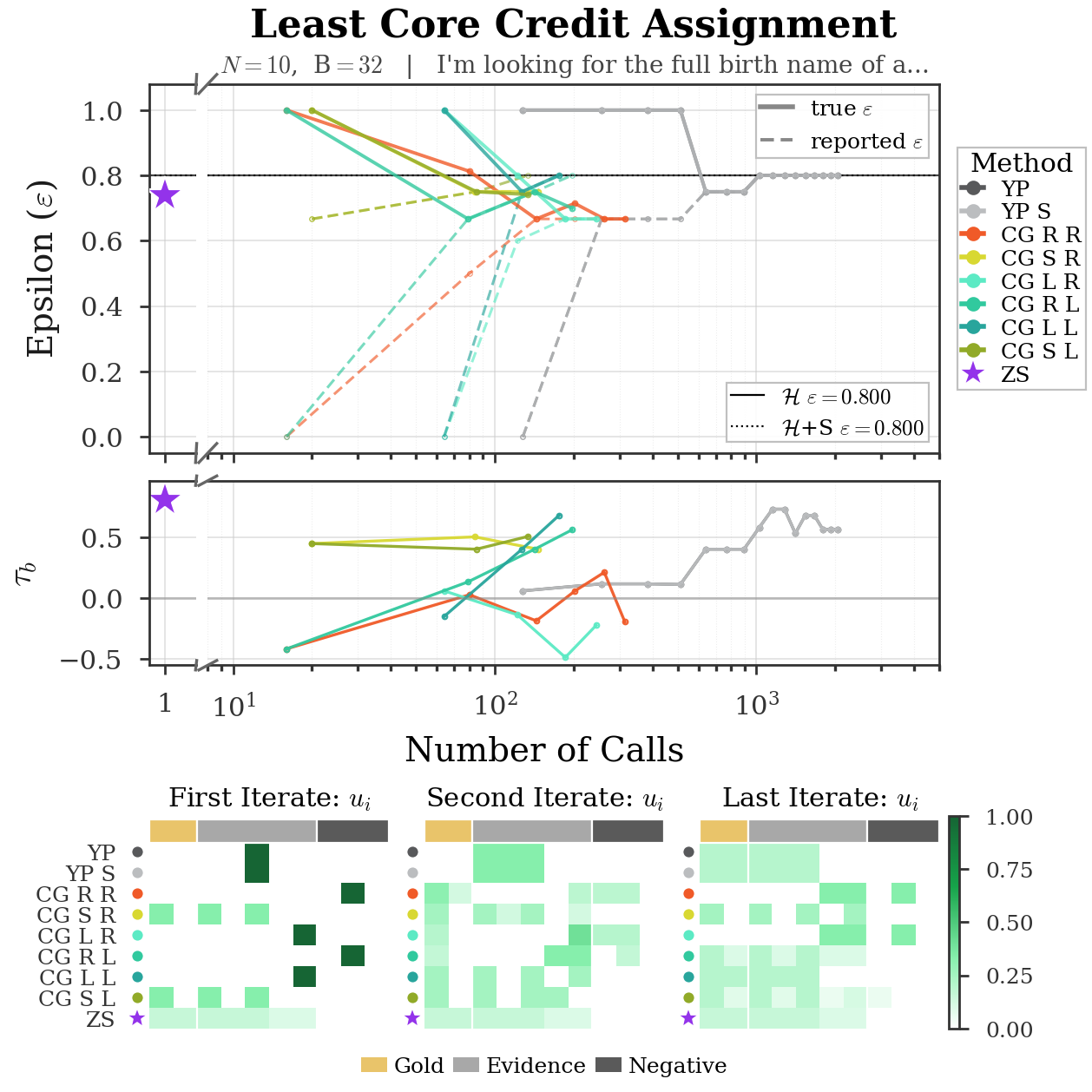}
        \caption{Row~26 (egal.)}
    \end{subfigure}
    \caption{BrowseComp $n{=}10$, egalitarian convergence traces. Heatmaps show broader credit redistribution; \textbf{CG(R,$\cdot$)} methods match or undershoot \textbf{YP}'s baseline.}
    \label{fig:bc-n10-egal}
\end{figure*}

\begin{figure*}[!htb]\ContinuedFloat
    \centering
    \begin{subfigure}[t]{0.49\textwidth}
        \includegraphics[width=\textwidth]{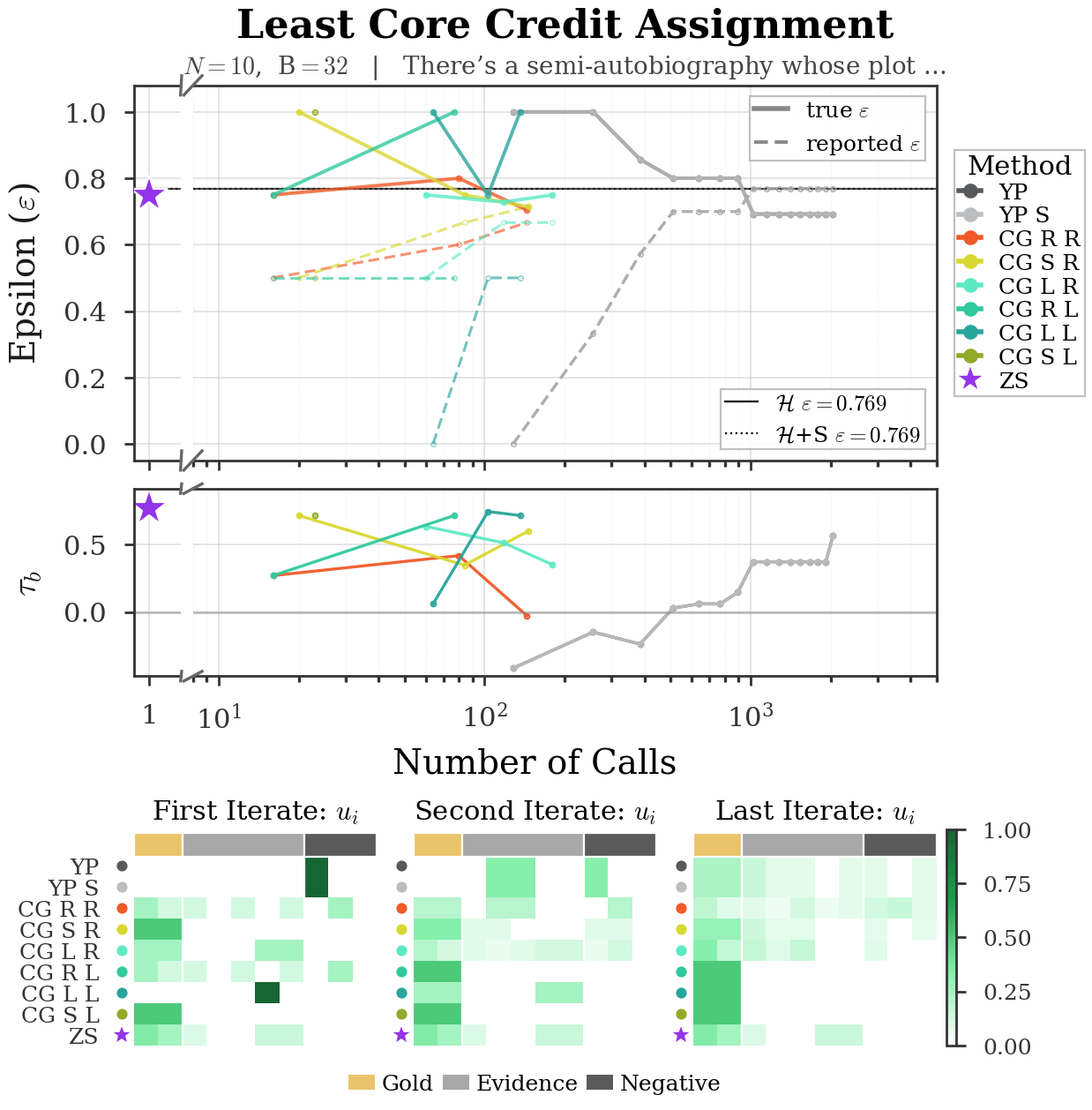}
        \caption{Row~28 (egal.)}
    \end{subfigure}
    \caption[]{BrowseComp $n{=}10$, egalitarian convergence traces (cont.).}
\end{figure*}

\newpage
\subsection{Calls-to-Target CDF Plots}
\label{app:cdf-plots}

The following plots show the fraction of $n{=}10$ instances for which each method's holdout $\hat{\varepsilon}_{\mathcal{H}}$ falls within a given tolerance of the holdout-optimal $\hat{\varepsilon}_{\mathcal{H}}^*$, as a function of LLM calls consumed. Since $n{=}10$ admits exhaustive evaluation (all $2^{10}-2$ non-trivial coalitions), $\hat{\varepsilon}_{\mathcal{H}}$ is exact (Q1.0, i.e., the true maximum violation). Baselines (\textbf{YP}, \textbf{YP\,S}) are plotted at their final call count only, since they do not support early stopping.

\begin{figure*}[!htb]
    \centering
    \begin{subfigure}[t]{0.24\textwidth}
        \includegraphics[width=\textwidth]{cdf_figs/cdf_compact_synth_nonegal_tol001.pdf}
        \caption{Synth (non-egal.)}
    \end{subfigure}\hfill
    \begin{subfigure}[t]{0.24\textwidth}
        \includegraphics[width=\textwidth]{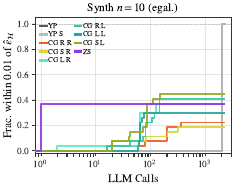}
        \caption{Synth (egal.)}
    \end{subfigure}\hfill
    \begin{subfigure}[t]{0.24\textwidth}
        \includegraphics[width=\textwidth]{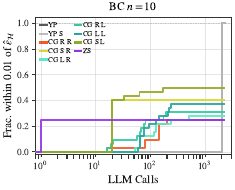}
        \caption{BC (non-egal.)}
    \end{subfigure}\hfill
    \begin{subfigure}[t]{0.24\textwidth}
        \includegraphics[width=\textwidth]{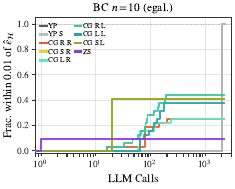}
        \caption{BC (egal.)}
    \end{subfigure}
    \caption{Calls-to-target CDF at tolerance 0.01: fraction of $n{=}10$ instances within 0.01 of $\hat{\varepsilon}_{\mathcal{H}}^*$ as a function of LLM calls.}
    \label{fig:cdf-tol001}
\end{figure*}

\end{document}